\newtheorem{theorem}{Theorem}
\newtheorem{definition}{Definition}
\newtheorem{lemma}{Lemma}
\newglossaryentry{memory}{
    name={memory},
    description={A component of a gravitational polarization that induces a permanent, non-oscillatory offset between two test-masses in the idealized response of a gravitational wave detector.}
}
\newglossaryentry{tensor memory}
{
    name=tensor memory,
    description={(tensor/vector/scalar) Memory associated to a tensor/vector/scalar gravitational polarization.},
    parent=memory
}
\newglossaryentry{null memory}
{
    name=null memory,
    description={Memory sourced by the energy-momentum of radiation.},
    parent=memory
}
\newglossaryentry{ordinary memory}
{
    name=ordinary memory,
    description={Memory sourced by massive bodies that are unbound in their initial or final states (or both)},
    parent=memory
}
\newglossaryentry{waves}
{
    name=waves,
    description={Field-perturbations with characteristic high frequency $f_H$.}
}
\newglossaryentry{gravitational waves}
{
    name=gravitational waves,
    description={Waves of the physical metric perturbation.},
    parent=waves
}
\newglossaryentry{gravitational field}
{
    name=gravitational field,
    description={A tensor field (other than the physical metric) that is non-minimally coupled to the physical metric. In a metric theory of gravity, a gravitational field does not couple directly to matter fields, as only the physical metric does.}
}
\newglossaryentry{radiation}
{
    name=radiation,
    description={Field-perturbation that escape to null infinity.}
}
\newglossaryentry{gravitational radiation}
{
    name=gravitational radiation,
    description={Radiation of the physical metric perturbation.},
    parent=radiation
}
\newglossaryentry{limit to null infinity}
{
    name=limit to null infinity,
    description={Leading-order behavior in the limit $r\rightarrow \infty$ at fixed $u=t-r$, in source centered coordinates $\{t,x,y,z\}$, with $r=\sqrt{x^2+y^2+z^2}$.}
}
\newglossaryentry{radiative modes}
{
    name=radiative modes,
    description={c.f.radiative degrees of freedom.}
}
\newglossaryentry{radiative degrees of freedom}
{
    name=radiative degrees of freedom,
    description={Gauge-invariant and propagating solutions to the wave equation in the limit to null infinity.}
}
\newglossaryentry{polarization}
{
    name=polarization,
    description={Property of a radiative mode that specifies the geometrical orientation of the oscillation, distinguishing it from other radiative modes.}
}
\newglossaryentry{tensor polarizations}
{
    name=tensor polarizations,
    description={(tensor/vector/scalar) Modes that transform like a tensor, a vector or a scalar under rotations about the direction of propagation.},
    parent=polarization
}
\newglossaryentry{gravitational polarizations}
{
    name=gravitational polarizations,
    description={Polarization modes of the physical metric perturbation far away from the source.},
    parent=polarization
}
\newglossaryentry{physical metric}
{
    name=physical metric,
    description={The metric that minimally couples to matter. In a metric theory of gravity, the physical metric is the only gravitational field that couples directly to matter.}
}
\renewcommand*{\glossarymark}[1]{}
\begin{document}

\title{Gravitational wave memory beyond general relativity}

\author{Lavinia Heisenberg}
\affiliation{Institut f\"{u}r Theoretische Physik, Philosophenweg 16, 69120 Heidelberg, Germany}
\affiliation{Institute for Theoretical Physics, ETH Z\"{u}rich, Wolfgang-Pauli-Strasse 27, 8093, Z\"{u}rich, Switzerland}
\author{Nicol\'{a}s Yunes}
\affiliation{Illinois Center for Advanced Studies of the Universe, Department of Physics University of Illinois at Urbana-Champaign, Champaign, IL 61801, USA}
\author{Jann Zosso}
\email{zosso.jann@bluewin.ch}
\affiliation{Institute for Theoretical Physics, ETH Z\"{u}rich, Wolfgang-Pauli-Strasse 27, 8093, Z\"{u}rich, Switzerland}
\affiliation{Illinois Center for Advanced Studies of the Universe, Department of Physics University of Illinois at Urbana-Champaign, Champaign, IL 61801, USA}
\date{\today}

\begin{abstract}
Gravitational wave memory is a nonoscillatory correction to the gravitational wave strain predicted by general relativity, which has yet to be detected. Within general relativity its dominant component known as the null memory, can be understood as arising from the backreaction of the energy carried by gravitational waves, and therefore it corresponds to a direct manifestation of the nonlinearity of the theory. In this paper, we investigate the null-memory prediction in a broad class of modified gravity theories, with the aim of exploring potential lessons to be learned from future measurements of the memory effect. Based on Isaacson's approach to the leading-order field equations, we in particular compute the null memory for the most general scalar-vector-tensor theory with second-order equations of motion and vanishing field potentials. We find that the functional form of the null memory is only modified through the potential presence of additional radiative null energy sources in the theory. We subsequently generalize this result by proving a theorem that states that the simple structure of the tensor null-memory equation remains unaltered in any metric theory whose massless gravitational fields satisfy decoupled wave equations to first order in perturbation theory, which encompasses a large class of viable extensions to general relativity. 
\end{abstract}

\maketitle


\section{\label{sec:Intro}Introduction}

The first direct measurement of gravitational waves \cite{LIGOScientific:2016aoc} opened a new window to probe the fundamental physics of gravitation. Together with subsequent observations~\cite{LIGOScientific:2018mvr,LIGOScientific:2021usb}, this has given hope to the idea that one day soon we may find an eventual glimpse beyond the currently unquestioned theoretical bedrock, general relativity (GR), which has so far passed all tests with flying colors \cite{Will:2014kxa,Yunes:2013dva,Yunes:2016jcc,LIGOScientific:2016lio,LIGOScientific:2018dkp,Nair:2019iur,LIGOScientific:2019fpa,LIGOScientific:2020tif,Perkins:2021mhb,LIGOScientific:2021sio}. A particularly interesting prediction of GR is the gravitational wave \gls{memory} effect \cite{Zeldovich:1974gvh,Christodoulou:1991cr,Blanchet:1992br,Thorne:1992sdb,PhysRevD.44.R2945} (see also \cite{Favata:2008yd,Favata:2010zu,Bieri:2013ada,Garfinkle:2022dnm}), which manifests itself as a permanent displacement of test masses within an idealized gravitational wave detector. Gravitational wave memory is an interesting effect because its dominant contribution is sourced by the outgoing gravitational radiation itself, and is therefore a direct manifestation of the nonlinearity of GR. Moreover, for compact binary coalescences, the measurable component of the memory effect is most sensitive to the merger dynamics of the system \cite{Favata:2009ii}. The recent discovery of gravitational waves therefore begs the question of whether we can use future observations of the memory effect to test the ``ability of gravity to gravitate'' and, moreover, probe the high-curvature regime of the merger of compact objects. On top of that, the manifestation of memory is directly connected with supertranslations of the Bondi-Metzner-Sachs (BMS) group of asymptotically flat spacetimes \cite{Bondi:1962px,Sachs:1962wk,FrauendienerJ,Ashtekar:2014zsa} and can also be related to Weinberg's soft-graviton theorem \cite{Weinberg_PhysRev.140.B516,Strominger:2014pwa}. Gravitational wave memory therefore also represents a window for experimental tests of the fundamental symmetries of spacetime.

Observational probes of general relativity through gravitational wave memory require both the detection of the memory effect and the modeling of beyond-GR memory effects that can be searched for in the data. On the detection front, while the memory effect has not yet been observed, the prospects of its detection are positive, both with future ground-based experiments \cite{Lasky:2016knh,Yang:2018ceq,Hubner:2019sly,Boersma:2020gxx,Ebersold:2020zah,Islam:2021old}, as well as the eagerly awaited Laser Interferometer Space Antenna (LISA) mission \cite{Favata:2009ii,Islo:2019qht,Gasparotto:2023fcg}. On the modeling front, only recently have there been several concrete efforts to compute the memory effect beyond GR. Within a post-Newtonian (PN) expansion (i.e.~an expansion of the field equations in weak fields and slow motions~\cite{Blanchet:2013haa}) of Brans-Dicke (BD) theory~\cite{Brans:1961sx,Dicke:1961gz,poisson2014gravity}, a new memory contribution originating from a dipole-dipole coupling was found~\cite{lang_compact_2014,lang_compact_2015,tahura_gravitational-wave_2021}. Another way to compute memory hinges on the structure of asymptotically flat spacetimes, in particular on the relation of memory to BMS balance laws at future null infinity~\cite{FrauendienerJ,Ashtekar:2014zsa}. BMS balance laws were recently derived in BD theory \cite{hou_gravitational_2021,tahura_brans-dicke_2021,hou_conserved_2021,hou_gravitational_2021_2}, by showing that the theory retains the same asymptotic group structure as in GR, despite its altered peeling properties. The resulting memory component was shown to match the earlier PN calculation \cite{tahura_gravitational-wave_2021}. Moreover, different aspects of the scalar memory within BD theory were investigated in~\cite{du_gravitational_2016,koyama_testing_2020}. More recently, the memory effect was also computed for dynamical Chern-Simons (dCS) gravity~\cite{Jackiw:2003pm,Alexander:2009tp} by establishing the corresponding BMS balance laws \cite{hou_gravitational_2022,Hou:2021bxz}.

In this paper, we begin a broad investigation of the gravitational wave memory in theories beyond GR, which will be crucial for the creation of future, memory-based probes of gravity. For now, we will restrict ourselves to the study of the so-called \textit{\gls{null memory}}~\cite{Bieri:2013ada}, which can be thought of as being sourced by the energy flux of the outgoing massless radiation fields themselves~\cite{Thorne:1992sdb}. This is motivated by the expectation that for binary coalescences the \textit{\gls{ordinary memory}}, associated with unbound massive objects of the system, including remnant kicks, will generally be subdominant, a statement that is confirmed both in GR \cite{Christodoulou:1991cr,Favata:2008ti} and in BD theory \cite{tahura_gravitational-wave_2021}. Moreover, to put this study on a firm footing, we choose to only consider so-called \textit{dynamical metric theories of gravity} (see Definition~\ref{DefMetricTheory}), i.e.~local and covariant theories with a symmetric and locally flat, \gls{physical metric} tensor, that is the only field that couples minimally to matter, while any additional dynamical \gls{gravitational field} exclusively couples to the metric. Moreover, we choose to only consider metric theories with massless gravitational fields and therefore for simplicity only consider operators with at least two powers of derivatives, which in particular imposes vanishing field potentials.

These requirements select a specific class of theories that are local-Lorentz invariant and preserve the geodesic deviation equation of GR, which is at the heart of current gravitational wave observations. However, it should be mentioned, that it is sometimes computationally advantageous to rewrite a metric theory through field-dependent transformations of the metric, which might result in an apparent violation of the minimal coupling requirement with respect to the redefined metric. A restriction to metric theories of gravity therefore more precisely corresponds to only attributing physical significance to gauge-invariant observables associated with the physical metric. Such an idea is, in fact, strongly suggested by the Einstein equivalence principle (see e.g. \cite{poisson2014gravity,Will:2018bme}).

Dynamical metric theories of gravity can be classified by the number and type of gravitational fields they contain, as well as by their governing evolution equations. The desire for the inclusion of additional fields derives from a theorem by Lovelock~\cite{Lovelock1969ArRMA}, which, under certain theoretically and observationally desirable assumptions, establishes that GR is the only theory that propagates two massless, spin-2 degrees of freedom. More precisely, Lovelock's theorem asserts that in four dimensions, the Einstein equations are the unique, second-order field equations of a covariant and local theory with a single metric tensor. Similarly, at the level of amplitudes, GR can be proven to be the unique gauge-invariant theory of interacting, massless, spin-2 particles with second-order equations of motion \cite{Krasnov:2014eza,Rodina:2016jyz}, confirming earlier work based on the requirement of Lorentz invariant S-matrices \cite{Weinberg:1965rz}. 

These theorems imply that a broad class of still-viable departures from Einstein gravity necessitate the introduction of new gravitational fields \cite{Heisenberg:2018vsk}. Indeed, additional gravitational degrees of freedom are, for instance, a generic prediction of string theory compactifications (see e.g. \cite{Zwiebach:1985uq,Gross:1986mw,Gross:1986iv,Moura:2006pz,Cano:2021rey}) and Kaluza-Klein reductions \cite{Appelquist:1987nr,Dereli:1990he}, in particular of higher-dimensional Lovelock gravity \cite{Charmousis:2014mia}. 
Such additional gravitational fields naturally introduce additional radiative modes to the theory, which can be excited through various mechanisms in gravitational wave sources, such as in compact binary coalescences, as investigated in detail for many theories (see e.g.~\cite{Yagi:2011xp,Barausse:2012da,Yagi:2015oca,Benkel:2016kcq,Benkel:2016rlz,Ramazanoglu:2017xbl,Silva:2017uqg,Okounkova:2017yby,Ramazanoglu:2018tig,Ramazanoglu:2019gbz,Berti:2018cxi,Silva:2020omi,Doneva:2021tvn,Annulli:2021lmn,Elley:2022ept,Doneva:2022ewd})

On general grounds, it can be expected that additional propagating degrees of freedom in the theory of gravity will greatly affect the null memory, as, in principle, null memory is sensitive to any radiative losses. How exactly does the presence of additional null sources modify null memory, and can there be other types of modifications? These are the questions we focus on and answer in this paper. We begin in Sec.~\ref{sec:Isaacson} by developing a novel way to understand the null memory from the so-called \textit{Isaacson approach}, in which the metric tensor is decomposed into a high- and a low-frequency part. Within this framework, the null memory naturally arises as a low-frequency background perturbation induced by the coarse-grained energy density carried by any high-frequency perturbations present in the theory.

We then proceed to apply this approach explicitly to the most general massless scalar-vector-tensor (SVT) gravity theory with second-order equations of motion \cite{Heisenberg:2018acv}. Section~\ref{sec:SVTFirst} first defines the theory and presents its propagating degrees of freedom and gravitational polarizations, while Sec.~\ref{sec:SVTMem} calculates the associated null memory. The beyond-GR null-memory result of this calculation [presented in Eqs.~\eqref{eq:StressEnergyFourth} and \eqref{NonLinDispMemoryModesSVT}] encompasses many popular modified-gravity theories and matches with the memory extracted from the asymptotic symmetry approach in BD theory \cite{hou_gravitational_2021,tahura_brans-dicke_2021}, as we explicitly demonstrate in Appendix~\ref{MatchToAsymptoticsBD}. Moreover, we further exemplify the generalized Horndeski result by offering the explicit correspondences to scalar-Gauss-Bonnet (sGB) gravity~\cite{Zwiebach:1985uq,Gross:1986iv,Moura:2006pz,Pani:2009wy} and double-dual Riemann gravity~\cite{deRham:2011by,Charmousis:2011ea,Charmousis:2011bf}. The beyond-GR null-memory results of Sec.~\ref{sec:SVTMem} are also presented both in terms of the transverse-traceless part of the components of the metric perturbation, as well as in terms of the elements of the spin-weighted spherical harmonic expansion of the metric perturbation. A proof of the key identity (see Eq.~\eqref{eq:RelBlanchet}), which allows us to establish the equivalence between these representations, is provided in Appendix~\ref{DerivationEq}.

We continue in Sec.~\ref{sec:GeneralFormula} by generalizing these results further through the development of a new \textit{theorem}, which proves that the structure of the null-memory equation remains in fact unchanged for theories which to \textit{first order} on a flat background satisfy decoupled massless wave equations. This assumption is met by arguably almost any massless and local Lorentz invariant metric theory of gravity, whose effective field theory (EFT) structure remains stable under the inclusion of deviations from GR. In other words, the theorem proves that the beyond-GR null memory simply receives additional contributions proportional to the energy flux of any new, propagating gravitational degree of freedom, similar to contributions from ordinary massless matter fields. 

Interestingly, the results described above suggest that one may be able to use future gravitational wave memory observations not only as a probe of the highly nonlinear regime of gravity, but also as a largely model-agnostic test for additional gravitational degrees of freedom in nature. We discuss these future research directions in the Conclusion presented in Sec.~\ref{sec:Discussion}. Henceforth, we use a $(-,+,+,+)$ metric signature and set $c=1$ throughout. At the end of the paper, we also provide a glossary that clearly defines the terminology we use in this work. 



\section{\label{sec:Isaacson}Gravitational memory in the Isaacson picture}

The gravitational null (or nonlinear) memory in GR can be understood as originating from the Einstein equations within the Landau-Lifshitz approach \cite{PhysRevD.44.R2945,Favata:2008yd}. We will argue in this section, however, that the equations at the root of the memory arise rather naturally from the alternative Isaacson viewpoint, which identifies the null memory as a \textit{low-frequency perturbation} of the background spacetime, that is parametrically distinct from the high-frequency waves that generate the memory through backreaction of the stress energy carried by the waves. Moreover, this shift of perspective provides stringent arguments for the necessity of a spacetime averaging over the radiative memory-source tensor.

For convenience, we first offer a quick review of the arguments originally brought forth by Isaacson \cite{Isaacson_PhysRev.166.1263,Isaacson_PhysRev.166.1272} (see also \cite{misner_gravitation_1973,Flanagan:2005yc,maggiore2008gravitational}) within GR in Sec.~\ref{sec:IsaacsonGR}. This will also allow us to set the notation. Toward the end of Sec.~\ref{sec:IsaacsonGR} we then explore how the memory equation naturally arises from the coarse-grained low-frequency Isaacson equation. These arguments are then generalized in Sec.~\ref{sec:IsaacsonBeyondGR} to more general dynamical metric theories of gravity with an arbitrary number of gravitational fields on top of the metric. 

\subsection{Isaacson picture and null memory in GR}\label{sec:IsaacsonGR}

Consider GR, hence a theory on a four-dimensional manifold $\mathscr M$ with Lorentzian metric $g_{\mu\nu}$ and an associated Levi-Civita connection, governed by the Einstein equations. Omitting any matter contributions for simplicity\footnote{In this work, we will only be interested in regions outside any potential matter sources of the system.} the vacuum Einstein equations can be written as $R_{\mu\nu}=0$. The first part of this subsection will closely follow the treatment in \cite{misner_gravitation_1973,maggiore2008gravitational}.

Within the Isaacson picture \cite{Isaacson_PhysRev.166.1263, Isaacson_PhysRev.166.1272}, the notion of gravitational waves propagating on an arbitrary background spacetime is given physical meaning by a clear separation of scales. More concretely, one requires the separation
\begin{equation}\label{eq:SmallParam1GR}
    f_{L}\ll f_{H}\,
\end{equation} 
between a slowly varying background of frequencies lower than $f_L$ and high-frequency perturbations, i.e. gravitational waves, of characteristic frequency $f_H$. We therefore decompose the metric as
\begin{equation}\label{eq:IsaacsonSplitGR}
   g_{\mu\nu}=\bar{g}^L_{\mu\nu}+h^H_{\mu\nu}\,,
\end{equation}
where a super- or subscript $L$ and $H$ indicate the dependence of a quantity on the low or high frequencies respectively. Because of the clear separation of physical scales, this decomposition is actually unique and does not depend on the chosen coordinates. Moreover, the metric perturbations are assumed to be of small amplitude 
\begin{equation}\label{eq:SmallParam2GR}
    |h^H_{\mu\nu}| = {\cal{O}}(\alpha)\,,
\end{equation}
where $\alpha\ll 1$ compared to the background $\bar{g}^L_{\mu\nu}= \mathcal{O}(1)$.\footnote{Locally, we can always choose a coordinate system in which the diagonal elements of $\bar{g}^L_{\mu\nu}$ are of $\mathcal{O}(1)$.} 
In conclusion, there are two small parameters at hand, namely the amplitude of the perturbations $\alpha$, as well as the ratio of frequencies $f_L/f_H$. Moreover, since the scale on which the low- and high-frequency components of the metric [Eq.~\eqref{eq:IsaacsonSplitGR}] vary is determined by $f_L$ and $f_H$ respectively, one schematically has that $\partial\bar{g}^L_{\mu\nu}\leq\mathcal{O}(f_L)$ and $\partial h^H_{\mu\nu}= \mathcal{O}(\alpha f_H)$ \cite{misner_gravitation_1973}.

At this point, a comment on the choice of separation of scales Eq.~\eqref{eq:SmallParam1GR} in frequency space is in order. In the literature (for instance in the original work by Isaacson \cite{Isaacson_PhysRev.166.1263,Isaacson_PhysRev.166.1272}) the distinction between a ``slowly varying'' background and a wavelike perturbation is usually made by imposing a clear separation in terms of scales of \textit{spacial} variations $L_L$ and $L_H$, instead of temporal variations characterized through scales of frequency $f_L$ and $f_H$. Hence, instead of Eq.~\eqref{eq:SmallParam1GR} one demands  $L_H\ll L_B$, also known as a \textit{short-wave expansion}, where the scale $L_H$ is associated with the characteristic wavelength of the wavelike perturbation. Demanding $L_H\ll L_B$ is in principle distinct from Eq.~\eqref{eq:SmallParam1GR}. This is because, while $L_H$ and $f_{H}$ are naturally related through the dispersion relation of the high-frequency wave, this is \textit{a priori} not the case for the variations $L_L$ and $f_L$ of the background, where the notion of slowly varying in time or in space are in principle unrelated. However, the two choices are interchangeable in the sense that the conclusions drawn below would remain the same if the condition $L_H\ll L_B$ was assumed instead of Eq.~\eqref{eq:SmallParam1GR}. The only difference would be that the distinction between the slowly varying background and wave perturbation would be drawn at a different level (see \cite{maggiore2008gravitational} for more details).\footnote{Yet, from the point of view of current gravitational wave detectors on Earth, it is actually the condition Eq.~\eqref{eq:SmallParam1GR} that allows for a clear distinction between gravitational waves and the background \cite{maggiore2008gravitational}.}

In light of Eq.~\eqref{eq:IsaacsonSplitGR}, we can now expand the vacuum Einstein equations. Writing the Ricci tensor as a sum of its background component and an infinite series of operators depending on increasing powers in $h^H_{\mu\nu}$, one finds,
\begin{align}\label{eq:EOMGR1}
\begin{split}
    0=&\, \phantom{}_{\s{(0)}}{R}_{\mu\nu}[\bar{g}^L] +\phantom{}_{\s{(1)}}R_{\mu\nu}[\bar{g}^L,h^H]+\phantom{}_{\s{(2)}}R_{\mu\nu}[\bar{g}^L,h^H]
    \\
    &+\sum_{i=3}^\infty\phantom{}_{\s{(i)}}R_{\mu\nu}[\bar{g}^L,h^H]\,.
\end{split}
\end{align}
Let us explain the new notation we have introduced. The quantity $\phantom{}_{\s{(N)}}O[A,B]$ denotes the expanded operator $O$ at $N$th order in the perturbation field $B$, as computed with the background $A$. For example,
\begin{align}
    _{\s{(1)}}R_{\mu\nu\rho\sigma}[\bar{g},H]=&-\frac{1}{2}\Big(\bar{\nabla}_\sigma\bar\nabla_\mu H_{\nu\rho}+\bar\nabla_\rho\bar\nabla_\nu H_{\mu\sigma}\nonumber\\
    &-\bar\nabla_\sigma\bar\nabla_\nu H_{\mu\rho}-\bar\nabla_\rho\bar\nabla_\mu H_{\nu\sigma}\label{eq:RiemmanFirstOrder}\\
    &+\phantom{}_{\s{(0)}}R_{\mu\gamma\rho\sigma}[\bar{g}]H\ud{\gamma}{\nu}-\phantom{}_{\s{(0)}}R_{\nu\gamma\rho\sigma}[\bar{g}]H\ud{\gamma}{\mu}\Big)\,,\nonumber
\end{align}
for some tensor perturbation $H_{\mu\nu}$ and background metric $\bar{g}_{\mu\nu}$, such that also
\begin{equation}\label{eq:RicciFirstOrder}
\begin{split}
    \phantom{}_{\s{(1)}}R_{\nu\sigma}[\bar{g},H]=&-\frac{1}{2}\bar{g}^{\mu\rho}\Big(\bar\nabla_\sigma\bar\nabla_\mu H_{\nu\rho}+\bar\nabla_\rho\bar\nabla_\nu H_{\mu\sigma}\\
    &-\bar\nabla_\sigma\bar\nabla_\nu H_{\mu\rho}-\bar\nabla_\rho\bar\nabla_\mu H_{\nu\sigma}\Big)\,.
\end{split}
\end{equation}

Let us now estimate the size of the operators that appear in Eq.~\eqref{eq:RicciFirstOrder}. The second operator in the sum in Eq.~\eqref{eq:EOMGR1}, namely, $\phantom{}_{\s{(1)}}R_{\mu\nu}[\bar{g}^L,h^H]$ is clearly maximally of $\mathcal{O}\left(\alpha f_H^2\right)$. Indeed, due to the separation of scales given by Eq.~\eqref{eq:SmallParam1GR}, any contribution for which a derivative acts on a background metric instead of the perturbation field will be parametrically suppressed. Similarly, the second perturbation of the Ricci tensor $\phantom{}_{\s{(2)}}R_{\mu\nu}$  (see e.g. \cite{misner_gravitation_1973,maggiore2008gravitational} for the exact form) also only involves two derivative operators, as any term in Eq.~\eqref{eq:EOMGR1}, and therefore $\phantom{}_{\s{(2)}}R_{\mu\nu}[\bar{g}^L,h^H]=\mathcal{O}\left(\alpha^2 f_H^2\right)$. Except for the background term $\phantom{}_{\s{(0)}}R_{\mu\nu}$, whose order we will address later in the discussion, every term $\phantom{}_{\s{(M)}}R_{\mu\nu}$ in Eq.~\eqref{eq:EOMGR1}, maximally of $\mathcal{O}\left(\alpha^M f_H^2 \right)$, is therefore suppressed compared to the leading order terms in $\phantom{}_{\s{(N)}}R_{\mu\nu}$ of $\mathcal{O}\left(\alpha^N f_H^2 \right)$, whenever $M>N$, and we can rewrite Eq.~\eqref{eq:EOMGR1} as
\begin{align}
\begin{split}\label{eq:EOMGR2}
    0=&\,\underset{?}{\underbrace{\phantom{}_{\s{(0)}}R_{\mu\nu}[\bar{g}^L]}}+\underset{\mathcal{O}\left(\alpha f_H^2\right)}{\underbrace{\phantom{}_{\s{(1)}}R_{\mu\nu}[\bar{g}^L,h^H]}}+\underset{\mathcal{O}\left(\alpha^2f_H^2\right)}{\underbrace{\phantom{}_{\s{(2)}}R_{\mu\nu}[\bar{g}^L,h^H]}}\\
    &+\mathcal{O}\left(\alpha^3f_H^2\right)\,.
\end{split}
\end{align}

To continue, we want to solve Eq.~\eqref{eq:EOMGR2} to leading-order in our bivariate expansion, by also imposing the decomposition of this equation into low- and high-frequency parts, thus only equating terms with comparable frequency behavior. This can be viewed as performing a multiple-scale analysis of this problem. Such a decomposition leads to the following key observations. First, clearly the background term $\phantom{}_{\s{(0)}}R_{\mu\nu}$, which is independent of the high-frequency field, only contains low-frequency modes. Similarly, any terms coming from $\phantom{}_{\s{(1)}}R_{\mu\nu}$, which are linear in high-frequency perturbation fields, will only contribute to the high-frequency equations. On the other hand and quite crucially, the quantities $\phantom{}_{\s{(2)}}R_{\mu\nu}$ at second order in perturbation fields will contain contributions both at the level of $f_H$, as well as at the background scales $f_L$, since two different high-wave-vector modes can combine to form a low-frequency contribution. 

A natural and practical way to single out the low-frequency part of an expression is to perform an average $\langle...\rangle$ over a spacetime region\footnote{In this work, we will primarily be interested in the asymptotic region far from any matter source. In that case, a mere spatial average over several wavelengths or a temporal average over several periods would actually suffice (see \cite{maggiore2008gravitational} for more details), but for the sake of generality, we choose here to work with a spacetime average.}  with averaging kernel of characteristic scale $f_L\ll f_{\rm av}\ll f_H$. For instance, the low-frequency part of $\phantom{}_{\s{(2)}}R_{\mu\nu}$, which we will denote as $\left[\phantom{}_{\s{(2)}}R_{\mu\nu}\right]^L$, satisfies $\left[\phantom{}_{\s{(2)}}R_{\mu\nu}\right]^L=\big\langle\phantom{}_{\s{(2)}}R_{\mu\nu}\big\rangle$. The high-frequency part can then be determined through $\left[\phantom{}_{\s{(2)}}R_{\mu\nu}\right]^H= \phantom{}_{\s{(2)}}R_{\mu\nu}-\big\langle \phantom{}_{\s{(2)}}R_{\mu\nu}\big\rangle$. One can also think about the averaging procedure as a type of coarse graining, which integrates out over the small scales we are not interested in \cite{misner_gravitation_1973,maggiore2008gravitational}. In the following, such an averaging will also turn out to be crucial to promote the energy-momentum flux to a gauge-invariant, and therefore, physically meaningful, quantity. Several averaging schemes could be applied, but their details are not essential, as long as the following properties hold \cite{Isaacson_PhysRev.166.1272,misner_gravitation_1973,Flanagan:2005yc,maggiore2008gravitational,Zalaletdinov:2004wd}:\footnote{Strictly speaking, requiring these properties introduces an error that would be relevant at higher orders in perturbation theory, e.g.~boundary terms arising from integration over a box of finite size do not vanish completely. We will, however, not be concerned by any of these higher order corrections in this paper.} (I) the average of an odd number of short-wavelength quantities vanishes; (II) total derivatives of tensors average out to zero; (III) as a corollary of the above, integration by parts of covariant derivatives are allowed. See also~\cite{Stein:2010pn} for a discussion of these three properties. 

Considering all of the above while solving Eq.~\eqref{eq:EOMGR2} to leading order, we arrive at the following set of equations
\begin{align}
    \phantom{}_{\s{(0)}}R_{\mu\nu}[\bar{g}^L]&=-\big\langle \phantom{}_{\s{(2)}}R_{\mu\nu}[\bar{g}^L,h^H]\big\rangle\,,   \label{eq:EOMISGR}\\
    \phantom{}_{\s{(1)}}R_{\mu\nu}[\bar{g}^L,h^H]&=0\,,\label{eq:EOMIISGR}
\end{align}

The first Equation~\eqref{eq:EOMISGR} is the leading-order, low-frequency equation, which we can interpret as telling us that the background curvature is modified by the backreaction of the coarse-grained operator $\big\langle \phantom{}_{\s{(2)}}R_{\mu\nu}\big\rangle$ of high-frequency gravitational waves. This is because, in the absence of other matter sources, the order of $\phantom{}_{\s{(0)}}R_{\mu\nu}$ is simply determined by the leading-order gravitational wave operator that contributes to the low-frequency equations. Quite naturally, the right-hand side of Eq.~\eqref{eq:EOMISGR} is therefore interpreted as the energy-momentum (pseudo)tensor of gravitational waves \cite{Isaacson_PhysRev.166.1263,Isaacson_PhysRev.166.1272,misner_gravitation_1973,maggiore2008gravitational}
\begin{equation}
    \big\langle \phantom{}_{\s{(2)}}R_{\mu\nu}[\bar{g}^L,h^H]\big\rangle\propto \phantom{}_{\s{(2)}}t^{\st{GR}}_{\mu\nu}\,.
\end{equation}
On the other hand, Eq.~\eqref{eq:EOMIISGR} is the leading-order, high-frequency equation, and it simply corresponds to a propagation equation for the leading-order gravitational waves $h^H$.

This concludes the review of the basic textbook introduction to the Isaacson picture, which we can now use to construct a well-defined equation that will naturally give rise to null memory in GR. Let us then make the additional assumption that the low-frequency background metric $\bar{g}^L_{\mu\nu}$ can be further split into a time-independent piece $\bar{\eta}_{\mu\nu}$ of amplitude $\mathcal{O}(1)$ and a low-frequency correction $\delta h^L_{\mu\nu}$. The time-independent metric $\bar{\eta}_{\mu\nu}$ is assumed to solve the vacuum Einstein equations, but it need not be the Minkowski metric. In fact, we chose here to work with a time-independent background solution $\bar{\eta}_{\mu\nu}$ only for the sake of simplicity, but a generalization to a background that varies slowly at ${\cal{O}}(\tilde{f}_L\lesssim f_L)$ would not change the results obtained below. The background metric is then split via
\begin{equation}\label{eq:IsaacsonSplitBackgroundGR}
   \bar{g}^L_{\mu\nu}=\bar{\eta}_{\mu\nu}+\delta h^L_{\mu\nu}\,,
\end{equation}
where we assume that
\begin{equation}\label{eq:SmallParam3GR}
    |\delta h^L_{\mu\nu}|=\mathcal{O}(\beta)\,,
\end{equation}
and $\beta \ll 1$.
Up to order $\mathcal{O}(\alpha^2\beta f_H^2)$ and $\mathcal{O}(\alpha\beta f_H^2)$, Eqs.~\eqref{eq:EOMISGR} and \eqref{eq:EOMIISGR}, respectively become
\begin{align}
    \phantom{}_{\s{(0)}}R_{\mu\nu}[\bar{\eta}]&=0\,, \label{eq:EOMBackgroundS2GR}\\
    \phantom{}_{\s{(1)}}R_{\mu\nu}[\bar{\eta},\delta h^L]&=-\big\langle \phantom{}_{\s{(2)}}R_{\mu\nu}[\bar{\eta},h^H]\big\rangle\,,   \label{eq:EOMIS2GR}\\
    \phantom{}_{\s{(1)}}R_{\mu\nu}[\bar{\eta},h^H]&=0\,.\label{eq:EOMIIS2GR}
\end{align}

One can view $\bar{\eta}_{\mu\nu}$ and $\delta h^L_{\mu\nu}$ as the homogeneous and particular solutions of Eq.~\eqref{eq:IsaacsonSplitBackgroundGR} respectively, where $\delta h^L_{\mu\nu}$ is the low-frequency component determined by the backreaction of the energy-momentum carried by gravitational waves. In general, the background geometry $\bar{\eta}_{\mu\nu}$ can be viewed as being sourced by some matter field content outside the region of interest.

Even though we have introduced here a third small parameter $\beta$, in addition to $f_L/f_H$ and $\alpha$, these parameters cannot all be independent of each other. Since $\delta h^L_{\mu\nu}$ is determined through the coarse-grained backreaction of the high-frequency gravitational wave perturbations, the scale of $\beta$ is determined through Eq.~\eqref{eq:EOMIS2GR} to be
\begin{equation}\label{eq:BetaParam}
    \beta\sim \alpha^2\frac{f_H^2}{f_L^2}\,,
\end{equation}
because $\phantom{}_{\s{(1)}}R_{\mu\nu}[\bar{\eta},\delta h^L] = {\cal{O}}(\beta f_L^2)$ and $\phantom{}_{\s{(2)}}R_{\mu\nu}[\bar{\eta},h^H] = {\cal{O}}(\alpha^2 f_H^2)$, hence, the requirement that $\beta \ll 1$ imposes a hierarchy between the two expansion parameters $f_L/f_H$ and $\alpha$, namely\footnote{Note that the original work by Isaacson \cite{Isaacson_PhysRev.166.1263} explicitly only considers the situation in which $\alpha\sim f_L/f_H$.}
\begin{equation}
\label{eq:hierarchy-inequ}
    \alpha\ll \frac{f_L}{f_H}\,.
\end{equation}

The scenario captured by the assumption in Eq.~\eqref{eq:IsaacsonSplitBackgroundGR} also encompasses the asymptotic region of an asymptotically flat spacetime around an isolated source. More precisely, in the \gls{limit to null infinity}, the metric satisfies $g_{\mu\nu}=\eta_{\mu\nu}+\mathcal{O}(1/r)$ in a set of coordinates $\{t,x,y,z)\}$\footnote{More formally, we choose to work in the asymptotic rest frame of the source \cite{misner_gravitation_1973,Thorne:1980ru}. Since we will neglect any linear or ordinary memory contributions, and henceforth also any remnant kicks, we are not concerned with more rigorous definitions of BMS rest frames in this paper.} and where $r\equiv \sqrt{x^2+y^2+z^2}$ is the radial source-centered coordinate. In the following, we will frequently employ the loose terminology ``limit to (future) null infinity'' to describe the asymptotic behavior of a quantity as $r\rightarrow\infty$ at a fixed asymptotic retarded time $u\equiv t-r$ and therefore take the limit up to the first nontrivial term in the $r$ expansion.\footnote{This abuse of terminology mainly arises because of the precise formulation of asymptotic flatness defined though the existence of a conformal completion ($\hat{\mathscr M}=\mathscr M \cup \scri,\hat{g}_{\mu\nu}$) where $\hat{g}_{\mu\nu}=\Omega^2g_{\mu\nu}$, with boundary $\scri$ of topology $S^2\times R$ at null infinity, where $\Omega=0$ and $\hat{\nabla}_\mu\Omega\neq 0$ (see e.g. \cite{Geroch:1977jn,Ashtekar:1981bq,Ashtekar:2014zsa,WaldBook,DAmbrosio:2022clk}). Thus, the conformal completion is a manifold with actual boundary at null infinity, whose non-trivial limit corresponds to leading-order terms in the expansion in $r$ on the physical spacetime.}

With this in mind, we can now make a few observations about the metric in the limit to null infinity. The background metric $\bar{\eta}_{\mu\nu}$ defined in Eq.~\eqref{eq:IsaacsonSplitBackgroundGR} asymptotes to the Minkowksi metric $\eta_{\mu\nu}$. Moreover, as we will explicitly see in Sec.~\ref{sec:SVTMem}, the low-frequency perturbation to the background metric $\delta h^L_{\mu\nu}$ naturally describes the gravitational null memory in GR. This is because in a suitable gauge, Eq.~\eqref{eq:EOMIS2GR} reduces to
\begin{equation}
    \Box \delta h^L_{\mu\nu}\propto \phantom{}_{\s{(2)}}t^{\st{GR}}_{\mu\nu}[\eta,h^H]\,,
\end{equation}
where
\begin{equation}
    \phantom{}_{\s{(2)}}t^{\st{GR}}_{\mu\nu}[\eta,h^H]\propto \big\langle\partial_\mu h^H_{\alpha\beta}\partial_\nu h_H^{\alpha\beta}\big\rangle\,.
\end{equation}
This is the equation whose solution $\delta h^L_{\mu\nu}$ yields a gravitational memory contribution that is sourced by the coarse-grained energy-momentum carried by the high-frequency gravitational wave $h^H_{\mu\nu}$ \cite{PhysRevD.44.R2945,Favata:2008yd}.

Since $\eta_{\mu \nu}$ is time independent and $\bar{g}^L_{\mu\nu}$ varies on the scale $f_L$, the latter corresponds to the characteristic frequency of the gravitational null memory in GR. Interestingly, Eq.~\eqref{eq:BetaParam} implies that the null memory is enhanced by an additional factor of $f_H^2/f_L^2$ compared to terms at $\mathcal{O}\left(\alpha^2\right)$. This parallels the observation in \cite{Favata:2008yd}, that, while the hereditary time integral of oscillatory corrections scales with the orbital timescale, the memory scales with the radiation-reaction timescale instead. However, note that an estimate of $\beta$ above does not represent a faithful estimate of the amplitude of the memory, mostly because one has to take into account that the memory correction also falls off as $1/r$, which boosts its amplitude up to roughly $10\%$ of the (nonmemory) oscillatory signal (see also \cite{Thorne:1992sdb}).

Moreover, we want to point out that the inequality in Eq.~\eqref{eq:hierarchy-inequ} is indeed satisfied in astrophysical systems of interest to ground- or space-based gravitational wave detectors. The amplitude of gravitational waves is typically of ${\cal{O}}(10^{-22})$ and ${\cal{O}}(10^{-19})$ for ground- and space-based detectors, respectively. The high-frequency $f_H$ can be approximated with the frequency at merger, which for a $10^2 M_\odot$ and $10^5 M_\odot$ (total) mass binary is approximately $f_H \approx 10^2$ Hz and $f_H \approx 10^{-1}$Hz for ground- and space-based detectors, respectively. The characteristic frequency of the memory can be estimated from the inverse of the rise time of the memory at merger, which is approximately $10^2 M$, where $M$ is the total mass of the binary (see e.g.~Fig.~1 in~\cite{Favata:2009ii}). The low frequency $f_L$ for a $10^2 M_\odot$ and $10^5 M_\odot$ (total) mass binary is then $f_L \approx 10$ Hz and $f_L \approx 10^{-2}$ Hz for ground- and space-based detectors, respectively.

We want to conclude this subsection by defining some terminology that will be important to distinguish two similar, but in truth very different, notions used in this paper (see also \cite{Isaacson_PhysRev.166.1263}). The first one is the notion
of high-frequency perturbations as defined in this section, which are introduced in contrast to the slowly varying field content of the background. We, henceforth choose to use the terminology \textit{\gls{waves}} for such perturbation fields. Hence, \gls{gravitational waves} denote the high-frequency perturbations of the physical metric. On the other hand, we want to reserve the term \textit{\gls{radiation}} to denote fields that escape to null infinity, to which we can associate a power that is irreversibly carried away from a localized source. In particular, the polarization modes of the physical metric in the limit to null infinity will be called \gls{gravitational radiation}. Interestingly, such a distinction is important in order to properly address the manifestation of gravitational memory. This is because, in light of the discussion above, null memory is a low-frequency perturbation, and hence, according to the preceding definition, it is \textit{not} a ``gravitational wave.'' On the other hand, memory does escape to null infinity as a component of the polarization modes of the physical metric (as a direct consequence of the definition of memory, which will become clear in Sec.~\ref{sec:SVTMem}), and therefore it is part of the gravitational radiation emitted by the system.

\subsection{Isaacson picture and null memory beyond GR}\label{sec:IsaacsonBeyondGR}

We now want to generalize the above arguments to more generic dynamical metric theories of gravity with spacetime $(\mathscr M,g)$, where the metric $g_{\mu\nu}$ couples minimally to matter and nonminimally to a number of additional dynamical gravitational fields, which we will collectively refer to as $\Psi$. Moreover, as already mentioned in the Introduction, for simplicity we will further restrict ourselves to theories, for which each term in the action carries at least two derivative operators. 
While the main steps will remain similar to the above treatment within GR, there are a few important differences that we will highlight below.

A generic dynamical metric theory is governed by a set of vacuum field equations, which we schematically denote as $\mathcal{G}_{\mu\nu}=0$ for the metric equations and $\mathcal{J}=0$ for all other field equations.
We will primarily be interested in the equations for the metric, hence the generalized vacuum Einstein equations $\mathcal{G}_{\mu\nu}=0$. Therefore, henceforth, we mainly only explicitly write down this equation in the derivation that follows, but in principle, the considerations below also hold for any of the additional field equations. 

Let us again postulate a separation of frequency scales 
\begin{equation}\label{eq:SmallParam1}
    f_{L}/f_{H}\ll 1\,,
\end{equation} 
and assume that it is possible to this time not only decompose the metric, but also any other gravitational field into a low-frequency background and high-frequency perturbation component
\begin{equation}\label{eq:IsaacsonSplit}
   g_{\mu\nu}=\bar{g}^L_{\mu\nu}+h^H_{\mu\nu}\,,\quad \Psi=\bar{\Psi}^L+\Psi^H\,.
\end{equation}
For simplicity, we will assume that all high-frequency perturbations can be captured by the same small expansion parameter $\alpha$, such that
\begin{equation}\label{eq:SmallParam2}
    |h^H_{\mu\nu}|\,,\;\;|\Psi^H| =\cal{O}( \alpha)\,,
\end{equation}
where $\alpha \ll 1$, although in practice of course the amplitudes of each field wave might be different. In general, one can assume that the gravitational wave tensor perturbations will dominate the expansion of the field equations. This is so if one assumes that deviations from GR must remain small. Here we will assume that this is the case, and therefore, we ignore the situation in which there are waves of the additional fields $\Psi$ but no gravitational waves present in the spacetime.\footnote{Note that if the waves of a particular additional gravitational field are parametrically smaller in amplitude than the gravitational waves, then it might be that the effect of that wave only comes in at higher orders and would therefore drop out of our leading-order analysis.}

Let us start by again expanding the field equations in the perturbation fields
\begin{equation}\label{eq:EOMI}
\begin{split}
    0=\phantom{}_{\s{(0)}}\mathcal{G}_{\mu\nu}+\phantom{}_{\s{(1)}}\mathcal{G}_{\mu\nu}&+\phantom{}_{\s{(2)}}\mathcal{G}_{\mu\nu}+\sum_{i=3}^\infty\phantom{}_{\s{(i)}}\mathcal{G}_{\mu\nu}\,.
\end{split}
\end{equation}
where, compared to Eq.~\eqref{eq:EOMGR1}, the respective arguments are given by replacing $\bar{g}^L\rightarrow\{\bar{g}^L,\Psi^L\}$ and $h^H\rightarrow\{h^H,\Psi^H\}$.
On very general grounds, we can still expect that the background terms $\phantom{}_{\s{(0)}}\mathcal{G}_{\mu\nu}$ will only contain low-frequency modes, 
while the terms in $\phantom{}_{\s{(1)}}\mathcal{G}_{\mu\nu}$, which are linear in perturbation fields, will only contribute to the high-frequency equations.
In contrast to GR, however, the terms in this expansion might involve operators with more than two derivatives. This implies that for a generic metric theory of gravity, the order counting needs to be modified. In particular, it is crucial to ensure that in the expansion in Eq.~\eqref{eq:EOMI}, the higher-order terms in $\alpha$ remain negligible, such that we can still solve the equations of motion order by order. As we will now argue, under certain assumptions, this requirement can be met, such that the beyond-GR complications do not affect the main character of the results in Eqs.~\eqref{eq:EOMISGR} and \eqref{eq:EOMIISGR}.

In the following, we will separately consider theories with and without any higher-order derivative interactions. 
Let us first consider the latter, and work with the metric $g_{\mu\nu}$ and any additional fields $\Psi$ in dimensionless natural units. Since by assumption any term involves exactly two derivative operators, at each order $N$ in $\alpha$, there might be terms of $\mathcal{O}\left(\alpha^N f_H^2\right)$, $\mathcal{O}\left(\alpha^N f_Hf_L\right)$ or $\mathcal{O}\left(\alpha^N f_L^2\right)$, just as in GR. Because of the assumption in Eq.~\eqref{eq:SmallParam1}, it is clear that for each $N$, the term of $\mathcal{O}\left(\alpha^N f_H^2\right)$ dominates. Thus, the order counting remains the same as in GR [cf. Eq.~\eqref{eq:EOMGR2}] and the arguments in Sec.~\ref{sec:IsaacsonGR} go through.

Let us now consider beyond-GR theories with field equations that have interactions with more than two derivatives in the action. By dimensional analysis, each such term must be multiplied by a corresponding power of a dimensionfull coupling, which we will collectively denote as $\epsilon$. Theories involving such higher-order derivative interactions are generally further subclassified in two different types: (i) theories with higher-derivative interactions that still admit equations of motion at second order in derivative operators (hence only up to two derivative operators per field); and (ii) theories whose equations of motion are higher order in derivatives.

A restriction to theories with second-order field equations is usually motivated by the Ostrogradsky theorem \cite{Ostrogradsky:1850fid}. This theorem states that, quite generally, theories with field equations involving more than two time derivatives per field possess ghost instabilities. Such instabilities are rooted in a Hamiltonian that is unbound from below. Yet, solutions to theories with higher-order field equations can nevertheless be stable within the so-called \textit{small-coupling approximation} (see e.g. \cite{Yunes:2013dva}). More precisely, the small-coupling approximation hinges on an effective field theory (EFT) point of view (see e.g. \cite{Weinberg:2008hq,Donoghue:2012zc,Porto:2016pyg,Endlich:2017tqa}), in which the higher-order derivative terms can be viewed as (quantum) corrections, which may leave traces in observables within the regime of validity of the EFT. This is indeed the case, for example, in dCS gravity~\cite{Jackiw:2003pm,Alexander:2009tp}, which propagates a ghost degree of freedom~\cite{Motohashi:2011ds} that can be eliminated if one imposes the small-coupling approximation~\cite{Yunes:2013dva}.

If one assumes that a quantization of the theory is inevitable, however, then the first type of theories described above also needs to be understood as an EFT, for which quantum corrections are required to remain under control (see, for instance, \cite{Pirtskhalava:2015nla,Heisenberg:2020cyi} for a study of quantum stability of subclasses of covariantized Galileon theories). Moreover, an EFT point of view for any theory involving higher-order derivative interactions is further motivated by the observation that such theories are generally believed to only be well-posed in the \textit{weakly-coupled regime} (see e.g. \cite{Kovacs:2020pns,Ripley:2022cdh}).  Horndeski theories can only be shown to be well posed in this regime \cite{Kovacs:2020ywu}, while the well posedness of other higher-order derivative theories still needs to be proven. By ``well posedness'' of a theory, here we mean  more precisely that the hyperbolic partial differential (field) equations have a well-posed initial value problem and can be meaningfully evolved in numerical simulations, which is guaranteed if the system is strongly hyperbolic \cite{Ripley:2022cdh}. Such a restriction to the weakly coupled regime effectively translates into the requirement that the couplings of terms involving more than two derivative operators remain small compared to the highest-frequency scale $f_H$.

The arguments above therefore naturally suggest that one should only apply such theories in regimes in which the coupling $\epsilon$ of any term in the action that involves more than two derivatives is treated as small when compared to the energy scale of the perturbations. Such a small-coupling requirement then enables us to include such higher-derivative theories in our analysis. More precisely, we demand that for any scale of energy $f_H$ below the scale of validity of the EFT, we have
\begin{equation}\label{eq:SmallParam4}
    \epsilon f_H \lesssim 1\,.
\end{equation}
This directly implies that the leading-order terms in the expansion in Eq.~\eqref{eq:EOMI} are again dominated by $\alpha$, in the sense that any operator $\phantom{}_{\s{(M)}}\mathcal{G}_{\mu\nu}$, maximally of $\mathcal{O}\left(\alpha^M f_H^2 (f_H\epsilon)^i\right)$ for some $i\geq 0$, remains subdominant compared to the leading-order terms in $\phantom{}_{\s{(N)}}\mathcal{G}_{\mu\nu}$ of $\mathcal{O}\left(\alpha^N f_H^2 (f_H\epsilon)^j\right)$, whenever $M>N$.

The above is all we need to establish the analog to Eqs.~\eqref{eq:EOMISGR} and \eqref{eq:EOMIISGR} for the case of generic metric theories. The arguments above imply that we may solve the generalized field equations in Eq.~\eqref{eq:EOMI} to leading order in $\alpha$, while ensuring that any higher-order correction remains subdominant. Moreover, because of the split between low- and high-frequency equations, which we insist is crucial in this case, we can consider the leading-order contributions of the low- and high-frequency equations separately. Just as in GR, these equations will be given by the low-frequency terms from $\big\langle \phantom{}_{\s{(2)}}\mathcal{G}_{\mu\nu}\big\rangle$ and $\phantom{}_{\s{(0)}}\mathcal{G}_{\mu\nu}$ [determined through backreaction of the coarse-grained contribution $\big\langle \phantom{}_{\s{(2)}}\mathcal{G}_{\mu\nu}\big\rangle$, at most of $\mathcal{O}\left(\alpha f_H^2\right)$] and the high-frequency contribution $\phantom{}_{\s{(1)}}\mathcal{G}_{\mu\nu}$ at most of $\mathcal{O}\left(\alpha^2 f_H^2\right)$ respectively, while all other contributions will be of higher order. Thus,
\begin{align}
    &\phantom{}_{\s{(0)}}\mathcal{G}_{\mu\nu}[\{\bar g^L,\bar\Psi^L\}]=-\big\langle \phantom{}_{\s{(2)}}\mathcal{G}_{\mu\nu}[\{\bar g^L,\bar\Psi^L\},\{h^H,\Psi^H\}]\big\rangle\,,   \label{eq:EOMIS}\\
    &\phantom{}_{\s{(1)}}\mathcal{G}_{\mu\nu}[\{\bar g^L,\bar\Psi^L\},\{h^H,\Psi^H\}]=0\,,\label{eq:EOMIIS}
\end{align}
while now
\begin{equation}\label{eq:RHSToEM}
    \big\langle \phantom{}_{\s{(2)}}\mathcal{G}_{\mu\nu}\big\rangle\propto \phantom{}_{\s{(2)}}t_{\mu\nu}\,,
\end{equation}
captures the effective energy-momentum contribution of all of the high-frequency perturbations. For later reference, we explicitly also state here the leading-order low- and high-frequency equations for the additional field equations $\mathcal{J}=0$, which similarly read
\begin{align}
    &\phantom{}_{\s{(0)}}\mathcal{J}[\{\bar g^L,\bar\Psi^L\}]=-\big\langle \phantom{}_{\s{(2)}}\mathcal{J}[\{\bar g^L,\bar\Psi^L\},\{h^H,\Psi^H\}]\big\rangle\,,\label{eq:EOMISPsi}\\
    &\phantom{}_{\s{(1)}}\mathcal{J}[\{\bar g^L,\bar\Psi^L\},\{h^H,\Psi^H\}]=0\,.\label{eq:EOMIISPsi}
\end{align}

Following the GR derivation, let us further decompose the background metric and background field into time-independent pieces $\bar{\eta}_{\mu \nu}$ and $\bar{\Psi}_0$, and time-dependent pieces (with characteristic frequency $f_L$) $\delta h_{\mu \nu}^L$ and $\delta \Psi^L$, namely  
\begin{equation}\label{eq:IsaacsonSplitBackground}
   \bar{g}^L_{\mu\nu}=\bar{\eta}_{\mu\nu}+\delta h^L_{\mu\nu}\,, \quad \bar{\Psi}^L=\bar{\Psi}_0+\delta \Psi^L
\end{equation}
where the set $\{\bar{\eta}_{\mu\nu},\bar{\Psi}_0\}\equiv \bar{\eta}_0$ solves the background equations of motion $\phantom{}_{\s{(0)}}\mathcal{G}_{\mu\nu}[\bar{\eta}_0]=0$, and where
\begin{equation}\label{eq:SmallParam3}
    |\delta h^L_{\mu\nu}|\,,\;\;|\delta \Psi^L| = {\cal{O}}(\beta)\,,
\end{equation}
where $\beta \ll 1$. Equations~\eqref{eq:EOMIS}, \eqref{eq:EOMIIS}, \eqref{eq:EOMISPsi} and \eqref{eq:EOMIISPsi} to leading order [i.e.~to $\mathcal{O}\left(\alpha^2f_H^2\right)$ and $\mathcal{O}\left(\alpha f_H^2\right)$, respectively] become
\begin{align}
    _{\s{(1)}}\mathcal{G}_{\mu\nu}[\bar{\eta}_0,\{\delta h^L,\delta\Psi^L\}]&=-\big\langle \phantom{}_{\s{(2)}}\mathcal{G}_{\mu\nu}[\bar{\eta}_0,\{h^H,\Psi^H\}]\big\rangle\,,   \label{eq:EOMIS2}\\
    _{\s{(1)}}\mathcal{J}[\bar{\eta}_0,\{\delta h^L,\delta\Psi^L\}]&=-\big\langle \phantom{}_{\s{(2)}}\mathcal{J}[\bar{\eta}_0,\{h^H,\Psi^H\}]\big\rangle\,,   \label{eq:EOMIS2Psi}\\
    \phantom{}_{\s{(1)}}\mathcal{G}_{\mu\nu}[\bar{\eta}_0,\{h^H,\Psi^H\}]&=0\,,\label{eq:EOMIIS2}\\
    \phantom{}_{\s{(1)}}\mathcal{J}[\bar{\eta}_0,\{h^H,\Psi^H\}]&=0\,.\label{eq:EOMIIS2Psi}
\end{align}
Observe that the relation in Eq.~\eqref{eq:BetaParam} $\beta\sim \alpha^2f_H^2/f_L^2$, still holds [at least for the tensor perturbations for which we know \textit{a priori} that in the presence of gravitational waves the right-hand side of Eq.~\eqref{eq:EOMIS2} does not vanish].

Let us close this section by commenting on the number of derivatives involved in each term of the equations above. Note that because of the small-coupling assumption in Eq.~\eqref{eq:SmallParam4}, the number of derivative operators involved in each leading-order term is no longer restricted. However, for theories satisfying second-order equations of motion [hence, also including theories of type (i) above[] and in the asymptotic region of an asymptotically flat spacetime (which we will be interested in, in this work), any operator at first order in perturbation fields $\phantom{}_{\s{(1)}}\mathcal{G}_{\mu\nu}[\bar{\eta}_0,\{h^H,\Psi^H\}]$ and $\phantom{}_{\s{(1)}}\mathcal{J}[\bar{\eta}_0,\{h^H,\Psi^H\}]$ involves exactly two derivatives. This is because the maximum number of derivatives per field is two, while any term involving a derivative acting on the Minkowski background will vanish. Moreover, anticipating the results in Sec.~\ref{sec:GeneralFormula} below, we will show that in the limit to null infinity, this implies that also the low-frequency contribution $\big\langle \phantom{}_{\s{(2)}}\mathcal{G}_{\mu\nu}[\bar{\eta}_0,\{h^H,\Psi^H\}]\big\rangle$ will only involve two derivatives. This statement will actually also hold for many theories of type (ii). However, for some of them, a stronger assumption than Eq.~\eqref{eq:SmallParam4} is needed, namely that
\begin{equation}\label{eq:SmallParam5}
    \epsilon f_H \ll 1\,,
\end{equation}
which assures that any term in $\phantom{}_{\s{(1)}}\mathcal{G}_{\mu\nu}[\bar{\eta}_0,\{h^H,\Psi^H\}]$ involving more than two derivatives is of higher order. This statement will be further discussed in Sec.~\ref{sec:ScopeOfClaim}.



\section{\label{sec:SVTFirst} The Basics of massless SVT Gravity}

As a first application of the approach to the Isaacson picture discussed in the previous section, we will use that framework to calculate the backreaction of the energy-momentum carried by wave perturbations in the limit to null infinity for a concrete set of metric theories beyond GR. This will not only confirm that the induced, low-frequency metric perturbation indeed corresponds to the null memory, but it will also provide a formula for the null memory in theories for which the memory has not yet been explored. For now, we choose to restrict our study to the leading-order effects in the asymptotic region of asymptotically flat spacetimes, mainly because this is the simplest situation from which observationally relevant consequences can be drawn.

Let us therefore focus on spacetimes that are asymptotically flat and expand the metric and any additional gravitational fields around a Minkowski background 
\begin{equation}\label{eq:AsymFlat}
    g_{\mu\nu}=\eta_{\mu\nu}+\mathcal{O}(1/r)\,,\quad\Psi=\Psi_0+\mathcal{O}(1/r)\,,
\end{equation}
Moreover, we want to consider a particular class of metric theories of gravity, namely the most general, massless and gauge-invariant\footnote{Gauge invariance, signaling a redundancy in the description of the vector through a tensor-field $A_\mu$, is a direct consequence of its masslessness and local-Lorentz invariance.} SVT theories with second-order equations of motion that include a single $U(1)$ gauge field $A_\mu$, with field strength $F_{\mu\nu}$, and a single massless scalar field $\Phi$, with vanishing potential. The theory thus propagates five degrees of freedom. In terms of the notation of Sec.~\ref{sec:IsaacsonBeyondGR}, the theory contains a physical metric $g_{\mu\nu}$ together with two additional gravitational fields $\Psi=\{A_\mu,\Phi\}$. The restriction to massless fields results directly from our assumption of trivial potentials. This is not a severe restriction because massive fields are, by definition, not expected to source any null memory, which we focus on in this paper. This section introduces the theory, as well as the standard first-order wave solutions in asymptotically flat spacetime. The actual computation of the gravitational null memory will then be carried out in the next section.

\subsection{Action and definitions}

The action of SVT gravity can be written as \cite{Heisenberg:2018acv}\footnote{This theory was derived through a decoupling limit of Generalized Proca theories \cite{Heisenberg:2014rta} written in gauge-invariant form, by introducing Stueckelberg fields.}
\begin{equation}\label{ActionHorndeski}
    S^{\st{SVT}}=\frac{1}{2\kappa_0}\int \dd^4 x\sqrt{-g}\left(\sum_{i=2}^5L_i\right)\,,
\end{equation}
where\footnote{Note that up to integrations by part, a term with $G_3(\Phi,X)=\Phi$ is equivalent to the kinetic term of the scalar, such that we specifically exclude such a term from $G_3$.}
\begin{align}
        L_2=&\,G_2(\Phi,X,Y,F,\tilde{F})\,,\\
        L_3=&-G_3(\Phi,X)\Box\Phi+\Big[\hat{G}_3(\Phi,X)\,g_{\alpha\beta}\notag\\
        &+\doublehat{G}_3(\Phi,X)\nabla_\alpha\Phi\nabla_\beta\Phi\Big]\tilde{F}^{\mu\alpha}\tilde{F}^{\nu\beta}\Phi_{\mu\nu}\,,\\
        L_4=&\,G_4(\Phi,X)\,R+G_{4X}\left[(\Box\Phi)^2-\Phi^{\mu\nu}\Phi_{\mu\nu}\right]\notag\\
        &+\hat{G}_4(\Phi,X)L^{\mu\nu\alpha\beta}F_{\mu\nu}F_{\alpha\beta}\notag\\
        &+\left[\doublehat{G}_4(\Phi)+\frac{1}{2}\hat{G}_{4X}\right]\tilde{F}^{\mu\alpha}\tilde{F}^{\nu\beta}\Phi_{\mu\nu}\Phi_{\alpha\beta}\,,\\
        L_5=&\,G_5(\Phi,X)\,G^{\mu\nu}\Phi_{\mu\nu}-\frac{G_{5X}}{6}\Big[(\Box\Phi)^3\notag\\
        &-3\,\Box\Phi\,\Phi^{\mu\nu}\Phi_{\mu\nu}+2\,\Phi_{\mu\nu}\Phi^{\nu\lambda}\Phi\du{\lambda}{\mu}\Big]\,,
\end{align}
with $\kappa_0\equiv 8\pi G_0$, where $G_0$ is the dimensionfull, bare gravitational constant and where $G_i$, $\hat{G}_i$ and $\doublehat{G}_i$ are arbitrary functions of $\Phi$, $X\equiv - ({1}/{2}) \nabla_\mu\Phi\nabla^\mu\Phi$, $Y\equiv\nabla_\mu\Phi\nabla_\nu\Phi F^{\mu\alpha}F\ud{\nu}{\alpha}$, $F\equiv - ({1}/{4}) F^{\mu\nu}F_{\mu\nu}$ and $\tilde{F}\equiv F^{\mu\nu}\tilde{F}_{\mu\nu}$, with the Hodge dual $\tilde{F}_{\mu\nu}\equiv\frac{1}{2}\epsilon_{\mu\nu\alpha\beta}F^{\alpha\beta}$. The quantity $L^{\mu\nu\alpha\beta}$ is the double-dual Riemann tensor, and it is given by
\begin{equation}\label{eq:ddR}
\begin{split}
    L^{\mu\nu\alpha\beta}\equiv&\, R^{\mu\nu\alpha\beta}+\Big(R^{\mu\beta}g^{\nu\alpha}
    +R^{\nu\alpha}g^{\mu\beta}-R^{\mu\alpha}g^{\nu\beta}\\
    &-R^{\nu\beta}g^{\mu\alpha}\Big)+\frac{1}{2}R\Big(g^{\mu\alpha}g^{\nu\beta}-g^{\mu\beta}g^{\nu\alpha}\Big)\,.
\end{split}
\end{equation}
We also define $\Phi_{\mu\nu}\equiv\nabla_{\mu}\nabla_{\nu}\Phi$ and $G_{iZ}\equiv \partial G_i/\partial Z$ for any function, and we impose a vanishing potential via $G_{2\Phi^n}(\Phi,0,0,0,0)= 0$ for any integer $n\geq 1$. Moreover, as already mentioned, we will for simplicity omit any explicit matter contributions, although it is important to keep in mind that only the metric couples minimally to matter while the additional scalar and vector remain purely in the gravity sector. The corresponding equations of motion associated with the action presented above can, for instance, be found in the Appendix of \cite{Heisenberg:2018mxx,Kobayashi:2011nu}.

The above action actually defines a large \textit{class} of theories. This class generalizes the well-known scalar Horndeski or covariant Galileon gravity class \cite{Horndeski:1974wa,Nicolis:2008in,Kobayashi:2019hrl}, reducing to it when $\nabla_\mu A_\nu=0$ and reducing to vector Horndeski gravity \cite{Horndeski:1976gi,Barrow:2012ay} for a vanishing scalar field. The theory members of this class are determined by the choices of $G_i$, $\hat{G}_i$, and $\doublehat{G}_i$ functionals. For example, $G_2 =({2\omega}/{\Phi}) X$, $G_4 =\Phi$, and all other $G_i=0$ correspond to BD gravity~\cite{Brans:1961sx,Dicke:1961gz}, while other choices lead to other theories, like sGB gravity~\cite{Zwiebach:1985uq,Gross:1986iv}, and double-dual Riemann gravity~\cite{Charmousis:2011ea,Charmousis:2011bf} (see Sec.~\ref{subsec:examples} for more details). 

\subsection{Leading-order waves}\label{sec:FOW}

We will now consider the leading-order waves of the theory presented in the previous section and solve for the corresponding propagating or \gls{radiative degrees of freedom} of the gravitational fields. Moreover, we will also comment on the direct detectability of the radiation through gravitational wave observations.

\subsubsection{Radiative degrees of freedom}\label{sec:RadDOFs}

We start by assuming that we can split all fields between slowly varying- and high-frequency components, as in Eq.~\eqref{eq:IsaacsonSplit} of Sec.~\ref{sec:IsaacsonBeyondGR}. Since we are focusing on asymptotically flat spacetimes that contain an isolated matter source, the limit to null infinity naturally selects the time-independent background solutions [Eq.~\eqref{eq:IsaacsonSplitBackground}] by identifying
\begin{equation}
    \bar{\eta}_{\mu\nu}=\eta_{\mu\nu}\,,\quad \bar{\Psi}_0=\Psi_0\,,
\end{equation}
where $\eta_{\mu\nu}$ and $\Psi_0$ are the Minkowski metric and asymptotic field values, defined through Eq.~\eqref{eq:AsymFlat}. In the concrete case of SVT gravity, we define the zeroth-order background fields ${\Psi}_{0}=\{a_{0\mu},\varphi_{0}\}$, which solve the background-field equations if we impose
\begin{equation}\label{eq: Background Solutions}
    a_{0\nu}=0\,,\; \nabla_{\mu}\varphi_0=0\,,\; 
    ^{\s{(0)}}G_{2}=0\,,
\end{equation}
where we define $^{\s{(0)}}G_i\equiv G_i(\varphi_0,0,..,0)$. The background equations of motion would actually be solved even for a nontrivial but constant background vector $a_{0\nu}$. However, in order to preserve local Lorentz invariance, we impose a vanishing asymptotic value for the vector field.

Moreover, we now assume that the isolated system produces gravitational radiation, whose physical modes asymptote to future null infinity with a $1/r$ falloff. We also impose a ``no-incoming radiation'' boundary condition at past null infinity. We then typically describe the oscillatory radiation modes in the radiation zone through perturbations of characteristic amplitude $\cal{O}(\alpha)$. These perturbations can therefore be identified with the high-frequency perturbations $h^H_{\mu\nu}$ and $\Psi^H$ of Sec.~\ref{sec:IsaacsonBeyondGR}, so that we have
\begin{equation}\label{eq:OriginalVariables}
     h^H_{\mu\nu}=h_{\mu\nu}\,,\quad \Psi^H=\{a_{\mu},\varphi\}\,.
\end{equation}
Note that for these fields, the terms \textit{radiation} and \textit{wave} can thus be used interchangeably. We further want to ensure a nonvanishing kinetic term for at least the tensor perturbations by imposing $^{\s{(0)}}G_{4}\neq 0$. For a certain gravitational wave source within a given theory, the scalar and vector waves might or might not be excited, depending on the concrete situation. We assume, however, that a tensor gravitational wave is present.

To address the leading-order wave propagation in Eqs.~\eqref{eq:EOMIIS2} and \eqref{eq:EOMIIS2Psi}, it is very useful to first expand the action in Eq.~\eqref{ActionHorndeski} to second order in perturbations,\footnote{At this order, the perturbations of $\cal{O}(\alpha^2)$ would contribute an additional linear term in the action, which is, however, irrelevant for the equations of motion, and we can safely neglect it at this stage.} which facilitates the determination of the physical dynamical degrees of freedom in the theory in terms of gauge-invariant modes. The second-order action in SVT theory contains a kinetic term that couples the metric and scalar perturbations $h_{\mu\nu}$ and $\varphi$. This term can, however, be removed through the field redefinition
\begin{equation}\label{RelationPhToh}
    \hat{h}_{\mu\nu}\equiv h_{\mu\nu}+\eta_{\mu\nu}\sigma\,\varphi\;,\quad \sigma\equiv \frac{^{\s{(0)}}G_{4\Phi}}{^{\s{(0)}}G_{4}}\,.
\end{equation}
Moreover, the scalar and vector perturbation can be rescaled so that their kinetic terms in the second-order action are canonically normalized. The necessary rescaling is
\begin{equation}\label{RescaledPhi}
    \hat{\varphi}\equiv \rho\,\varphi\,,\quad \hat{a}_\mu\equiv \zeta\,a_\mu\,,
\end{equation}  
where
\begin{equation}
    \rho\equiv\sqrt{3\,\sigma^2+\frac{(^{\scriptscriptstyle{(0)}}G_{2X}-2\,^{\scriptscriptstyle{(0)}}G_{3\Phi})}{^{\scriptscriptstyle{(0)}}G_4}}\,,
\end{equation}
and
\begin{equation}
    \zeta\equiv \sqrt{\frac{^{\scriptscriptstyle{(0)}}G_{2F}}{\phantom{}^{\scriptscriptstyle{(0)}}G_4}}\,.
\end{equation}
We require here that the coefficients $\sigma$, $\rho$ and $\zeta$ are real, which is also imposed by the positivity of the energy carried by the perturbations as we will see explicitly below.
In terms of the new variables in Eqs.~\eqref{RelationPhToh} and \eqref{RescaledPhi}, the second-order action of SVT theory then simply reads
\begin{equation}\label{ActionSVT2nd}
\begin{split}
    _{\s{(2)}}S^{\st{SVT}}=\frac{-1}{2\kappa_\text{eff}}\int\dd^4x\bigg[&\hat{h}^{\mu\nu}\mathcal{E}^{\alpha\beta}_{\mu\nu}\hat{h}_{\alpha\beta}\\
    &+\frac{1}{4}\hat{f}_{\mu\nu}\hat{f}^{\mu\nu}+\frac{1}{2}\partial_\mu\hat{\varphi}\partial^\mu\hat{\varphi}\bigg]\,,
\end{split}
\end{equation}
where we define the field strength of the leading-order vector perturbation $\hat{f}_{\mu\nu}\equiv\partial_\mu \hat{a}_{\nu}-\partial_\nu \hat{a}_\mu$, $\kappa_\text{eff}\equiv 8\pi G_\text{eff}$ with the effective gravitational constant $G_\text{eff}\equiv G_0/\phantom{}^{\scriptscriptstyle{(0)}}G_4$, as well as the trace $\hat{h}^t\equiv\eta^{\mu\nu}\hat{h}_{\mu\nu}$. Moreover, $\mathcal{E}_{\mu \nu}^{\alpha \beta}$ stands for the flat-space, Lichnerowicz operator
\begin{equation}
\begin{split}
    \mathcal{E}^{\alpha\beta}_{\mu\nu}\hat{h}_{\alpha\beta}=-\frac{1}{4}\Big[&\Box \hat{h}_{\mu\nu}-2\partial_\alpha\partial_{(\mu}\hat{h}\du{\nu)}{\alpha}+\partial_\mu\partial_\nu \hat{h}^t\\
    &-\eta_{\mu\nu}\left(\Box\hat{h}^t-\partial_\alpha\partial_\beta \hat{h}^{\alpha\beta}\right)\Big]\,,
\end{split}
\end{equation}
which allows for compact notation when writing down the Fierz-Pauli Lagrangian. Indeed, the first term in the action of Eq.~ \eqref{ActionSVT2nd} is equivalent to the usual Fierz-Pauli combination
\begin{equation}
\begin{split}
    \hat{h}^{\mu\nu}\mathcal{E}^{\alpha\beta}_{\mu\nu}\hat{h}_{\alpha\beta}=\frac{1}{4}&\Big[\partial_\mu\hat{h}_{\alpha\beta}\partial^\mu \hat{h}^{\alpha\beta}-\partial_\mu \hat{h}^t\partial^\mu \hat{h}^t\\
    &+2\partial_\mu \hat{h}^{\mu\nu}\partial_\nu\hat{h}^t -2\partial_\mu \hat{h}^{\mu\nu}\partial_\alpha\hat{h}\ud{\alpha}{\nu}\Big]\,,
\end{split}
\end{equation}
upon integration by parts.
Hence, in terms of new variables, the second-order action in Eq.~\eqref{ActionSVT2nd} is nothing but the linearized Einstein-Hilbert action with a sum of additional canonical fields. 

It is well known that perturbed solutions of metric theories of gravity are subject to gauge redundancies, which can be understood as arising from the invariance of the action under coordinate transformations, or equivalently, the invariance of the spacetime under diffeomorphisms.\footnote{More precisely, the spacetime solution $(\mathscr{M},g,\Psi)$ with a metric $g$ and additional tensor fields $\Psi$ defined on the manifold is physically equivalent to the solution $(\mathscr{N},\phi_*g,\phi_*\Psi)$, where $\phi:\mathscr{M}\rightarrow \mathscr{N}$ is a diffeomorphism and $\phi_*$ the associated pushforward \cite{WaldBook}.} Concretely, perturbations $\delta h_{\mu\nu}$ of a given background solution $\bar\eta_{\mu\nu}$ in a given coordinate system $x^\mu$ (as well as the perturbations of all other fields) are only physical, up to adding fake perturbations $\delta h^f_{\mu\nu}$ to the background that can be removed by a small coordinate transformation. Thinking actively, a perturbation is fake, if there exists an infinitesimal coordinate transformation that moves the points from $x^\mu\rightarrow x'^\mu=x^\mu+\xi^\mu$, with $|\xi^\mu|\ll 1$, such that\footnote{In terms of diffeomorphisms, the active transformation is given by the pushforward $(\phi_*\bar\eta)_{\mu\nu}$, whose components are equivalent to a coordinate transformation $\bar\eta'_{\mu\nu}$ defined through the pullback $x'^\mu=(\phi^*x)^\mu$ (see e.g. \cite{Bardeen:1980kt,WaldBook,carroll2019spacetime})}
\begin{equation}
    \bar\eta_{\mu\nu}(x')+h^f_{\mu\nu}(x')=\bar\eta'_{\mu\nu}(x')\,,
\end{equation}
where we compare the expressions at the same point $x'$. Therefore, the fake perturbations correspond to the Lie derivative of the background metric $h^f_{\mu\nu}(x)=\eta'_{\mu\nu}(x')-\eta'_{\mu\nu}(x')+\mathcal{O}(\xi^2)=-\mathcal L_\xi\bar\eta_{\mu\nu}+\mathcal{O}(\xi^2)$, such that any perturbation $\delta h_{\mu\nu}$ is only physical up the gauge transformation
\begin{equation}
    \delta h_{\mu\nu}\rightarrow \delta h_{\mu\nu}-\mathcal L_\xi \bar\eta_{\mu\nu}
\end{equation}
The same holds for all other field perturbations $\delta \Psi$ on a given background solution $\bar\Psi_0$
\begin{align}
    \delta \Psi &\rightarrow \delta \Psi-\mathcal L_\xi \bar\Psi_0\,.
\end{align}

For the SVT gravity at hand, only the metric perturbation transforms under this gauge symmetry through $\mathcal L_\xi \eta_{\mu\nu}=\left(\xi^\alpha\partial_\alpha \eta_{\mu\nu}+2\,\eta_{\alpha(\nu}\partial_{\mu)} \xi^\alpha\right)=2\,\eta_{\alpha(\nu}\partial_{\mu)} \xi^\alpha$, since for the background solutions in Eq.~\eqref{eq: Background Solutions} we have that $\mathcal L_\xi \varphi=\xi^\alpha\partial_\alpha \varphi_0=0$ and $\mathcal L_\xi a^\mu_0=\xi^\alpha\partial_\alpha a_0^\mu-a_0^\alpha\partial_\alpha \xi^\mu=0$. Moreover, since we made a split between the low-frequency and high-frequency parts of the fields, we need to make sure that the small coordinate transformations also affect only in this case the high-frequency part of the metric perturbation $h_{\mu\nu}$, which therefore transforms as
\begin{equation}\label{eq:CoordGaugeTransf}
  h_{\mu\nu}\rightarrow h_{\mu\nu}-2\,\eta_{\alpha(\nu}\partial_{\mu)} \xi_H^\alpha\,.
\end{equation}
Note that this gauge freedom is entirely inherited by the redefined perturbation variable $\hat h_{\mu\nu}$.
On the other hand, the vector perturbation $a^\mu$ inherits a gauge freedom from the $U(1)$ gauge transformations of the massless vector $A^\mu$, given by
\begin{equation}\label{eq:U(1)GaugeTransf}
  a_{\mu}\rightarrow a_{\mu}+\partial_\mu\Lambda^H\,.
\end{equation}

By performing suitable coordinate [Eq.~\eqref{eq:CoordGaugeTransf}] and $U(1)$ gauge transformations [Eq.~\eqref{eq:U(1)GaugeTransf}], we can impose at the level of the equations of motion the following gauge conditions
\begin{equation}\label{eq:FirstOrderGaugeFixing}
    \partial_\mu \hat{h}^{\mu\nu}=0\,,\quad\hat{h}^t=0\quad \text{and}\quad \partial_\mu \hat{a}^{\mu}=0\,.
\end{equation}
In this gauge, it is no surprise that the leading-order wave propagation described by Eqs.~\eqref{eq:EOMIIS2} and \eqref{eq:EOMIIS2Psi} lead to decoupled wave equations for all the hatted perturbations
\begin{equation}\label{eq:FirstOrderProp}
    \Box \hat{h}_{\mu\nu}=0\,,\quad \Box \hat{a}_\mu=0\,,\quad \Box \hat{\varphi}=0\,.
\end{equation}

Let us now single out the \gls{radiative modes} that dominate in the limit to null infinity. To do so, we perform the usual $3+1$ decomposition with a spatial orthonormal basis, given by a unit longitudinal or radial direction $n^i$ and two transverse vectors $u^i$ and $v^i$, such that $\delta_{ij}=n_in_j+u_iu_j+v_iv_j$. The plane wave solutions of Eq.~\eqref{eq:FirstOrderProp} can then be expanded in terms of 5 radiative degrees of freedom, 2 of which appear in the gauge-invariant, transverse-traceless (TT) tensor part ($h_+$ and $h_\times$), 2 are transverse vectors ($\hat{a}_u$ and $\hat{a}_v$), and 1 is a scalar $\hat\varphi$:
\begin{equation}\label{eq:PhysicalPol}
    \hat{h}^\text{TT}_{ij}=\hat h_+e^+_{ij}+\hat h_\times e^\times_{ij}\,,\quad \hat a^\text{T}_i=\hat a_u u_i+\hat a_v v_i\,,\quad \hat\varphi\,.
\end{equation}
where we defined the \gls{polarization} tensors $e^+_{ij}\equiv u_iu_j-v_iv_j$ and $e^\times_{ij}\equiv u_iv_j+v_iu_j$.

These radiative modes are indeed invariant under the residual gauge freedom left over after fixing Eq.~\eqref{eq:FirstOrderGaugeFixing}, which are transformations satisfying $\Box\xi_{H}^{\mu}=\partial_\mu\xi_H^\mu=0$ and $\Box\Lambda^H=0$. While other components of the metric and vector perturbations are not explicitly gauge invariant in this approach, this will not be relevant in our considerations, since, in particular, only the radiative modes will contribute to the gauge-invariant, and hence, physical, response of a detector in the radiation zone (see Sec.~\ref{GravPol} below). One could have chosen to work exclusively with manifestly gauge-invariant quantities (see e.g.~\cite{Flanagan:2005yc}) at the cost of having to deal with \textit{a priori} nonlocal fields. Once more, however, only certain combinations of them will be locally measurable in the physical detector response. Yet another approach would have been to not introduce any metric perturbations or vector potentials at all, and only work with manifestly gauge-invariant and local objects (see e.g.~ \cite{Koop_PhysRevD062002,Garfinkle:2022dnm}). In the end, all of these approaches are physically equivalent in the limit to null infinity.

Before proceeding, let us make several observations about the radiative modes presented above. First, observe that the \gls{tensor polarizations} defined above are identical to the tensor polarizations of the physical metric perturbations: $\hat h_+=h_+$ and $\hat h_\times =h_\times$, and hence, $\hat{h}^\text{TT}_{ij}=h^\text{TT}_{ij}$. Second, note that one can construct linear combinations of the radiative modes that have a certain (tensor $s=-2$ or vector $s=-1$) ``spin weight'' (see e.g.~\cite{DAmbrosio:2022clk}), namely
\begin{equation}\label{Scalarht}
     h=\hat{h}\equiv \hat{h}_{ij}\bar{m}_i\bar{m}_j=\frac{1}{2}\hat{h}_{ij}(e^+_{ij}-i\, e^\times_{ij})=\hat{h}_+-i\hat{h}_\times\,,
\end{equation}
and
\begin{equation}\label{Scalara}
     \hat{a}\equiv \hat{a}_{i}^\text{T}\sqrt{2}\bar{m}_i=\hat{a}_{u}-i\hat{a}_{v}\,,
\end{equation}
where $\bar m_i\equiv ({1}/{\sqrt{2}})(u_i-iv_i)$ is a transverse vector of spin weight $s=-1$ as determined by its behavior under rotations about the longitudinal direction. Third, observe that both the tensor and the vector polarization modes are \textit{transverse}, while the tensor modes are, moreover, \textit{traceless}. It is convenient to define a transverse projector
\begin{equation}
\perp_{ij}\equiv \delta_{ij}-n_in_j=u_iu_j+v_iv_j=m_i\bar m_j+\bar m_im_j\,,
\end{equation}
as well as a transverse-traceless projector
\begin{equation}\label{eq:Projectors}
    \perp_{ijab}\equiv \perp_{ia}\perp_{jb}-\frac{1}{2}\perp_{ij}\perp_{ab}\,.
\end{equation}
These projectors can be used to single out the transverse vector modes and transverse-traceless tensor modes in any given expression. For instance, we can write
\begin{equation}
    \hat{h}^\text{TT}_{ij}=\perp_{ijab}\hat{h}_{ab}\,,\quad \hat{a}^\text{T}_{i}=\perp_{ij}\hat{a}_{j}\,.
\end{equation}
In what follows, we will use the two spin-weighted scalars in Eqs.~\eqref{Scalarht} and \eqref{Scalara}, together with the scalar perturbation $\hat{\varphi}$, to describe the leading-order tensor, vector and scalar radiation respectively. In coordinates $\{u,r,\Omega=(\theta,\phi)\}$ and asymptotically close to null infinity, these radiative modes take the general form
\begin{align}
    &\left(\hat{h}\,,\;\hat{a}\,,\;\hat{\varphi} \right) \sim \frac{1}{r} \left[f_h(u,\theta,\phi), f_a(u,\theta,\phi), f_\varphi(u,\theta,\phi) \right]\,.\label{eq:OutgoingPlaneWave}
\end{align}
for some complex functions $f_{h,a}$ and real function $f_{\varphi}$.

\subsubsection{Gravitational polarizations}\label{GravPol}

At this point it is important to realize the difference between the notion of propagating degrees of freedom of the gravitational fields, also called modes, discussed above and \gls{gravitational polarizations}. The former are gauge-invariant solutions to the equations of motions for the leading-order perturbed fields in the theory. The latter are gauge-invariant radiative modes \textit{within the perturbations of the physical metric} that minimally couples to matter, and which, therefore, can be detected through typical gravitational wave observations. In order to determine the admitted gravitational polarizations of a theory, one must compute the electric part of the Riemann tensor of the physical metric that enters the geodesic deviation equation. 

In the nonrelativistic and low-frequency regime and in Fermi normal coordinates, the geodesic equation reads (see e.g. \cite{maggiore2008gravitational})
\begin{equation}\label{GeodesicDeviation}
    \ddot s_i=-R_{i0j0}s_j\,,
\end{equation}
where $s_i$ is the proper distance between nearby geodesics while overhead dots represent derivatives with respect to proper time along the geodesic. This equation is at the core of all gravitational wave experiments, and it is valid for any metric theory of gravity, since it is a direct consequence of minimal coupling to matter and does not depend on the field equations. 

Very generically, a metric theory of gravity admits up to six polarizations \cite{Eardley_PhysRevLett.30.884,Eardley_PhysRevD.8.3308}. This statement can be understood from the fact that, in the limit to null infinity of an asymptotically flat spacetime characterized by Eq.~\eqref{eq:AsymFlat}, the physical metric can be decomposed as 
\begin{equation}\label{eq:TotalMetricPert}
    g_{\mu\nu}=\eta_{\mu\nu}+H_{\mu\nu}\,,
\end{equation}
with $\eta_{\mu \nu}$ the Minkowski metric and an arbitrary perturbation $H_{\mu \nu}$ characterizing six distinct gauge-invariant radiation modes $H_\text{P}=\{H_+,H_\times,H_u,H_v,H_b,H_l\}$ of the physical metric, which fall off as $1/r$ near future null infinity [cf. Eq.~\eqref{eq:OutgoingPlaneWave}]. The electric part of the linear Riemann tensor $\phantom{}_{\s{(1)}}R_{\mu\nu\rho\sigma}[\eta,H]$, which itself is gauge-invariant, can then be written as a sum of these six gauge invariant polarizations $H_\text{P}$. More precisely, near future null infinity, one finds that (see e.g. \cite{poisson2014gravity,Flanagan:2005yc})
\begin{equation}\label{eq:LinRiemannPol}
    R_{i0j0}=-\frac{1}{2}\ddot{A}_{ij}+\mathcal{O}\left(\frac{1}{r^2}\right)\,,
\end{equation}
where
\begin{equation}
\begin{split}
    A_{ij}=&\,e^+_{ij}H_++e^\times_{ij}H_\times+2 n_{(i}u_{j)}H_u+2 n_{(i}v_{j)}H_v\\
    &+(u_iu_j+v_jv_i)H_b+n_in_jH_l\,.
\end{split}
\end{equation}
The structure of the electric part of the Riemann tensor tells us that, on top of the two familiar tensor modes $H_+$ and $H_\times$, the change in proper distance can also arise from two additional vector modes $H_u$ and $H_v$ or from a scalar longitudinal $H_l$ or scalar transverse (``breathing'') $H_b$ mode.

In order to determine the gravitational polarization content of the leading-order wave within the SVT gravity, we can simply evaluate the general linear expression in Eq.~\eqref{eq:RiemmanFirstOrder}, namely
\begin{equation}
    R_{i0j0}=-\frac{1}{2}\left(\partial_0\partial_0H_{ij}+\partial_i\partial_jH_{00}-\partial_0\partial_iH_{0j}-\partial_0\partial_jH_{0i}\right)\,,
\end{equation}
for $H_{\mu\nu}\rightarrow h_{\mu\nu}$. Decomposing the leading-order wave in terms of the gauge-invariant degrees of freedom of Eq.~\eqref{eq:PhysicalPol}, using Eq.~\eqref{RelationPhToh}, one can write
\begin{equation}
    h_{ij}=\hat{h}_{ij}^\text{TT}-\delta_{ij}\frac{\sigma}{\rho}\,\hat{\varphi}\quad \text{and}\quad h_{00}=\frac{\sigma}{\rho}\,\hat{\varphi}\,.
\end{equation}
After imposing the falloff of Eq.~\eqref{eq:OutgoingPlaneWave}, and noting that to $\mathcal{O}\left(r^{-1}\right)$ we can replace $\partial_i\rightarrow -n_i\partial_0$, the electric part of the linearized Riemann tensor reads
\begin{equation}\label{ElectricRiemannLin}
\begin{split}
    R_{i0j0}&=-\frac{1}{2}\left(\ddot{\hat{h}}^\text{TT}_{ij}-\left[\delta_{ij}-n_in_j\right]\frac{\sigma}{\rho}\,\ddot{\hat{\varphi}}\right)\\
    &=-\frac{1}{2}\left(e^+_{ij}\,\ddot{h}_++ e^\times_{ij}\,\ddot{h}_\times-\left[u_iu_j+v_iv_j\right]\sigma\,\ddot{\varphi}\right)\,.
\end{split}
\end{equation}
Thus, whenever $\sigma\neq 0$, the theory possesses an additional scalar breathing polarization within the detector response, as it is well known from scalar Horndeski theory \cite{Hou:2017bqj}.  On the other hand, since the massless vector does not couple nonminimally to the Ricci scalar, the gravitational vector degrees of freedom will never induce any additional gravitational polarizations. Therefore, this is a concrete example of a theory that possesses 5 propagating degrees of freedom, yet only a maximum of three polarizations of the physical metric survive.  

Observe that, in the language of Sec.~\ref{sec:Isaacson}, $H_{\mu\nu}$ here in principle represents the ``total'' perturbation, including both the leading-order high-frequency perturbations, as well as the low-frequency perturbations $h^H_{\mu\nu}$ and $\delta h^L_{\mu\nu}$. Therefore, $H_{\mu \nu}$ will generically also contain the memory component. In other words, while being a nonoscillatory ``Coulombi'' contribution (cf. \cite{Thorne:1992sdb}), memory can still be regarded as a part of the radiative modes of the physical metric, which are, in turn, part of the measurable components of the Riemann tensor in the limit to null infinity. This is nothing but a rephrasing of the statement that memory is part of the gravitational \textit{radiation}, as defined earlier. These memory modes in the perturbation of the physical metric will inherit the tensor structure from the leading-order radiation, and therefore, it is in general not possible for memory modes to excite a different gravitational polarization from the ones excited through the leading-order radiation (see also discussion below). This is because the excitation of additional polarizations can be associated with nonminimal couplings between the fields and the Ricci scalar in the full action of the theory.


\section{\label{sec:SVTMem} Null Memory in massless SVT Theories}


We will now proceed and compute the memory contribution as sourced by the leading-order radiation of the SVT theory introduced in the previous section. In general, the memory effect can be understood by integrating the linearized geodesic deviation equation [Eq.~\eqref{GeodesicDeviation}] twice using Eq.~\eqref{eq:LinRiemannPol}, and then solving for the separation vector $\Delta s_i\equiv \Delta s_i(\tau_f)-\Delta s_i(\tau_0)$ between an initial time $\tau_0$ and a final time $\tau_f$ before and after the passage of a gravitational wave\footnote{Note that we are neglecting here subdominant initial velocity contributions.}
\begin{equation}
    \Delta s_i\approx \frac{1}{2}\Delta A_{ij}s^j(\tau_0)\,.
\end{equation}
A permanent change in proper distance $\Delta s_i\neq 0$ characterizes gravitational memory. More precisely, any piece within the radiative gravitational polarizations of the physical metric in $ A_{ij}$, which induces such a permanent displacement will be called a memory component. Hence, on a very general basis, metric theories of gravity are expected to contain memory that can be associated with each of the six polarizations. This naturally leads to a distinction between scalar, vector and \gls{tensor memory}, where in this terminology, the terms ``scalar'', ``vector'' or ``tensor'' refer to the polarization type that induces a permanent displacement (not to be confused with the tensorial nature of a leading-order wave that acts as a source of memory, which is not restricted in any way).

Let us therefore consider the existence of leading-order waves with small amplitudes of $\mathcal{O}(\alpha)$, as described in the previous section, which satisfy the propagation equations [Eqs.~\eqref{eq:EOMIIS2} and \eqref{eq:EOMIIS2Psi}] and, in practice, are assumed to be known \textit{a priori}, and try to solve the leading-order low-frequency equations [Eqs.~\eqref{eq:EOMIS2} and \eqref{eq:EOMIS2Psi}] close to null infinity. These solutions will determine the low-frequency background components
\begin{equation}
    \delta h^L_{\mu\nu}=\delta h_{\mu\nu}\,,\quad \delta \Psi^L_{\mu\nu}=\{\delta a_\mu,\delta\varphi\}.
\end{equation}
As we will now see explicitly in the next subsection, these low-frequency perturbations will in general describe gravitational memory as defined above. In particular, the tensor memory described by $\delta h_{\mu\nu}$ as the solution to Eq.~\eqref{eq:EOMIS2} will correspond to the gravitational memory known from GR.

Can there also be similar nontrivial scalar or vector memory contributions? Or, in other words, does the low-frequency equation for the scalar and the vector field [Eq.~\eqref{eq:EOMIS2Psi}] in SVT theories also give rise to a corresponding scalar and vector memory component? For the vector perturbations, the answer can be given right away, since SVT gravity never excites any vector polarizations in the physical metric. By definition, this directly implies that SVT gravity will not contain any vector memory either. For the scalar equation, there is no analogous source entering the low-frequency equation [Eq.~\eqref{eq:EOMIS2Psi}], which simply reads $\Box\delta\varphi=0$. This is, however, not surprising, in the sense that within the SVT theory that we consider, no scalar source is radiated to null infinity. Therefore, the only nontrivial null-memory component is the tensor null memory, which is what we will focus on henceforth. Moreover, since the low-frequency perturbation $\delta h_{\mu\nu}$ is sourced by radiative modes that reach null infinity, it is more specifically called a \textit{tensor null-memory} component.

\subsection{Tensor null memory: Derivation of the memory evolution equation\label{sec:TensorMemorySVT}}

The right-hand side of Eq.~\eqref{eq:EOMIS2} is the memory-source term given by the averaged second-order metric field equations as a function of the leading-order waves and it can be readily computed and simplified tremendously. Using the gauge fixing conditions of Eq.~\eqref{eq:FirstOrderGaugeFixing}, as well as the leading-order equations of motion [Eq.~\eqref{eq:FirstOrderProp}], while also performing several integrations by parts,\footnote{Recall that integration by parts is permitted due to the spacetime averaging, which is crucial in this calculation.} the result can be reduced to
\begin{align}
   \Big\langle \phantom{}_{\s{(2)}}\mathcal{G}^{\st{SVT}}_{\mu\nu}\phantom{}\Big\rangle&\propto\Big\langle \partial_\mu\hat{h}_{\alpha\beta}\partial_\nu\hat{h}^{\alpha\beta}+2\,\partial_\mu\hat{a}_\alpha\partial_\nu\hat{a}^\alpha+2\,\partial_\mu\hat{\varphi}\partial_\nu\hat{\varphi} \Big\rangle \notag\\
   &\propto t^{\st{SVT}}_{\mu\nu}\,.
   \label{eq:G2-ave}
\end{align}
Observe that this expression only contains two derivative operators and does not contain any second-order perturbations and any mixed terms between first-order perturbations of the tensor, vector and scalar waves. Spacetime averaging removes any second-order perturbations, since they would depend on the high-frequency scale only, and thus, they average out. Averaging also allows for integrations by parts, which is crucial to ensure gauge invariance. The boundary terms generated upon integration by parts can be discarded because they introduce higher-order corrections only [see footnote above Eq.~\eqref{eq:EOMISGR}].

Let us comment on the result above. For notational compactness, Eq.~\eqref{eq:G2-ave} introduces the total effective energy-momentum tensor $t^{\st{SVT}}_{\mu\nu}\equiv  t_{\mu\nu}^{\st{GR}}+ t_{\mu\nu}^{\hat{a}}+ t_{\mu\nu}^{\hat{\varphi}}$.
Thus, as in GR and despite the many nontrivial operators in the full action of Eq.~\eqref{ActionHorndeski}, the result is simply proportional to the sum of known energy-momentum (pseudo)tensors of free bosonic fields in terms of leading-order perturbations
\begin{align}\label{eq:StressEnergyFirst}
    t_{\mu\nu}^{\st{GR}}&=\frac{1}{4\kappa_\text{eff}}\Big\langle \partial_\mu\hat{h}_{\alpha\beta}\partial_\nu\hat{h}^{\alpha\beta}\Big\rangle\,,\\
    t_{\mu\nu}^{\hat{a}}&=\frac{1}{2\kappa_\text{eff}}\Big\langle\hat{f}_{\mu\alpha}\hat{f}\du{\nu}{\alpha}\Big\rangle=\frac{1}{2\kappa_\text{eff}}\Big\langle\partial_{\mu}\hat{a}_{\alpha}\partial_{\nu}\hat{a}^{\alpha}\Big\rangle\,,\\
    t_{\mu\nu}^{\hat{\varphi}}&=\frac{1}{2\kappa_\text{eff}}\Big\langle\partial_\mu\hat{\varphi}\partial_\nu\hat{\varphi}\Big\rangle\,.
\end{align}
Because of the spacetime average and the wave equations [Eq.~\eqref{eq:FirstOrderProp}], this total energy-momentum tensor is conserved,\footnote{Conservation follows from the fact that the divergence commutes with the average and property (II) of the average procedure (see Sec.~\ref{sec:IsaacsonGR})} as well as traceless
\begin{align}\label{eq:StressEnergyTot}
   \partial^\mu t_{\mu\nu}^{\st{SVT}}=0\,,\quad\eta^{\mu\nu}  t_{\mu\nu}^{\st{SVT}}=0\,.
\end{align}
Furthermore, gauge invariance can easily be checked (see e.g. \cite{maggiore2008gravitational}) such that the total stress-energy (pseudo)tensor only depends on the modes in Eq.~\eqref{eq:PhysicalPol}, namely
\begin{align}\label{eq:StressEnergySecond}
    t_{\mu\nu}^{\st{SVT}}=\frac{1}{4\kappa_\text{eff}}\Big\langle \partial_\mu\hat{h}^\text{ TT}_{ij}\partial_\nu\hat{h}^{\text{TT}}_{ij}+2\,\partial_\mu\hat{a}^\text{T}_i\partial_\nu\hat{a}^{\text{T}}_i+2\,\partial_\mu\hat{\varphi}\partial_\nu\hat{\varphi}\Big\rangle\,.
\end{align}

The left-hand side of Eq.~\eqref{eq:EOMIS2} is the operator $_{\s{(1)}}\mathcal{G}_{\mu\nu}$ evaluated at the low-frequency perturbation, which will have the same structure as the operator governing the high-frequency propagation equation. Thus, within this term, the low-frequency perturbations can also be decoupled by performing a redefinition of fields analogous to what we used for the leading-order radiation [see Eq.~\eqref{RelationPhToh}], namely,
\begin{equation}\label{RelationPhTohMemory}
    \delta\hat{h}_{\mu\nu}\equiv \delta h_{\mu\nu}+\eta_{\mu\nu}\sigma\,\delta\varphi\,.
\end{equation}
Moreover, thanks to the properties of the source term in Eq.~\eqref{eq:StressEnergyTot}, infinitesimal coordinate transformations at the low-frequency level $\xi_\mu^L$ can be used to once again impose the gauge conditions\footnote{Note that these gauge conditions do not correspond to what is usually called the \textit{TT-gauge}. Indeed, the TT-gauge can in general only be imposed outside of the source (see e.g. \cite{maggiore2008gravitational}).}
\begin{equation}\label{eq:De-DonderSecond}
    \partial^\mu\delta\hat{h}_{\mu\nu}=0\,,\quad \eta^{\mu\nu}\delta\hat{h}_{\mu\nu}=0\,,
\end{equation}
such that the left-hand side of Eq.~\eqref{eq:EOMIS2} reduces to a wave equation $\phantom{}_{\s{(1)}}\mathcal{G}^{\st{SVT}}_{\mu\nu}\propto\Box \delta h_{\mu\nu}$. 

Putting everything together, the metric equation [Eq.~\eqref{eq:EOMIS2}] in the asymptotic region of an asymptotically flat spacetime described by SVT gravity simply reads
\begin{equation}\label{eq:SecondOrderEOMMem}
   \Box\delta\hat{h}_{\mu\nu}=-2\kappa_\text{eff}\,t^{\st{SVT}}_{\mu\nu}\,.
\end{equation}
Thus, as in GR, the physical energy-momentum carried by the leading-order waves induces a new, low-frequency perturbation $\delta\hat{h}_{\mu\nu}$ that satisfies a sourced wave equation. Equation~\eqref{eq:SecondOrderEOMMem} can be viewed as our first key result, as it essentially already implies that the new low-frequency component is identified with a null-memory contribution [see \cite{PhysRevD.44.R2945,Favata:2008yd,Garfinkle:2022dnm}] and anticipates that the memory equation in full SVT gravity is modified in a very minimal way, independent of the many details at the non-linear level. 

Let us stress, however, that the precise form of the leading-order TT radiation itself can still depend on the higher-order terms through modifications in the dynamics that create the gravitational waves. Such modifications of the amplitude and phase of the emitted gravitational waves therefore also alter the memory contribution in an indirect way. Interestingly, in the case of compact binary coalescence, modifications in the TT radiation already simply arise due to a change in the rate of emitted energy induced by the emission of additional scalar or vector radiation.

\subsection{Tensor null memory: Solution to the evolution equation\label{sec:DerTensorMemorySVT}}

We will now explicitly present the procedure to solve Eq.~\eqref{eq:SecondOrderEOMMem} for field points $(t,\vec{x})$ in the limit of outgoing null rays and obtain an explicit formula for the tensor null memory. 
Up to the expression for memory in terms of the TT part of the components of the physical metric perturbation in Eq.~\eqref{NonLinDispMemorySVT} below, we will follow the arguments laid out in \cite{PhysRevD.44.R2945} (see also \cite{Favata:2008yd}), filling in certain useful details. We then proceed by further simplifying the memory formula through an explicit rewriting of the result in terms of coefficients of an expansion in spin-weighted spherical harmonics. This is also the form in which memory is naturally extracted from asymptotic BMS balance laws. Explicitly showing the equivalence of the two different representations will therefore later allow us to confirm our results by a direct comparison to recent computations of BMS balance laws in the special case of BD theory (see Appendix~\ref{MatchToAsymptoticsBD}). 

The wave equation can be solved generically through the standard retarded Green's function
\begin{equation}\label{eq:NLinMem1}
    \delta\hat{h}_{\mu\nu}(x)=\frac{\kappa_\text{eff}}{2\pi}\int\dd^4x'\,t^{\st{SVT}}_{\mu\nu}(x')\,\frac{\delta(t-t'-|\vec{x}-\vec{x}'|)}{|\vec{x}-\vec{x}'|}\,.
\end{equation}
To transform this expression into a form that is useful to us, we need to perform several simplification steps. First, we will switch to spherical coordinates $\{u,r,\Omega=(\theta,\phi)\}$ for the arguments of the tensor components\footnote{As it is customary in a large part of the gravitational wave community, we will still use a Minkowski basis $\{t,x,y,z\}$ for the index structure of tensor components.} with $\vec{x}=r\vec{n}$, where $\vec{n}$ is the outgoing radial unit vector and $u= t-r$ is the asymptotic retarded time. This will ease an evaluation of the expression close to null infinity, when we take the limit $r\rightarrow\infty$ at fixed $u$. We do the same for the source, such that the integration measure becomes $\dd^4x'\rightarrow \dd u' r'^2\dd r'\dd^2\Omega'$. 

Moreover, like for the leading-order waves, the physical response to the tensor memory outside of the source\footnote{While the source of the null memory is itself constructed out of null waves within the radiation zone, such that both $r$ and $r'$ are large, any location at which the radiative source is nonzero within the past null cone of a point $(t,\vec{x})$ where we evaluate the null-memory component satisfies $r'\ll r$ (see e.g. \cite{Garfinkle:2022dnm}). This also justifies the approximation $t'\approx u+r'\vec{n}'\cdot \vec{n}$.} is captured by the propagating TT component of the metric perturbation $\delta \hat{h}_{ij}^\text{TT}=\delta h_{ij}^\text{TT}$ of the physical metric, as dictated by the general expression given in Eq.~\eqref{eq:LinRiemannPol}. This means that the measurable effect of the tensor null memory is given by a projection of the spatial components of Eq.~\eqref{eq:NLinMem1} onto its transverse-traceless part. 

The final step required to transform Eq.~\eqref{eq:NLinMem1} into a more suitable form for our calculations is to study the behavior of the integrand with respect to $r'$ in order to perform the radial integration. To do this, we note that, while we are interested in the behavior of $\delta\hat{h}_{\mu\nu}$ close to null infinity, the radiative source term itself also needs to be considered far from its own source (i.e.~in the radiation zone), where the outgoing leading-order waves are well defined. In particular, this implies that the leading-order waves satisfy Eq.~\eqref{eq:OutgoingPlaneWave} with respect to the source variable $r'$, such that to leading order in large $r'$, the radiative energy-momentum tensor takes the form (see e.g. \cite{maggiore2008gravitational})\footnote{To $\mathcal{O}\left(r'^{-1}\right)$, we have $t^{\st{SVT}}_{ij}= t_{00} n'_in'_j$, while $t_{i0}= -t_{00}n'_i$. This follows from the general structure $t^{\st{SVT}}_{\mu\nu}\sim\langle \partial_\nu\psi\partial_\mu\psi\rangle$, together with the falloff properties Eq.~\eqref{eq:OutgoingPlaneWave}, which imply that we can replace $\partial_i\rightarrow -n_i\partial_0$. Moreover, in units with $c=1$, we can equate an energy flux of radiation at speed $c$ with an energy density $1/r^2\,dE^{\s{\text{SVT}}}/dud\Omega=c \, t^{\s{\text{SVT}}}_{00}$.}

\begin{equation}\label{eq:EMTAs}
    t_{\mu\nu}(u',r',\Omega')=\frac{1}{r'^2}\frac{dE}{du'd\Omega'}\ell'_\mu \ell'_\nu\,.
\end{equation}
Here, the null vector $\ell_\mu$ is defined as $\ell_\mu\equiv -\nabla_\mu t+\nabla_\mu r$, with $\nabla_i r=n_i$ and the energy flux ${dE}/{du'd\Omega'}$ only depends on retarded time $u'$ and the direction $\Omega'$. This allows us to use the delta function to perform the radial integral and arrive at\footnote{Recall the formula $\delta g(x)=\sum_i\delta(x-x_i)/|g'(x_i)|$, where $g(x_i)=0$ is a root.}
\begin{equation}
\begin{split}
    \delta h_{ij}^\text{TT}(u,r,\Omega)=&\frac{\kappa_\text{eff}}{2\pi}\int_{-\infty}^u\dd u' \int\dd^2\Omega'\,\frac{dE^{\st{SVT}}}{du'd\Omega'}\\
    &\times\left[\frac{\perp_{ijab}(\Omega)n'_an'_b}{r(1-\vec{n}'\cdot\vec{n}(\Omega))+u-u'}\right]\,.
\end{split}
\end{equation}

The limit to null infinity can now be performed rather straightforwardly to yield
\begin{align}\label{NonLinDispMemorySVT}
    \delta h_{ij}^\text{TT}=\,\frac{\kappa_\text{eff}}{2\pi r} \int_{S^2}\dd^2\Omega'\,\frac{dE^{\scriptscriptstyle{\text{SVT}}}}{d\Omega'}\,\left[\frac{n'_in'_j}{1-\vec{n}'\cdot\vec{n}}\right]^\text{TT}\,,
\end{align}
where the superscript TT denotes a projection onto the TT component via $\perp_{ijab}(\Omega)$ defined in Eq.~\eqref{eq:Projectors} and where we define the energy per solid angle as
\begin{equation}\label{eq:EperSolidAngle}
    \frac{dE^{\s{\text{SVT}}}}{d\Omega'}\equiv\int_{-\infty}^u\dd u'\,\frac{dE^{\s{\text{SVT}}}}{du'd\Omega'}=r^2\int_{-\infty}^u\dd u'\,t_{00}^{\s{\text{SVT}}}(u',r,\Omega')\,.
\end{equation}
Note that Eq.~\eqref{NonLinDispMemorySVT} is a well-defined quantity even if $\vec{n}'=\vec{n}$, since in that case the numerator vanishes as well. 

The tensor null memory for SVT gravity can thus be simply evaluated by inserting the expression for the radiative energy density in Eq.~\eqref{eq:StressEnergySecond} 
into the expression for the time-integrated energy flux in Eq.~\eqref{eq:EperSolidAngle}. In terms of canonically normalized ``hatted'' variables, defined in Eqs.~\eqref{RelationPhToh} and \eqref{RescaledPhi}, and the expansion in terms of polarization tensors in Eq.~\eqref{eq:PhysicalPol}, the energy density reads
\begin{align}
    t_{00}^{\s{\text{SVT}}}(u',r,\Omega')
    =\frac{1}{2\kappa_\text{eff}}\,\Big\langle |\dot{\hat{h}}|^2+|\dot{\hat{a}}|^2+\dot{\hat{\varphi}}^2\Big\rangle\,.\label{eq:StressEnergyFourthHat}
\end{align}
However, the physical (in the sense of observationally relevant) modes are characterized in terms of the perturbations of the original fields that appear in the full action. Indeed, it is the scalar perturbation $\varphi$ that is associated with a potentially observable additional breathing mode, as shown in Eq.~\eqref{ElectricRiemannLin}. Thus, in terms of the physical modes, the radiative energy density becomes
\begin{align}
    t_{00}^{\scriptscriptstyle{\text{SVT}}}(u',r,\Omega')
    =\frac{1}{2\kappa_\text{eff}}\Bigg\langle |\dot{h}|^2+\zeta^2\,|\dot{a}|^2+\rho^2\,\dot{\varphi}^2\Bigg\rangle\,,\label{eq:StressEnergyFourth}
\end{align}
where we recall the definitions
\begin{equation}
    \rho\equiv\sqrt{3\,\sigma^2+\frac{(^{\s{(0)}}G_{2X}-2\,^{\s{(0)}}G_{3\Phi})}{^{\s{(0)}}G_4}}\,,
\end{equation}
and
\begin{equation}
    \sigma\equiv \frac{^{\s{(0)}}G_{4\Phi}}{^{\s{(0)}}G_{4}}\,\,,\quad\zeta\equiv \sqrt{\frac{^{\s{(0)}}G_{2F}}{\phantom{}^{\s{(0)}}G_4}}\,.
\end{equation}

Equations~\eqref{NonLinDispMemorySVT} and~\eqref{eq:StressEnergyFourth} define the tensor null memory of massless SVT gravity. These equations indeed represent a memory component, as one can simply infer from their functional form. That is, these expressions contain a time integral over a strictly positive function, which therefore induces a permanent displacement within the detector strain. Observe that this tensor null memory is sourced by \textit{all} radiative degrees of freedom in the SVT theory, independent of whether these radiative modes excite additional gravitational polarizations in the physical metric or not. More precisely, a nonzero value of $\sigma$, which implies that the breathing mode is excited (recall the discussion in Sec.~\ref{GravPol}), only influences the value of the scalar prefactor $\rho$, but it does not determine whether the emitted scalar radiation provides an additional tensor memory source in principle. Moreover, the energy density emitted in vector modes backreacts to produce tensor memory, even though they are in no way connected to any gravitational polarizations of the physical metric.

\subsection{Tensor null memory: Spin-weighted spherical harmonic decomposition\label{sec:SWSHTensorMemorySVT}}
The tensor null-memory expression in Eq.~\eqref{NonLinDispMemorySVT} above can still be further simplified for practical use. We do so by first defining the spin-weight $s=-2$ quantity [recall Eq.~\eqref{Scalarht}]
\begin{equation}
    \delta h\equiv \delta h_{ij}^\text{TT}\bar m_i\bar m_j\,,
\end{equation}
and then determining the value of the coefficients $\delta h^{lm}$ of its expansion in spin-weighted spherical harmonics (see Appendixes~\ref{Ap:SWSH} and \ref{Ap:SWSHExpansion}). Writing the memory in this alternative form will ultimately also allow a direct comparison to the memory arising from BMS balance laws or within a systematic PN expansion, as we will see below. 

The simplest way to obtain the spin-weighted spherical harmonic coefficients is to first expand the TT-projected term in the square brackets of Eq.~\eqref{NonLinDispMemorySVT} as a geometric series and then transform the result to a symmetric trace-free (STF) basis, which can subsequently be related to the spin-weighted spherical harmonic expansion. In Appendix~\ref{SWSH} we gathered a collection of formulas for different multipole expansions and the relations between them, and also introduce the notation we use for STF tensors (see Appendix~\ref{Ap:STF}). As we show explicitly in Appendix~\ref{DerivationEq} the transformation of the TT-projected square brackets in Eq.~\eqref{NonLinDispMemorySVT} in terms of STF tensors results in the identity [see also Eq.~(2.34) in \cite{BlanchetPaper}]
\begin{equation}\label{eq:RelBlanchet}
   \left[\frac{n'_in'_j}{1-\vec{n}'\cdot\vec{n}}\right]^\text{TT}=\,\perp_{ijab}\sum_{l=2}^\infty \frac{2(2l+1)!!}{(l+2)!}\,n_{L-2}n'_{\langle abL-2\rangle}\,,
\end{equation}
such that 
\begin{equation}\label{NonLinDispMemoryST}
\begin{split}
    \delta h_{ij}^\text{TT}=\frac{\kappa_\text{eff}}{2\pi r}&\perp_{ijab}\sum_{l=2}^\infty \frac{1}{l !}\,n_{L-2}\,\frac{2(2l+1)!!}{(l+1)(l+2)}\\
    &\times\int_{S^2}\dd^2\Omega'\,\frac{dE^{\scriptscriptstyle{\text{SVT}}}}{d\Omega'}(u)\,n'_{\langle abL-2\rangle}\,.
\end{split}
\end{equation}
By comparing to the general STF multipole expansion of a rank-2 TT tensor written out in Eq.~\eqref{eq:AExpansionULVL}, we immediately see that the memory only contributes via the electric-parity multipole, namely
\begin{align}
    \delta U_{L}&=\frac{\kappa_\text{eff}}{8\pi r}\frac{2(2l+1)!!}{(l+1)(l+2)}\int_{S^2}\dd^2\Omega'\,\frac{\dd E^{\scriptscriptstyle{\text{SVT}}}}{\dd \Omega'}(u)\,n'_{\langle L\rangle}\,,\\
    \delta V_{L}&=0\,,
\end{align}
where we have relabeled $ijL-2\rightarrow L$ through multi-index notation.

A change to the pure-spin TT harmonic basis using Eq.~\eqref{eq:AUVlmToULVL} as well as Eq.~\eqref{eq:AYlmToNL} then yields\footnote{The memory computed in a PN expansion assumes precisely this form, as explicitly shown in GR \cite{Blanchet:1992br} (see also \cite{Favata:2010zu}). However, we define the mass multipole without factoring out the $r^{-1}$ dependence. Note as well that we could have obtained Eq.~\eqref{ResU} more directly by using the identity in Eq.~\eqref{eq:AUlmCalc}.}
\begin{equation}\label{ResU}
    \delta U^{l m}=\frac{4\kappa_\text{eff}}{r}\sqrt{\frac{(l-2)!}{2(l+2)!}}\int_{S^2}\dd^2 \Omega' \,\frac{\dd E^{\s{\text{SVT}}}}{\dd \Omega'}(u)\,\bar{Y}^{l m}(\Omega')\,.
\end{equation}
This expression can finally be related to the spin-weighted spherical harmonic expansion through Eq.~\eqref{eq:AUVlmToHlm} to give
\begin{align}
\begin{split}\label{NonLinDispMemoryModesSVT}
  \delta h^{lm}_{\st{SVT}}=&\,\frac{2\kappa_\text{eff}}{r}\sqrt{\frac{(l-2)!}{(l+2)!}}\int_{-\infty}^u\dd u'\\
  &\times\int_{S^2}\dd^2\Omega'\,\bar{Y}^{lm}(\Omega')\,r^2\,t^{\st{SVT}}_{00}(u',r,\Omega')\,.
\end{split}
\end{align}
where $t^{\st{SVT}}_{00}$ is given by Eq.~\eqref{eq:StressEnergyFourth}. Note that the angular integral in this expression can be evaluated analytically as a sum of 3$j$ symbols by simply also expanding the leading-order waves in spin-weighted spherical harmonics and applying the identity in Eq.~\eqref{SWSHTrippleInt}, which involves three spin-weighted spherical harmonics.\footnote{This could in principle simplify the widely used \textsc{GWMemory} python package for memory computations in GR \cite{Talbot:2018sgr}, where the angular integral is partially performed numerically. We also note that the \textsc{GWMemory} code does not perform any space-time averaging.}


\subsection{Interesting examples}
\label{subsec:examples}
We close this section with a few explicit results for the tensor null memory of interesting subclasses of Horndeski gravity, in order to further exemplify the broad scope of the SVT theory investigated above.

\subsubsection{Brans-Dicke gravity} 
Horndeski gravity reduces to BD theory for the choices
\begin{equation}\label{eq:BDGs}
\begin{split}
    G_2&=\frac{2\omega}{\Phi} X\,\\
    G_4&=\Phi\,,\\
    G_i&=0 \;\;\text{otherwise}\,,
\end{split}
\end{equation}
such that the action given in Eq.~\eqref{ActionHorndeski} becomes
\begin{equation}\label{ActionBD}
    S^{\st{BD}}=\frac{1}{2\kappa_0}\int \dd^4 x\sqrt{-g}\left(\Phi R-\frac{\omega}{\Phi}g^{\mu\nu}\nabla_\mu\Phi\nabla_\nu\Phi\right)\,,
\end{equation}
with physical metric $g_{\mu\nu}$ and where $\omega$ is a coupling constant.
Inserting Eq.~\eqref{eq:BDGs} into Eq.~\eqref{eq:StressEnergyFourth}, the corresponding energy-momentum (pseudo)tensor sourcing the tensor memory reads 
\begin{equation}\label{sourceTensorMemoryBD}
    t_{00}^{\st{BD}}=\frac{1}{2\kappa^{\st{BD}}_{\text{eff}}}\Bigg\langle |\dot{h}|^2+(2\omega+3)\left(\frac{\dot{\varphi}}{\varphi_0}\right)^2\Bigg\rangle\,,
\end{equation}
where $\kappa^{\st{BD}}_{\text{eff}}=\kappa_0/\varphi_0$, such that Eq.~\eqref{NonLinDispMemoryModesSVT} becomes
\begin{equation}\label{NonLinDispMemoryModesBD}
\begin{split}
  \delta h^{lm}_{\st{BD}}=&r\sqrt{\frac{(l-2)!}{(l+2)!}}\int_{S^2}\dd^2\Omega'\,\bar{Y}^{lm}(\Omega')\\
  &\times\int_{-\infty}^u\dd u'\Bigg\langle\,|\dot{h}|^2+(2\omega+3)\left(\frac{\dot\varphi}{\varphi_0}\right)^2\Bigg\rangle\,.
\end{split}
\end{equation}
As we show in Appendix~\ref{MatchToAsymptoticsBD}, the result in Eq.~\eqref{NonLinDispMemoryModesBD} precisely matches the memory extracted from the BMS balance laws in BD theory, which were previously computed in \cite{hou_gravitational_2021,tahura_brans-dicke_2021,hou_conserved_2021,hou_gravitational_2021_2}, and therefore corrects the formula for the tensor memory of BD gravity of \cite{du_gravitational_2016}.

Moreover, note that in this theory 
\begin{equation}
\sigma=\frac{^{\s{(0)}}G_{4\Phi}}{^{\s{(0)}}G_{4}}=\frac{1}{\varphi_0}\neq 0\,,
\end{equation}
such that according to the results in Sec.~\ref{GravPol}, BD gravity has an additional breathing polarization. As discussed above, this fact only minimally modifies the memory formula of Eq.~\eqref{NonLinDispMemoryModesBD}, since $\omega$ is already constrained to be a large number (e.g.~$\omega > 4 \times 10^4$ due to constraints from the tracking of the Cassini spacecraft and the Shapiro time delay~\cite{Bertotti:2003rm}).
However, the existence of such an additional scalar polarization makes it in principle possible to also measure scalar memory, hence memory within the scalar polarization of the detector response. Yet, as already mentioned, we do not find any analogous scalar null-memory component because there is no analogous null source for the scalar mode.
This agrees with the results in \cite{hou_gravitational_2021,tahura_brans-dicke_2021,hou_conserved_2021,hou_gravitational_2021_2} as they do not find a full BMS constraint for the scalar, which indicates, in the terminology of \cite{tahura_brans-dicke_2021}, that the scalar displacement contribution to the Riemann tensor that can potentially arise through other mechanisms\footnote{See for instance \cite{du_gravitational_2016} where a nonvanishing shift in the scalar is shown to arise as a consequence of the no-hair theorem.} is rather a more general persistent gravitational variable, as opposed to a true memory component.

\subsubsection{Scalar Gauss-Bonnet gravity}\label{sec:sGB}
SGB theory can be obtained by choosing \cite{Kobayashi:2011nu,Kobayashi:2019hrl}
\begin{equation}\label{eq:CorrespondencesGBHorndeski}
    \begin{split}
        G_2&=X+8f^{(4)}(\Phi)X^2(3-\ln X)\,,\\ G_3&=4f^{(3)}(\Phi)X(7-3\ln X)\,,\\
        G_4&=1+4f^{(2)}(\Phi)X(2-\ln X)\,,\\
        G_5&=-f^{(1)}(\Phi)\ln X\,,
    \end{split}
\end{equation}
where $f^{(n)}(\Phi)\equiv\partial^n f/\partial\Phi^n$. This leads to a sGB theory action\footnote{Note that while this correspondence is not obvious at the level of the action, the resulting equations of motion are equivalent.}
\begin{equation}\label{ActionsGB}
\begin{split}
    S^{\st{sGB}}=\frac{1}{2\kappa_0}\int \dd^4 x\sqrt{-g}\bigg(& R-\frac{1}{2}g^{\mu\nu}\nabla_\mu\Phi\nabla_\nu\Phi\\
    &+\epsilon^2\,f(\Phi)\,\mathcal{G}\bigg)\,,
\end{split}
\end{equation}
where the Gauss-Bonnet curvature scalar is defined as $\mathcal{G}\equiv -\tilde R\ud{\mu\nu}{\rho\sigma} \tilde R\ud{\rho\sigma}{\mu\nu} =R^{\mu\nu\rho\sigma}R_{\mu\nu\rho\sigma}-4R^{\mu\nu}R_{\mu\nu}+R^2$ and the Hodge dual reads $\tilde R\ud{\mu\nu}{\rho\sigma}\equiv\frac{1}{2}\epsilon^{\mu\nu\alpha\beta}R_{\alpha\beta\rho\sigma}$. To lowest nontrivial order, the theory represents GR with a canonical scalar field, while for constant values of $\Phi$ the entire theory reduces to GR because the Gauss-Bonnet term integrates to a boundary term. Although at first glance the correspondence of sGB to Horndeski theory given by Eq.~\eqref{eq:CorrespondencesGBHorndeski} could suggest that for nontrivial functions $f(\Phi)$ the sGB term could actually contribute nontrivially to the memory, this is not the case. A closer look reveals that
\begin{equation}
\begin{split}
    &3\frac{^{\s{(0)}}G^2_{4\Phi}}{^{\s{(0)}}G^2_4}+\frac{(^{\s{(0)}}G_{2X}-2\,^{\s{(0)}}G_{3\Phi})}{^{\s{(0)}}G_4}\Bigg\lvert_\text{sGB}\\
    &=\lim_{X\rightarrow 0}\;1-4 w(X)\left(f^{(2)}(\varphi_0)+2f^{(4)}(\varphi_0)\right)X+...=1,
\end{split}
\end{equation}
where we defined $w(X)\equiv 2-\ln X$ as well as $f^{(N)}$ for the $N$th derivative of $f$. Hence, the higher-order sGB term does not modify the memory formula and the theory simply contributes through the canonical scalar term within Eq.~\eqref{eq:StressEnergyFourth} as
\begin{equation}\label{eq:sGBenergymomentum}
    t_{00}^{\st{sGB}}
    =\frac{1}{2\kappa}\bigg\langle |\dot{ h}|^2+\,\dot{\varphi}^2\bigg\rangle\,.
\end{equation}
Indeed, in Sec.~\ref{sec:GeneralFormula}, we will show that any term in the action involving more than two derivative operators will not modify the tensor memory in an explicit way.

Furthermore, since 
\begin{equation}
\sigma=\frac{^{\s{(0)}}G_{4\Phi}}{^{\s{(0)}}G_{4}}=0\,,
\end{equation}
sGB gravity does not excite any additional scalar polarizations (breathing or longitudinal) within the physical metric, as opposed to the BD theory considered above. This result is consistent with the analysis of the polarization states of sGB theory in~\cite{Wagle:2019mdq}. This also implies that sGB by definition only features tensor null memory.


\subsubsection{Double-Dual Riemann coupling}
Similarly, choosing \cite{Narikawa:2013pjr,Kobayashi:2019hrl}
\begin{equation}\label{eq:CorrespondenceddR}
    \begin{split}
        G_2&=X\,,\\
        G_5&=X\,,\\
        G_i&=0 \;\;\text{otherwise}\,,
    \end{split}
\end{equation}
the Horndeski action reduces to
\begin{equation}\label{ActionddR}
\begin{split}
    S^{\st{ddR}}=\frac{1}{2\kappa_0}\int \dd^4 x\sqrt{-g}\bigg( R&+\nabla_\mu\Phi \nabla_\nu\Phi\Phi_{\alpha\beta}L^{\mu\nu\alpha\beta}\\
    &-\frac{1}{2}g^{\mu\nu}\nabla_\mu\Phi\nabla_\nu\Phi\bigg)\,,
\end{split}
\end{equation}
where the derivative coupling to the double-dual Riemann tensor defined in Eq.~\eqref{eq:ddR}
therefore also leads to second-order equations of motion. Since this coupling is associated with a nontrivial $G_5$ function only, we immediately know that such an additional term will not alter the tensor null-memory formula. Hence, exactly as for sGB, the memory is sourced by Eq.~\eqref{eq:sGBenergymomentum} and only modified due to the kinetic term of the scalar, and thus by the mere presence of the possibly propagating scalar mode.



\section{\label{sec:GeneralFormula}Tensor Null Memory in Dynamical Metric Theories}

We will now generalize the result of the explicit null-memory computation of massless SVT theory in the previous section by proving that, assuming the scenario outlined in Sec.~\ref{sec:IsaacsonBeyondGR}, the functional form of the low-frequency evolution equation at the basis of the tensor null memory remains unaltered in the limit to null infinity for any dynamical metric theory with trivial potentials (see Definition~\ref{DefMetricTheory}), given a set of minimal assumptions outlined in Sec.~\ref{sec:StatementOfClaim} below. Together with an additional assumption on the structure of the radiative energy-momentum tensor in the limit to null infinity, this further implies directly that the tensor null memory can still be written as [cf. Eq.~\eqref{NonLinDispMemoryModesSVT}]
\begin{equation}\label{NonLinDispMemoryGen3}
     \delta h^{l m}_L=r \sqrt{\frac{(l-2)!}{(l+2)!}}\int \dd^2 \Omega'\,\bar{Y}^{l m}\int\dd u' \,\phantom{}_{\s(2)}t_{00}\,,
\end{equation}
where $\phantom{}_{\s(2)}t_{\mu\nu}$ is a conserved and gauge-invariant generalized energy-momentum (pseudo)tensor of the leading-order radiation in the theory. This (pseudo)tensor contains a purely GR piece in the form of an Isaacson energy-momentum tensor and a correction thereof that consists of a sum of terms, which exclusively depend on high-frequency perturbations of the additional gravitational fields. 

In a second step, we will explore the scope of the theorem in Sec.~\ref{sec:ScopeOfClaim} by outlining which classes of theories actually satisfy the underlying assumptions. 
Before stating and proving the theorem which allows us to draw the above conclusions, however, we need to first introduce the so-called \textit{second-variation} approach \cite{Maccallum:1973gf}, which we present in the next subsection.

\subsection{The second-variation approach}\label{sec:IsaacsonSecondVariation}

In the second-variation framework, the low-frequency metric equation [Eq.~\eqref{eq:EOMIS}],
as well as the propagation equations  [Eqs.~\eqref{eq:EOMIIS} and \eqref{eq:EOMIISPsi}]
can be derived from a perturbed action. To do so, one first separates the metric, as well as any other field, into a background and a perturbation, as in Eq.~\eqref{eq:IsaacsonSplit}, and expands the action $S$ of the theory to second order in perturbation fields
\begin{equation}
    S^{\st{sv}}=\phantom{}_{\s{(0)}}S+\phantom{}_{\s{(1)}}S+\phantom{}_{\s{(2)}}S\,.
\end{equation}
With this in hand, one then promotes the action $S^{\st{sv}}$ to an effective one $S^{\st{sv}}\rightarrow S^{\st{sv}}_\text{eff}$ by treating the background fields and the perturbations as independent fields.

Indeed, one recovers the low-frequency metric equation [Eq.~\eqref{eq:EOMIS}] by varying the effective action
\begin{align}
    S^{\st{sv}}_\text{eff}[\{\bar g^L,\bar\Psi^L\},\{h^H,\Psi^H\}]&=\phantom{}_{\s{(0)}}S_\text{eff}[\{\bar g^L,\bar\Psi^L\}]\label{eq:SecondVariationAction}\\
    &+\phantom{}_{\s{(2)}}S_\text{eff}[\{\bar g^L,\bar\Psi^L\},\{h^H,\Psi^H\}]\,,\nonumber
\end{align}
with respect to the independent background metric $\bar g^L_{\mu\nu}$ and a subsequent restriction to low-frequency quantities. More precisely,
\begin{equation}\label{eq:MemEqAction}
       \left[\frac{\delta S^{\st{sv}}_\text{eff}}{\delta \bar g_L^{\mu\nu}}\right]^L=0
\end{equation}
gives back Eq.~\eqref{eq:EOMIS}, where after the variation, the background metric is not treated as an independent variable anymore. Note that the linear piece $\phantom{}_{\s{(1)}}S_\text{eff}$ in Eq.~\eqref{eq:SecondVariationAction} can always be neglected, because a term linear in perturbation fields generated through a variation with respect to the background metric will vanish upon restriction to low-frequency modes [see property (I) of the average procedure in Sec.~\ref{sec:IsaacsonGR}].

Let us see how this comes about explicitly. The variation of the zeroth-order action $\phantom{}_{\s{(0)}}S_\text{eff}$ simply gives back the background equation operator of the theory
\begin{align}
    \frac{\delta\phantom{}_{\s{(0)}}S_\text{eff}}{\delta \bar g_L^{\mu\nu}}\propto\phantom{}_{\s{(0)}}\mathcal{G}_{\mu\nu}\,,
\end{align}
and thus the left-hand-side in Eq.~\eqref{eq:EOMIS}. A variation of $\phantom{}_{\s{(2)}}S_\text{eff}$, on the other hand, provides a definition for the effective energy-momentum (pseudo)tensor of the perturbation fields via the Hilbert approach, namely
\begin{equation}\label{eq:SecondOrderVariationEMTensor}
    \frac{-2}{\sqrt{-\bar g^L}}\Bigg\langle\frac{\delta \phantom{}_{\s{(2)}}S_\text{eff}}{\delta \bar g_L^{\mu\nu}}\Bigg\rangle\equiv \phantom{}_{\s{(2)}}t_{\mu\nu}
\end{equation}
upon averaging out the small scales. We have already established that the right-hand side of Eq.~\eqref{eq:EOMIS} is proportional to the energy-momentum tensor of the perturbation fields in Eq.~\eqref{eq:RHSToEM}. From that point of view, Eq.~\eqref{eq:EOMIS} can therefore be understood as a dynamical equation of the \textit{a priori} unknown background, sourced by the effective energy-momentum tensor of the perturbation fields.

The propagation equations for the gravitational waves and any other perturbed field can also be obtained through this variational approach. The metric equation [Eq.~\eqref{eq:EOMIIS}], as well as all other high-frequency propagation equations of the additional gravitational fields [Eq.~\eqref{eq:EOMIISPsi}], are recovered by varying $S^{\st{sv}}_\text{eff}$ in Eq.~\eqref{eq:SecondVariationAction} with respect to the corresponding field perturbation. This is because, in general, a variation of a perturbed action with respect to a perturbation field yields the same equation that one obtains by perturbing the total field equations computed from the full action \cite{Maccallum:1973gf,Taub1971}. Moreover, the action in Eq.~\eqref{eq:SecondVariationAction} is enough, because we can again safely neglect any linear piece $\phantom{}_{\s{(1)}}S_\text{eff}$;  a variation of such a piece would not contain any high-frequency components.\footnote{In fact, a variation of $\phantom{}_{\s{(1)}}S_\text{eff}$ with respect to a perturbation field would simply give back the corresponding background equation, and therefore, does not contain any additional information.} Hence, the leading-order propagation equations [Eqs.~\eqref{eq:EOMIIS} and \eqref{eq:EOMIISPsi}] can be obtained from
\begin{equation}\label{eq:PropEqAction}
       \left[\frac{\delta S^{\st{sv}}_\text{eff}}{\delta h^{H\mu\nu}}\right]^H=0\,,\quad  \left[\frac{\delta S^{\st{sv}}_\text{eff}}{\delta \Psi^H}\right]^H=0\,.
\end{equation}

In the following, this approach will enable us to generalize the result for tensor null memory in SVT theories (explicitly computed in the previous section) and will allow us to explore the boundaries of the validity of the associated functional form of tensor null memory. More precisely, we will show that, after further splitting the low-frequency fields into a background and perturbation as in Eq.~\eqref{eq:IsaacsonSplitBackground} and in the limit to null infinity, the low-frequency metric equation [Eq.~\eqref{eq:EOMIS}] that determines the null-memory component very generally takes on the simple form that we derived for SVT gravity. That is, the equation reduces to a massless wave equation for the memory component sourced by the energy-momentum of the high-frequency waves [cf. Eq.~\eqref{eq:SecondOrderEOMMem}]. In other words, the equation remains the same, except for the proportionality coefficient and the energy-momentum tensor, which is replaced by a corresponding generalization that is still conserved. These statements will be made precise through the formulation of an associated Theorem~\ref{Theorem1}.

In particular, the proof of Theorem~\ref{Theorem1} will be based on the key observation that in the limit to null infinity, both the propagation equations [Eqs.~\eqref{eq:EOMIIS} and \eqref{eq:EOMIISPsi}], as well as the low-frequency energy-momentum tensor in Eq.~\eqref{eq:EOMIS} (see Lemma \ref{LemmaP}), actually only depend on the second-order effective action 
    \begin{equation}\label{eq:flatSA}
    \phantom{}_{\s(2)}S^{\s{flat}}_\text{eff}\equiv \phantom{}_{\s(2)}S_\text{eff}[\{\eta,\Psi_0\},\{h^H,\Psi^H\}]
\end{equation}
evaluated on a \textit{flat}, but still independent background $\eta_0\equiv\{\eta_{\mu\nu},\Psi_0\}$. By flat, here we mean a metric with a vanishing curvature, as well as the existence of coordinates, in which the metric and the fields satisfy $\partial_\alpha \eta_{\mu\nu}=\partial_\alpha\Psi_0=0$, so that, in particular, the Christoffel symbols vanish as well.\footnote{For GR, more precisely with the assumption of vanishing torsion and nonmetricity (assumptions that are therefore also satisfied in our case) it can be explicitly proven, that the vanishing of the Riemann tensor implies that for every point on the manifold, there exists a chart, in which the metric can be written as a Minkowski metric. However, we do not prove here the additional requirement $\partial_\alpha\Psi_0=0$, which should therefor be regarded as an additional assumption.}

Moreover, in the subsequent derivation of the tensor null-memory formula of SVT theories in Sec.~\ref{sec:DerTensorMemorySVT} [starting from Eq.~\eqref{eq:SecondOrderEOMMem}], the radiative energy momentum tensor essentially only played the role of a spectator. Based on Theorem~\ref{Theorem1}, this observation will then allow us to conclude that for a very broad class of dynamical metric theories of gravity, the memory formula remains of the same functional form as in Eq.~\eqref{NonLinDispMemoryModesSVT}.

\subsection{A theorem for the evolution of memory in dynamical metric theories}\label{sec:StatementOfClaim}

We want to start with a more formal definition of the dynamical metric theories we will focus on. Based on the framework outlined in Secs.~\ref{sec:Isaacson} and~\ref{sec:IsaacsonSecondVariation} above, we will then first prove Lemma~\ref{LemmaP}, based on which we establish Theorem~\ref{Theorem1} that restricts the functional form of the leading-order, low-frequency metric equation [Eq.~\eqref{eq:EOMIS}] of such theories in source-centered coordinates describing the asymptotic null region.

\subsubsection{Preliminary definitions and lemmas}

\begin{definition} \textbf{Dynamical Metric Theory.}\label{DefMetricTheory}
    Let $\mathscr M$ be a four-dimensional oriented and differentiable pseudo-Riemannian manifold equipped with a Lorentzian metric $g$ of signature (-1,+1,+1,+1) and an associated Levi-Civita connection $\Gamma$. By \textup{dynamical metric theory} we shall mean a local and covariant Lagrangian-metric theory on $\mathscr M$ described by an action
    \begin{equation}\label{eq:ActionMetricTheory}
        S=\frac{1}{2\kappa_0}\int \dd^4x\,\sqrt{-g} \left(L_\text{G}[g,\Psi]+L_\text{m}[g]\right)
    \end{equation}
    consisting of a matter Lagrangian $L_\text{m}$ minimally coupled to the metric $g$ only, and a gravitational Lagrangian $L_\text{G}$ consisting of scalar terms, each involving at least two derivative operators, that are constructed out of curvature invariants as well as a set of additional dynamical gravitational fields $\Psi$ that preserve local Lorentz invariance, and whose dynamical degrees of freedom are massless.\footnote{For fields that are not a scalar, we will also assume a local gauge invariance. However, this does not represent an additional assumption, since for theories with local-Lorentz-invariance, gauge invariance is in general needed to obtain a description of massless fields via components of tensor fields (see e.g. \cite{Schwartz:2014sze}).}
\end{definition}

We will generally distinguish between three classes of dynamical metric theories. The first ones are those that only involve terms in the action with two derivative operators. On a classical level, such theories can be considered to be \textit{exact}, in the sense that they could be, in principle, well posed as an initial value problem for any values of their coupling constants. 

The other two classes encompass theories with more than two derivative operators per term, which typically represent theories admitting higher-order derivative interaction terms. Between those two classes, we further distinguish between theories that still admit second-order equations of motion and theories which lead to higher equations of motion. As discussed in Sec.~\ref{sec:IsaacsonBeyondGR}, such theories require the additional assumption of small couplings $\epsilon$, and they are naturally interpreted as effective field theories. In our context, a small-coupling approximation is satisfied for theories admitting second order equations, if the assumption in Eq.~\eqref{eq:SmallParam4} holds. Moreover, for theories with higher-order equations of motion, we must in principle demand the stronger condition Eq.~\eqref{eq:SmallParam5}. Thus, for theories of the last class, Lemma~\ref{LemmaP} below, as well as the statements in the subsequent Theorem~\ref{Theorem1} need to be understood as statements up to factors of $\mathcal{O}(\epsilon f_H)$. 

\begin{lemma}\label{LemmaP}
    In the limit to null infinity
    \begin{equation}
    \frac{-2}{\sqrt{-\eta}} \Bigg\langle\frac{\delta \phantom{}_{\s{(2)}}S^{\s{flat}}_{\text{eff}}}{\delta \eta^{\mu\nu}}\Bigg\rangle= \phantom{}_{\s{(2)}}t_{\mu\nu}[\eta_0,\{h^H,\Psi^H\}]\,,
     \end{equation}
    and hence, the leading-order energy-momentum tensor [Eq.~\eqref{eq:SecondOrderVariationEMTensor}] only depends on the flat, second-order effective action defined in Eq.~\eqref{eq:flatSA}.
\end{lemma}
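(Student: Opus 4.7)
The plan is to reduce the right-hand side of the definition in Eq.~\eqref{eq:SecondOrderVariationEMTensor} to the left-hand side of the lemma by systematically controlling, order by order, the corrections that arise when the full low-frequency background $\{\bar g^L,\bar\Psi^L\}$ is replaced by its flat, time-independent piece $\eta_0=\{\eta_{\mu\nu},\Psi_0\}$.

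First, I would insert the background split of Eq.~\eqref{eq:IsaacsonSplitBackground} into $\phantom{}_{\s{(2)}}S_\text{eff}[\{\bar g^L,\bar\Psi^L\},\{h^H,\Psi^H\}]$ and formally Taylor-expand the integrand in powers of the slowly varying perturbations $\delta h^L$ and $\delta\Psi^L$. Because $\phantom{}_{\s{(2)}}S_\text{eff}$ is quadratic in the high-frequency fields $\{h^H,\Psi^H\}$ by construction, this expansion produces the flat piece $\phantom{}_{\s{(2)}}S^{\s{flat}}_\text{eff}$ of Eq.~\eqref{eq:flatSA} plus a tower of corrections, each still quadratic in $\{h^H,\Psi^H\}$ but multiplied by extra insertions of $\mathcal{O}(\beta)$-suppressed background perturbations. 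Using the relation Eq.~\eqref{eq:BetaParam} together with the hierarchy Eq.~\eqref{eq:hierarchy-inequ}, these contributions to $\phantom{}_{\s{(2)}}t_{\mu\nu}$ are strictly subleading with respect to the $\mathcal{O}(\alpha^2 f_H^2)$ order at which Eq.~\eqref{eq:EOMIS2} is being solved, and can be consistently dropped.

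Next, I would take the limit to null infinity, which reduces $\bar\eta_{\mu\nu}$ to the Minkowski metric $\eta_{\mu\nu}$ and $\bar\Psi_0$ to the asymptotic value $\Psi_0$, with the measure $\sqrt{-\bar g^L}$ reducing to $\sqrt{-\eta}$ up to the same $\mathcal{O}(\beta)$ corrections. In source-centered coordinates the flat-background conditions $\partial_\alpha\eta_{\mu\nu}=0$ and $\partial_\alpha\Psi_0=0$ kill any residual terms in which derivatives would have acted on background quantities, so that the variation $\delta/\delta \bar g_L^{\mu\nu}$ commutes with the flattening procedure. Averaging via $\langle\cdot\rangle$ and using the standard properties of the averaging kernel recalled in Sec.~\ref{sec:IsaacsonGR} (vanishing of odd products of short-wavelength quantities and of total derivatives) then yields exactly the claimed identity between $-\tfrac{2}{\sqrt{-\eta}}\langle\delta\phantom{}_{\s{(2)}}S^{\s{flat}}_\text{eff}/\delta\eta^{\mu\nu}\rangle$ and $\phantom{}_{\s{(2)}}t_{\mu\nu}[\eta_0,\{h^H,\Psi^H\}]$.

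The main obstacle will be handling theories containing operators with more than two derivatives per field. In those cases, the variation can produce terms with many covariant derivatives acting on both high- and low-frequency quantities, and one must verify that such higher-derivative contributions cannot compete with the leading $\mathcal{O}(\alpha^2 f_H^2)$ piece. This is precisely where the small-coupling assumption Eq.~\eqref{eq:SmallParam4} for theories with second-order field equations, and the stronger version Eq.~\eqref{eq:SmallParam5} for theories with genuine higher-order field equations, must be invoked, so that every extra derivative carried by an internal coupling $\epsilon$ comes with a factor $\epsilon f_H\lesssim 1$ or $\epsilon f_H\ll 1$ and the entire tower of subleading corrections remains under control.
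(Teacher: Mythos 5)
Your overall strategy differs from the paper's and, as written, contains a genuine gap at its central step. The paper's proof does not proceed by Taylor expansion in $\delta h^L$; it asks directly which terms in the non-flat second-order action ${}_{(2)}S_{\text{eff}}$ could make $\langle\delta{}_{(2)}S_{\text{eff}}/\delta\bar g_L^{\mu\nu}\rangle$ differ from $\langle\delta{}_{(2)}S^{\text{flat}}_{\text{eff}}/\delta\eta^{\mu\nu}\rangle$. Any such term must carry an explicit background curvature invariant or connection coefficient. Its variation either (a) hits that curvature or connection factor, in which case, after integration by parts, it produces a total covariant divergence $\bar\nabla_\sigma P^{\sigma}{}_{\mu\nu}$ whose average vanishes by property (II) of the averaging procedure, or (b) hits an explicit background metric, in which case the leftover still contains a curvature or connection factor and therefore vanishes in the limit to null infinity.

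Your proposal misses mechanism (a). You assert that the flat-background conditions $\partial_\alpha\eta_{\mu\nu}=0$ and $\partial_\alpha\Psi_0=0$ ``kill any residual terms in which derivatives would have acted on background quantities, so that the variation commutes with the flattening procedure.'' That is not where the obstruction lies. Consider a term such as $\sqrt{-\bar g^L}\;{}_{(0)}R[\bar g^L]\,\varphi^H\varphi^H$ inside ${}_{(2)}S_{\text{eff}}$: it is absent from ${}_{(2)}S^{\text{flat}}_{\text{eff}}$, yet its variation with respect to $\bar g_L^{\mu\nu}$ contains $\left(\bar\nabla_\mu\bar\nabla_\nu-\bar g^L_{\mu\nu}\,\bar\nabla_\alpha\bar\nabla^\alpha\right)(\varphi^H\varphi^H)$, where the derivatives act on the \emph{high-frequency} fields and survive setting the background flat afterwards. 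The variation therefore does not commute with flattening pointwise; the equality only holds under the average, precisely because such leftovers are total covariant divergences. You do invoke ``vanishing of total derivatives'' in your final averaging step, but it is not wired into the argument at the place where it is needed, and the commutation claim as you justify it is false. The $\mathcal{O}(\beta)$ Taylor expansion and the small-coupling bookkeeping in your first and last paragraphs are legitimate but peripheral: the paper absorbs the $\bar g^L\rightarrow\eta$ replacement into the limit to null infinity and into the leading-order truncation already established in Sec.~\ref{sec:IsaacsonBeyondGR}.
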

\begin{proof}
    The energy-momentum tensor was defined in Eq.~\eqref{eq:SecondOrderVariationEMTensor}, and therefore, the lemma is proven if we can show that in the limit to null infinity 
    \begin{equation}
        \Bigg\langle\frac{\delta \phantom{}_{\s{(2)}}S^{\s{flat}}_\text{eff}}{\delta \eta^{\mu\nu}}\Bigg\rangle = \Bigg\langle\frac{\delta \phantom{}_{\s{(2)}}S_\text{eff}}{\delta \bar g^{L\mu\nu}}\Bigg\rangle\,,
    \end{equation}
    where recall here that the background metric must be treated as an independent field.  
    Since we only consider local and covariant theories with Levi-Civita connection, a variation of the second-order action $\phantom{}_{\s{(2)}}S_\text{eff}$ with respect to the independent background metric $\bar g^L_{\mu\nu}$ will only arise from polynomial contributions of the background metric, the associated covariant derivatives $\bar\nabla_\mu$ through its Christoffel symbols $\bar\Gamma\ud{\mu}{\alpha\beta}$, and contractions of background curvature invariants $_{\s{(0)}}R_{\mu\nu\rho\sigma}[\bar g^L]$, the last two of which vanish for a flat metric by definition. To prove Lemma~\ref{LemmaP}, we therefore need to show that any term in the \textit{nonflat} perturbed action $\phantom{}_{\s{(2)}}S_\text{eff}$ that involves background curvature or connection operators does not contribute to the effective stress-energy tensor $\phantom{}_{\s(2)}t_{\mu\nu}$ at null infinity.
    
    For any term in the perturbed action that contains curvature or connection quantities, a variation of a curvature or connection coefficient with respect to the background metric $\bar{g}_L^{\mu\nu}$ can be written as \cite{Stein:2010pn}
    \begin{equation}
        \delta_{\bar g^L} \left[\phantom{}_{\s{(2)}}S_\text{eff}\right]\supset\int\dd^4 x\sqrt{-\bar g^L} \; \bar \nabla_\sigma P\ud{\sigma}{\mu\nu} \; \delta \bar g_L^{\mu\nu}
    \end{equation}
    for some tensor $P\ud{\sigma}{\mu\nu}$ upon integration by parts. Such a contribution vanishes
    \begin{equation}
        \phantom{}_{\s{(2)}}t_{\mu\nu}\supset-2\Big\langle\bar \nabla_\sigma P\ud{\sigma}{\mu\nu}\Big\rangle=0
    \end{equation}
    due to property (II) of the average (see Sec.~\ref{sec:IsaacsonGR}). See~\cite{Stein:2010pn} for a pedagogical discussion of the above two facts. 
    
    On the other hand, if the variation of such a term is hitting a background metric, the resulting term will still contain either a curvature or a connection operator. Therefore, any such contribution will vanish in the limit to null infinity.
\end{proof}

\subsubsection{Memory evolution theorem beyond GR}

\begin{theorem}\label{Theorem1} Consider a dynamical metric theory, as introduced in Def.~\ref{DefMetricTheory}, for which
    \begin{enumerate}[(i)]
        \item the assumptions of Sec.~\ref{sec:IsaacsonBeyondGR} hold. In particular the Eqs.~\eqref{eq:SmallParam1}, \eqref{eq:SmallParam2} and \eqref{eq:SmallParam4} are satisfied, such that the leading-order, low-frequency metric equation can be written as [Eq.~\eqref{eq:EOMIS}]
        \begin{align}
        \begin{split}\label{eq:EOMISTh}
            \phantom{}_{\s{(0)}}&\mathcal{G}_{\mu\nu}[\{\bar g^L,\bar\Psi^L\}]\\
            &=-\big\langle \phantom{}_{\s{(2)}}\mathcal{G}_{\mu\nu}[\{\bar g^L,\bar\Psi^L\},\{h^H,\Psi^H\}]\big\rangle\,,
        \end{split}
        \end{align}
        while the leading-order, high-frequency propagation equations [Eqs.~\eqref{eq:EOMIIS} and \eqref{eq:EOMIISPsi}] are 
        \begin{align}
           \phantom{}_{\s{(1)}}\mathcal{G}_{\mu\nu}[\{\bar g^L,\bar\Psi^L\},\{h^H,\Psi^H\}]&=0\,,\label{eq:EOMIISTh}\\
           \phantom{}_{\s{(1)}}\mathcal{J}[\{\bar g^L,\bar\Psi^L\},\{h^H,\Psi^H\}]&=0\,.\label{eq:EOMIISPsiTh}
        \end{align}
        Moreover, the low-frequency fields can further be split into a background $\{\bar{\eta}_{\mu\nu},\bar{\Psi}_0\}\equiv \bar{\eta}_0$ that solves the background equations of motion $\phantom{}_{\s{(0)}}\mathcal{G}_{\mu\nu}[\bar{\eta}_0]=0$ and a small perturbation
        \begin{equation}\label{eq:SmallParam3Th}
        \bar{g}^L_{\mu\nu}=\bar{\eta}_{\mu\nu}+\delta h^L_{\mu\nu}\,, \quad \bar{\Psi}^L=\bar{\Psi}_0+\delta \Psi^L\,;
        \end{equation}
        \item the space-time is asymptotically flat, which in particular implies that there exists a chart $\{t,x,y,z\}$, with $r\equiv\sqrt{x^2+y^2+z^2}$, for which $g_{\mu\nu}=\eta_{\mu\nu}+\mathcal{O}(1/r)$ and $\Psi=\Psi_0+\mathcal{O}(1/r)$ in the limit to null infinity $r\rightarrow\infty$ at fixed $u\equiv t-r$. The background $\eta_0=\{\eta_{\mu\nu},\Psi_0\}$ solves the vacuum field equations and preserves local Lorentz invariance, with $\partial_\alpha\Psi_0=0$ for scalar fields and $\Psi_0=0$ for all other tensor fields and where $\eta_{\mu\nu}$ is the Minkowski metric\footnote{Note that asymptotic flatness also requires that any matter energy momentum tensor components $T_{\mu\nu}$ fall off quickly enough, so that $r^{2}T_{\mu\nu}$ has a smooth limit to null infinity.};
        \item there exists a set of leading-order high-frequency field perturbations $\hat{h}_{\mu\nu}^H=W(h^H,\Psi^H)$ and $\hat{\Psi}^H=V(\Psi^H)$, for some functions $W$ and $V$, for which, in the limit to null infinity, the leading-order, high-frequency propagation equations [Eqs.~\eqref{eq:EOMIISTh} and \eqref{eq:EOMIISPsiTh}] reduce to a set of decoupled wave equations for each field
        \begin{equation}\label{eq:ThWaveEq}
            \Box \hat{h}_{\mu\nu}^H=0\,,\quad \Box \hat{\Psi}^H=0\,,
        \end{equation}
        upon imposing the Lorenz gauge as well as tracelessness
        \begin{equation}\label{eq:ThLorenzGauge}
        \partial^\mu\hat{h}_{\mu\nu}^H=0\,,\quad\eta^{\mu\nu}\hat{h}_{\mu\nu}^H=0\,.
        \end{equation}
    \end{enumerate}
    Then, in the limit to null infinity, the leading-order low-frequency metric equation [Eq.~\eqref{eq:EOMISTh}] can be written as
    \begin{equation}
    \label{eq:final-proof}
        \Box \delta \hat h^L_{\mu\nu}=-2 \kappa_\text{eff}\,\phantom{}_{\s{(2)}}t_{\mu\nu}[\eta_0,\{\hat h^H,\hat\Psi^H\}]\,,
    \end{equation}
    where $\delta \hat h^L_{\mu\nu}$ satisfies the Lorentz gauge $\partial^\mu\delta \hat h^L_{\mu\nu}=0$, and where $\kappa_\text{eff}=\kappa_0A(\eta_0)$, with $A(\eta_0)$ a function that only depends on the Minkowski background. Furthermore, $\phantom{}_{\s{(2)}}t_{\mu\nu}$ has the following properties:
    \begin{enumerate}
        \item[(a)] it is conserved: $\partial^\mu\phantom{}_{\s{(2)}}t_{\mu\nu}[\eta_0,\{\hat h^H,\hat\Psi^H\}]=0$.
        \item[(b)] it can be written as a sum of terms:
        \begin{equation}   
        \begin{split}
         \phantom{}_{\s{(2)}}t_{\mu\nu}&[\eta_0,\{\hat h^H,\hat\Psi^H\}]\\
         &=\phantom{}_{\s{(2)}}t^{\st{GR}}_{\mu\nu}[\eta_0,\hat h^H]+\sum_\Psi \phantom{}_{\s{(2)}}t^{\hat{\Psi}}_{\mu\nu}[\eta_0,\hat \Psi^H]\,,
        \end{split}
    \end{equation}
     where $\phantom{}_{\s{(2)}}t^{\st{GR}}_{\mu\nu}\propto\Big\langle \partial_\mu \hat h^H_{\alpha\beta}\partial_\nu \hat h^{H\alpha\beta}\Big\rangle$ and where each $\phantom{}_{\s{(2)}}t^{\hat{\Psi}}_{\mu\nu}$ involves two derivative operators.
     \item[(c)] it is invariant under infinitesimal coordinate transformations $x^\mu\rightarrow x '^\mu= x^\mu+\xi_H^{\mu}$.
    \end{enumerate}
\end{theorem}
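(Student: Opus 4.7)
The plan is to combine the second-variation framework of Sec.~\ref{sec:IsaacsonSecondVariation} with Lemma~\ref{LemmaP} to push the entire analysis onto the flat background $\eta_0$, at which point assumption (iii) fixes the structure of the relevant action almost completely. First, I would apply Lemma~\ref{LemmaP} to replace the source term on the right-hand side of Eq.~\eqref{eq:EOMISTh}, once the background split \eqref{eq:SmallParam3Th} is inserted and only the leading $\mathcal{O}(\alpha^2 f_H^2)$ piece is retained, by the functional derivative
\begin{equation}
-\big\langle {}_{\s{(2)}}\mathcal{G}_{\mu\nu}[\eta_0,\{h^H,\Psi^H\}]\big\rangle\;\propto\;-\frac{2}{\sqrt{-\eta}}\Bigg\langle \frac{\delta\,{}_{\s{(2)}}S^{\s{\text{flat}}}_{\text{eff}}}{\delta\eta^{\mu\nu}}\Bigg\rangle\,.
\end{equation}
The entire source is thus controlled by the flat second-order action \eqref{eq:flatSA}, freeing us from any explicit dependence on background curvature or connection coefficients.

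Second, I would use assumption (iii) to pin down the form of ${}_{\s{(2)}}S^{\s{\text{flat}}}_{\text{eff}}$ itself. Varying this functional with respect to $h^H_{\mu\nu}$ and $\Psi^H$ reproduces Eqs.~\eqref{eq:EOMIISTh} and \eqref{eq:EOMIISPsiTh}; by hypothesis, these are equivalent -- after the invertible redefinitions $\hat h^H = W(h^H,\Psi^H)$ and $\hat\Psi^H = V(\Psi^H)$ and the gauge fixing \eqref{eq:ThLorenzGauge} -- to the decoupled massless wave equations \eqref{eq:ThWaveEq}. Since a local, Lorentz-invariant quadratic action whose equations of motion are the flat d'Alembertian for each propagating field is unique up to total derivatives and an overall rescaling, one must have in the hatted basis
\begin{equation}
{}_{\s{(2)}}S^{\s{\text{flat}}}_{\text{eff}}\;=\;\frac{-1}{2\kappa_\text{eff}}\int \dd^4 x\;\Big[\hat h^H_{\mu\nu}\mathcal{E}^{\alpha\beta}_{\mu\nu}\hat h^H_{\alpha\beta}+\sum_{\hat\Psi}\mathcal{L}_{\text{kin}}[\hat\Psi^H]\Big]\,,
\end{equation}
with $\kappa_\text{eff}=\kappa_0 A(\eta_0)$ absorbing any background-dependent rescaling, in direct analogy with Eq.~\eqref{ActionSVT2nd}. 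Property (b) then follows immediately from the Hilbert prescription \eqref{eq:SecondOrderVariationEMTensor}: the Fierz--Pauli piece reproduces ${}_{\s{(2)}}t^{\st{GR}}_{\mu\nu}\propto\langle\partial_\mu\hat h^H_{\alpha\beta}\partial_\nu\hat h^{H\,\alpha\beta}\rangle$, while each decoupled kinetic term contributes a separate two-derivative piece ${}_{\s{(2)}}t^{\hat\Psi}_{\mu\nu}$ depending only on $\hat\Psi^H$.

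Third, the remaining properties are standard given this structure. Conservation (a) follows from the Poincar\'e invariance of ${}_{\s{(2)}}S^{\s{\text{flat}}}_{\text{eff}}$ together with the on-shell wave equations \eqref{eq:ThWaveEq} and gauge conditions \eqref{eq:ThLorenzGauge}, the necessary integrations by parts being justified by property (III) of the average procedure in Sec.~\ref{sec:IsaacsonGR}. Gauge invariance (c) is argued in the same spirit: under the residual coordinate transformations preserving \eqref{eq:ThLorenzGauge}, the canonical quadratic action changes by a total derivative, which the average eliminates via property (II). For the left-hand side of Eq.~\eqref{eq:EOMISTh}, I would observe that $_{\s{(1)}}\mathcal{G}_{\mu\nu}[\bar\eta_0,\{\delta h^L,\delta\Psi^L\}]$ is the same linear operator whose structure was constrained in (iii) -- it does not know whether its argument varies at $f_H$ or at $f_L$. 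Applying the same redefinition $W$ to define $\delta\hat h^L_{\mu\nu}$ and imposing the low-frequency gauge $\partial^\mu\delta\hat h^L_{\mu\nu}=0$, $\eta^{\mu\nu}\delta\hat h^L_{\mu\nu}=0$ -- consistent because the averaged source is tracefree, a consequence of the on-shell wave equations \eqref{eq:ThWaveEq} and integration by parts -- reduces the left-hand side to $\Box\,\delta\hat h^L_{\mu\nu}$. Combining this with the source computed above yields \eqref{eq:final-proof}.

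The main obstacle I anticipate is the rigorous version of step two -- that the stated equations of motion in the hatted basis really do uniquely fix the quadratic action up to boundary terms and an overall prefactor. One must exclude quadratic operators that vanish on-shell or on the gauge slice but that would still contaminate the Hilbert stress-energy tensor; the natural levers are local Lorentz invariance on the Minkowski background (assumption (ii)) and, for theories with higher-order derivative interactions, the small-coupling bounds \eqref{eq:SmallParam4} and, where necessary, the sharper \eqref{eq:SmallParam5}, which ensure that any quadratic operator carrying more than two derivatives is strictly subleading. Additional care is likewise required to verify that the redefinition $W$ and the Lorenz-plus-traceless gauge can be consistently transplanted from the high- to the low-frequency perturbations, which is precisely what allows the left-hand side of \eqref{eq:EOMISTh} to collapse to a simple d'Alembertian acting on $\delta\hat h^L_{\mu\nu}$.
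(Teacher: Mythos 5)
Your proposal is correct and follows essentially the same route as the paper's proof: Lemma~\ref{LemmaP} reduces the source to the flat second-order effective action, assumption (iii) fixes that action to the Fierz--Pauli form plus decoupled kinetic terms (the paper's Eq.~\eqref{ProofActionGenn2nd}), properties (a)--(c) follow from the Hilbert prescription and the averaging properties, and the left-hand side collapses to $\Box\delta\hat h^L_{\mu\nu}$ by transplanting the redefinition $W$ and the Lorenz gauge to the low-frequency perturbation. The one "obstacle" you flag -- uniqueness of the quadratic action up to boundary terms -- is likewise asserted rather than proven in the paper, and your handling of the trace condition (the paper hedges with a trace-reversed variable when the source is not traceless) is if anything slightly cleaner.
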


\begin{proof}
     Let  $S_\text{eff}=\phantom{}_{\s(0)}S_\text{eff}+\phantom{}_{\s(2)}S_\text{eff}$ be the effective action of a dynamical metric theory (see Definition \ref{DefMetricTheory}), where here we have ignored the linear term because it will not contribute to the equations of motion of the perturbations at leading order [Eqs.~\eqref{eq:EOMIISTh} and \eqref{eq:EOMIISPsiTh}], or to the leading-order memory-evolution equation [Eq.~\eqref{eq:EOMISTh}].
   In the following, every equality or proportionality is to be understood as a statement in the limit to null infinity. Moreover, proportionalities are relations up to scalar functions, which may depend on the Minkowski background $\eta_0=\{\eta_{\mu\nu},\Psi_0\}$ at null infinity, as well as the dimensionfull bare Newton's constant $\kappa_0$. 

    With this in mind, let us separately consider the left-hand side and right-hand side of the leading-order, low-frequency equation [Eq.~\eqref{eq:EOMISTh}], starting with the former. The left-hand side of Eq.~\eqref{eq:EOMISTh} arises from $\delta\phantom{}_{\s(0)}S_\text{eff}/\delta \bar g_L^{\mu\nu}$ and it simply corresponds to the metric field equation operator of the theory in terms of the low-frequency fields. Imposing the split in Eq.~\eqref{eq:SmallParam3Th} into a low-frequency background $\bar{\eta}_0=\{\bar{\eta}_{\mu\nu},\bar \Psi_0\}$ that satisfies the vacuum equations and low-frequency perturbations $\delta h^L_{\mu\nu}$ and $\delta\Psi^L$, the operator splits into a homogeneous piece $_{\s{(0)}}\mathcal{G}_{\mu\nu}[\bar{\eta}_0]=0$ and an inhomogeneous part $_{\s{(1)}}\mathcal{G}_{\mu\nu}[\bar{\eta}_0,\{\delta h^L,\delta\Psi^L\}]$ [see Eq.~\eqref{eq:EOMIS2}]. In the limit to null infinity this becomes $_{\s{(1)}}\mathcal{G}_{\mu\nu}[\eta_0,\{\delta h^L,\delta\Psi^L\}]$, since $\bar{g}^L_{\mu\nu}=\eta_{\mu\nu}+\mathcal{O}(1/r)$ and $\Psi=\Psi_0+\mathcal{O}(1/r)$.
    
    Note that this operator $_{\s{(1)}}\mathcal{G}_{\mu\nu}[\eta_0,\{\delta h^L,\delta\Psi^L\}]$ has the same functional form as the operator $_{\s{(1)}}\mathcal{G}_{\mu\nu}[\eta_0,\{\delta h^H,\delta\Psi^H\}]$, which is the operator of the propagation equation [Eq.~\eqref{eq:EOMIISTh}] in the limit to null infinity. Therefore, assumption (iii) also dictates the functional form of the inhomogeneous piece $_{\s{(1)}}\mathcal{G}_{\mu\nu}[\eta_0,\{\delta h^L,\delta\Psi^L\}]$. 
    
    By assumption (iii), the operator takes the form of a wave operator $_{\s{(1)}}\mathcal{G}_{\mu\nu}[\eta_0,\{\delta h^H,\delta\Psi^H\}]\propto\Box \hat{h}^H_{\mu\nu}$ upon a field redefinition $\hat{h}_{\mu\nu}^H=W(h^H,\Psi^H)$ and $\hat{\Psi}^H=V(\Psi^H)$ and after imposing the conditions in Eq.~\eqref{eq:ThLorenzGauge}. Thus, there exists a redefinition $\delta \hat h^L_{\mu\nu}=\hat{W}(\delta h^L_{\mu\nu},\delta \Psi^L)$, together with a suitable choice of $\xi_L^{\mu}$ in a coordinate transformation $x^\mu\rightarrow x^\mu+\xi_L^{\mu}$, such that the remaining leading-order operator reduces to a decoupled wave operator $\Box\delta \hat h^L_{\mu\nu}$.
    The corresponding gauge choice is the Lorenz gauge on the rescaled, low-frequency metric perturbation, $\partial^\mu\delta \hat h^L_{\mu\nu}=0$. The existence of this gauge follows from the conservation of the energy-momentum tensor $\partial^\mu \phantom{}_{\s{(2)}}t_{\mu\nu}=0$, which in turn can be established from property (II) of the average (see Sec.~\ref{sec:IsaacsonGR}). If the total energy-momentum tensor is moreover traceless, it is possible to impose tracelessness on $\delta \hat h^L_{\mu\nu}$, and, in this case, Eq.~\eqref{eq:ProofLHS} follows directly by choosing $\delta \hat h^L_{\mu\nu}=W(\delta h^L_{\mu\nu},\delta \psi^L)$. If the total energy momentum tensor is not trace-less, we can perform a further redefinition to a trace-reversed variable $\delta \hat{\bar{h}}^L_{\mu\nu}\equiv \delta \hat h^L_{\mu\nu}-\frac{1}{2}\eta_{\mu\nu}\delta \hat h^{Lt}$, which yields Eq.~\eqref{eq:ProofLHS} in terms of this new variable.
    Thus, the left-hand side of Eq.~\eqref{eq:EOMISTh} includes the homogeneous equation
    \begin{equation}\label{eq:ProofLHSHom}
    \phantom{}_{\s{(0)}} \mathcal{G}_{\mu\nu}[\eta_0] =0\,,
    \end{equation}
    and an inhomogeneous contribution given by
    \begin{equation}\label{eq:ProofLHS}
    \phantom{}_{\s{(1)}} \mathcal{G}_{\mu\nu}[\eta_0,\{\delta h^L,\delta\Psi^L\}] \propto \Box\delta \hat h^L_{\mu\nu}\,.
    \end{equation}
     
    The right-hand side of Eq.~\eqref{eq:EOMISTh} only contributes to the inhomogeneous equation and is, by definition, the energy-momentum (pseudo)tensor, which in the limit to null infinity reduces to $\phantom{}_{\s{(2)}} t_{\mu \nu} \to t_{\mu\nu}[\eta_0,\{\hat h^H,\hat\Psi^H\}]$. In this limit, we can then write
     \begin{equation}\label{eq:ProofRHS}
        \big\langle \phantom{}_{\s{(2)}}\mathcal{G}_{\mu\nu}\big\rangle=2\kappa_0\,A(\eta_0)\,\phantom{}_{\s{(2)}}t_{\mu\nu}[\eta_0,\{\hat h^H,\hat\Psi^H\}]\,,
    \end{equation}
    where the proportionality can only depend on a scalar function of the Minkowski background $\eta_0$, while dimensional analysis requires the presence of a factor of $\kappa_0$.
    Combining Eqs.~\eqref{eq:ProofLHS} and \eqref{eq:ProofRHS} establishes that the memory-evolution equation can be written as Eq.~\eqref{eq:final-proof}.

The proof of the theorem is finalized if we can establish  properties (a)--(c) of the flat energy-momentum (pseudo)tensor. Property (a), the conservation of the energy-momentum (pseudo)tensor $\partial^\mu\phantom{}_{\s{(2)}}t_{\mu\nu}=0$,  was already touched upon when describing the left-hand side of Eq.~\eqref{eq:EOMISTh}, but let us be more explicit here. This (pseudo)tensor is conserved because it is defined in terms of the average of the variation of the second-order effective action with respect to the background metric [Eq.~\eqref{eq:SecondOrderVariationEMTensor}]. When we take the divergence of this quantity, the derivative commutes with the average symbol and then hits the variation itself, which is a rank-2 (pseudo)tensor. Property (II) of the average procedure (see Sec.~\ref{sec:IsaacsonGR}), however, guarantees that the average of the divergence of a tensor vanishes up to boundary terms, which are higher order in the ratio of the frequency scales. Therefore, the divergence of the (pseudo)tensor also vanishes up to these boundary terms, which establishes property (a). 

Let us now focus on property (b), which tells us that the energy-momentum (pseudo)tensor near null infinity can be decomposed into two terms: a standard GR term and a term that depends on the perturbations of the additional gravitational fields. In order to establish this property, we must first massage the second-order effective action near null infinity. In the limit to null infinity,
    \begin{align}
       0=\phantom{}_{\s{(1)}}\mathcal{G}_{\mu\nu}[\eta_0,\{h^H,\Psi^H\}]&\propto\left[\frac{\delta \phantom{}_{\s(2)}S^{\s{flat}}_\text{eff}}{\delta h^{H\mu\nu}}\right]^H\,,\\
        0=\phantom{}_{\s{(1)}}\mathcal{J}[\eta_0,\{h^H,\Psi^H\}]&\propto\left[\frac{\delta \phantom{}_{\s(2)}S^{\s{flat}}_\text{eff}}{\delta \Psi^{H}}\right]^H\,,
    \end{align}
    and hence, the leading-order propagation equations [Eqs.~\eqref{eq:EOMIISTh} and \eqref{eq:EOMIISPsiTh}] only depend on the $\phantom{}_{\s(2)}S^{\s{flat}}_\text{eff}$ of Eq.~\eqref{eq:flatSA}. 
    Assumption (iii) of the theorem then implies that there exists a set of redefined perturbation variables $\{\hat{h}_{\mu\nu}^H,\hat{\Psi}^H\}$ for which the effective action $\phantom{}_{\s(2)}S^{\s{flat}}_\text{eff}$ can be written as
     \begin{align}\label{ProofActionGenn2nd}
    \phantom{}_{\s(2)}S^{\s{flat}}_\text{eff}\propto&\int\dd^4x\sqrt{-\eta}\,\bigg[\hat{h}^{H\mu\nu}\mathcal{E}^{\alpha\beta}_{\mu\nu}\hat{h}^H_{\alpha\beta}+\sum_{\Psi}\left(\partial \hat\Psi^H\right)^2\bigg]\,.
    \end{align}
    The first term is the usual Fierz-Pauli operator with indices contracted with the independent background metric $\eta_{\mu\nu}$. The second term corresponds schematically to the sum of kinetic terms of the high-frequency perturbations of the additional gravitational fields. 
    Up to integrations by parts, this is the unique action that will lead to a set of decoupled massless wave equations for the leading-order, high-frequency perturbations upon imposing the adequate gauge choices of Eq.~\eqref{eq:ThLorenzGauge}. 
    
    With this in hand, we can now establish property (b).  Lemma \ref{LemmaP} and the perturbed generalized field equations imply that, in the limit to null infinity,
     \begin{equation}\label{eq:RHSP}
    \Big\langle \phantom{}_{\s{(2)}}\mathcal{G}_{\mu\nu}[\eta_0,\{\hat h^H,\hat\Psi^H\}]\Big\rangle \propto \Bigg\langle\frac{\delta \phantom{}_{\s{(2)}}S^{\s{flat}}_\text{eff}}{\delta \eta^{\mu\nu}}\Bigg\rangle \propto \phantom{}_{\s{(2)}}t_{\mu \nu}\,.
     \end{equation}
    In other words, in this limit, the effective stress-energy (pseudo)tensor of radiation, and therefore, also the right-hand side of the memory equation [Eq.~\eqref{eq:EOMISTh}], only requires knowledge of the flat, second-order effective action $\phantom{}_{\s(2)}S^{\s{flat}}_\text{eff}$ as well. Since $\phantom{}_{\s(2)}S^{\s{flat}}_\text{eff}$ can be written in the form of Eq.~\eqref{ProofActionGenn2nd}, it can be split into a term that only depends on the metric perturbation $h^H_{\mu\nu}$ and another set of terms, each of which only depends on the field perturbation $\Psi^H$ of every additional field $\Psi$ and involves only two derivative operators. Each of these terms can be treated separately when computing the radiative energy-momentum (pseudo)tensor, and hence
     \begin{equation}\label{eq:ProofRHS2}
         \phantom{}_{\s{(2)}}t_{\mu\nu}=\phantom{}_{\s{(2)}}t^{\st{GR}}_{\mu\nu}[\eta_0,\hat h^H]+ \sum_\Psi\phantom{}_{\s{(2)}}t^{\hat{\Psi}}_{\mu\nu}[\eta_0,\hat \Psi^H] \,,
    \end{equation}
    where each term only involves two derivative operators.
   Moreover, since the metric perturbation piece in the flat, effective second-order action in Eq.~\eqref{ProofActionGenn2nd} admits the same form as in GR, the first term is proportional to the radiative energy-momentum tensor of GR. After imposing the gauge of Eq.~\eqref{eq:ThLorenzGauge}, the latter reads
    \begin{equation}
        \phantom{}_{\s{(2)}}t^{\st{GR}}_{\mu\nu}\propto\Big\langle \partial_\mu \hat h^H_{\alpha\beta}\partial_\nu \hat h^{H\alpha\beta}\Big\rangle\,,
    \end{equation}
thus establishing property (b). 

    Finally, let us consider property (c), which states that the energy-momentum (pseudo)tensor is invariant under infinitesimal coordinate transformations. The the assumptions on asymptotic flatness of the peacetime, together with the split in Eq.~\eqref{eq:ProofRHS2} and the assumption that $\hat{\Psi}^H$ does not depend on $h^H_{\mu\nu}$,  directly implies that $\phantom{}_{\s{(2)}}t_{\mu\nu}$ is invariant under infinitesimal diffeomorphisms  $x^\mu\rightarrow x^\mu+\xi_H^{\mu}$. This is because the assumption of a local-Lorentz preserving background with $\Psi_0=0$  for fields that are not a scalar and $\partial_\alpha\Psi_0=0$ for scalars implies that the Lie derivative of the first-order perturbations with respect to a vector field generating such a diffeomorphism will vanish. Therefore, only the metric perturbations $h^H_{\mu\nu}$ transform under the gauge shift. Moreover, since the $\hat\Psi^H$ perturbations are independent of $h^H_{\mu\nu}$, the coordinate gauge transformation only affects the term $\phantom{}_{\s{(2)}}t^{\st{GR}}_{\mu\nu}$ , which is invariant upon integration by parts and after imposing the Lorenz gauge Eq.~\eqref{eq:ThLorenzGauge}.

    Since we have established  Eq.~\eqref{eq:final-proof} together with all the required properties (a)--(c) of the radiative energy-momentum tensor $\phantom{}_{\s{(2)}}t_{\mu\nu}$, we have therefore proven the theorem.
\end{proof}

\subsubsection{Technical remarks on the theorem}\label{sec:ThmRemarks}

    Let us conclude the discussion of the theorem by making several technical remarks. First, let us stress that in assumption (iii) we only require the \textit{first-order propagation equations} to reduce to a set of decoupled second-order wave equations. The two main results of the theorem are therefore that (1) the decoupling between fields remains intact even at $\mathcal{O}(\alpha^2)$ at the low-frequency level; (2) the low-frequency $\mathcal{O}(\alpha^2)$ term also only involves two derivative operators. In particular, the latter directly implies that in any theory satisfying the assumptions of the theorem, the memory equation will only directly depend on terms in the action that involve two derivative operators. This is nicely exemplified in the SVT theory result of the radiative energy-momentum tensor in Eq.~\eqref{eq:StressEnergyFourth}.

    Furthermore, observe that assumption (iii) is quite generic. Indeed, in any theory satisfying second-order equations of motion, the first-order propagation equation will only involve two derivative operators. But as we will explore in Sec.~\ref{sec:ScopeOfClaim}, this is even true for a large class of theories, whose field equations are higher-order in derivatives. Local Lorentz invariance as well as masslessness then ensure that the leading-order propagation equation generically takes the form of a massless wave equation.

    Moreover, also a decoupling of the equations at first order in perturbations is quite generic. First, note that a Minkowski background ensures that the tensor, vector and scalar sectors\footnote{The terms "tensor", "vector" and "scalar" refer here to the polarization type of each mode.} can always be decoupled at leading order in perturbations and each of these sectors can therefore be treated individually. Therefore, potential couplings between perturbations at the level of the leading-order perturbation equations could only arise within each of these sectors. We are, however, not aware of any concrete massless theory that admits such a coupling of first-order perturbations. Indeed, explicit examples of such coupled equations typically only arise in theories that include an explicit mass term, such as, for instance, in massive bigravity models (see e.g. \cite{Comelli:2012db,DeFelice:2013nba}) or massive multi-Proca theories \cite{BeltranJimenez:2016afo}. It would, however, be interesting to explore null memory for such theories with coupled leading-order perturbation equations, a task we leave for future work.
   
    Also, any typical massless theory involving multiple interacting vector or scalar fields at the level of the full action, such as non-Abelian vector fields or typical scalar multifield models (see e.g. \cite{Dimakis:2019qfs}), naturally decouple to leading order in perturbations on a Minkowski background, and thus, still abide by the decoupling assumption in (iii).


\subsection{Relation of the memory evolution theorem to the tensor null memory}\label{sec:RelToNullMemTh}

The form of the memory-evolution equation [Eq.~\eqref{eq:final-proof}] for dynamical metric theories (see Definition~\ref{DefMetricTheory}) established through Theorem~\ref{Theorem1}, precisely corresponds to the memory equation [Eq.~\eqref{eq:SecondOrderEOMMem}] obtained for the specific example of SVT gravity, considered in Sec.~\ref{sec:SVTMem}. The only difference is that the energy-momentum (pseudo)tensor of the leading-order radiation $t^{\st{SVT}}_{\mu\nu}$, is replaced by a more general (pseudo)tensor $\phantom{}_{\s(2)}t_{\mu\nu}$, which is still conserved and gauge invariant. The precise form, of course, depends on the nature of the additional gravitational fields $\Psi$.

To make a statement about the resulting tensor null memory, we will now further assume, that in the asymptotic region far from the source, the radiative energy-momentum tensor can be written in the form (cf. Eq.~\eqref{eq:EMTAs})
\begin{equation}\label{eq:EMTAsP}
    \phantom{}_{\s(2)}t_{\mu\nu}=\phantom{}_{\s(2)}t_{00}\,\ell_\mu \ell_\nu\,.
\end{equation}
where $\ell_\mu=-\delta^0_\mu+\delta^i_\mu n_i$ and the scaling with $r$ of $\phantom{}_{\s(2)}t_{00}$ is $\sim r^{-2}$ \footnote{Recall that in the SVT example, this structure followed from the general asymptotic falloff behavior in Eq.~\eqref{eq:OutgoingPlaneWave} of the solutions to the wave equation and the explicit structure of the energy-momentum tensor.}. This is all that is needed to apply the same steps presented in Secs.~\ref{sec:DerTensorMemorySVT} and \ref{sec:SWSHTensorMemorySVT}, in order to write the resulting asymptotic tensor null memory as (see Eq.~\eqref{NonLinDispMemoryModesSVT})
\begin{equation}\label{NonLinDispMemoryGen3_s}
     \delta h_L^{l m}=\frac{2\kappa_\text{eff}}{r} \sqrt{\frac{(l-2)!}{(l+2)!}}\int \dd^2 \Omega'\,\bar{Y}^{l m}\int\dd u' \,r^2\phantom{}_{\s(2)}t_{00}\,,
\end{equation}
where 
\begin{equation}
    \phantom{}_{\s(2)}t_{00}=\phantom{}_{\s{(2)}}t^{\st{GR}}_{00}+ \sum_\Psi\phantom{}_{\s{(2)}}t^{\hat{\Psi}}_{00}\,.
\end{equation}
Here, $\delta h_L^{l m}$ are the coefficients of a SWSH expansion of the quantity $\delta h^{L}=\delta h^{L\,TT}_{ij}m_im_j$, with $\delta h^{L\,TT}_{ij}=\delta \hat h^{L\,TT}_{ij}$. This result follows immediately, because, in the derivation in Secs.~\ref{sec:DerTensorMemorySVT} and \ref{sec:SWSHTensorMemorySVT}, the details of the radiative energy-momentum tensor are irrelevant, since the radiated energy flux $r^2\phantom{}_{\s(2)}t_{00}$ remains untouched after Eq.~\eqref{eq:EMTAsP} is established.

The question is now the following: How general is the assumed structure of the radiative energy-momentum in Eq.~\eqref{eq:EMTAsP}? We will argue that this assumption is not very restrictive, in the sense that it is expected to be generically satisfied by dynamical metric theories of Definition~\ref{DefMetricTheory}, and thus by most of the viable extensions to GR considered in the literature, whose fields are massless and obey local-Lorentz invariance. 

To analyze this question carefully, we will first assume that any of the modes, and hence any of the gauge-invariant solutions to the leading-order wave equations $h$ and $\psi$, scale as (cf. Eq.~\eqref{eq:OutgoingPlaneWave})
\begin{align}\label{eq:OutgoingPlaneWaveTh}
    \left(\hat{h}\,,\;\hat{\psi}\right) \sim \frac{1}{r} \left[f_h(u,\theta,\phi), f_\psi(u,\theta,\phi) \right]\,,
\end{align}
near null infinity. This implies that the GR piece $\phantom{}_{\s{(2)}}t^{\st{GR}}_{\mu\nu}\propto\big\langle \partial_\mu \hat h^H_{\alpha\beta}\partial_\nu \hat h_H^{\alpha\beta}\big\rangle$ satisfies Eq.~\eqref{eq:EMTAsP} (cf. Sec.~\ref{sec:DerTensorMemorySVT}). Using the considerations in Sec.~\ref{sec:IsaacsonBeyondGR}, the additional terms in $\phantom{}_{\s(2)}t^{\hat \Psi}_{\mu\nu}$ can be schematically written as
\begin{equation}
    \phantom{}_{\s(2)}t^{\hat \Psi}_{\mu\nu}\sim \,\eta_0^n \Big\langle \partial \hat\Psi^H\partial \hat\Psi^{H}\Big\rangle\,,
    \label{eq:t2-partial}
\end{equation}
where we have omitted the index structure to keep the expression general, and where $\eta_0^n$ denotes $n$ background-field contributions with indices contracted onto the derivatives or the tensor structure of the gravitational fields themselves. The additional terms in $\phantom{}_{\s(2)}t^{\hat \Psi}_{\mu\nu}$ take this form because the leading-order term in $\langle\phantom{}{\s{(2)}}\mathcal{G}_{\mu\nu}\rangle$ has two derivatives acting on the two leading-order wave perturbations, which, at most, can be multiplied by background fields $\eta_0$ in the limit to null infinity. Moreover, due to Theorem~\ref{Theorem1}, the assumption of decoupled high-frequency propagation equations ensures that there are no mixed terms between the  metric perturbation $\hat{h}^H$ and the other gravitational fields $\hat\Psi^H$ (e.g.~of the form $\langle \partial \hat\Psi^H \; \partial \hat h\rangle$) or between different types of gravitational fields (e.g.~of the form $\langle \partial \hat V^H \; \partial \hat \phi^H\rangle$ if $\Psi$ contains a vector $V$ and  scalar $\phi$). 

We will now partially restore the tensor structure. Requiring that the background preserves local-Lorentz invariance and upon imposing a Lorenz gauge, we can rewrite Eq.~\eqref{eq:t2-partial} as 
\begin{equation}
    \phantom{}_{\s(2)}t^{\hat \Psi}_{\mu\nu}\sim \,\Big\langle \partial_\mu \hat\Psi^H\partial_\nu \hat\Psi^{H}\Big\rangle\,,
    \label{eq:t2-new}
\end{equation}
where we have explicitly included the indices on the partial derivatives and assumed that any tensor structure in the first $\hat\Psi^H$ that is contracted onto the tensor structure of the second $\hat\Psi^H$ is a trace or a scalar. The free indices $\mu$ and $\nu$ need to be placed on the two derivative operators because otherwise the resulting term vanishes either due to the first-order equations of motion or the Lorenz gauge, upon integration by parts. Moreover, here we have removed the n copies of the $\eta_0$ background fields because by assumption, only scalar fields can have nontrivial background values, and therefore there are no indices to be contracted with background fields. Together with the falloff Eq.~\eqref{eq:OutgoingPlaneWaveTh} this implies that Eq.~\eqref{eq:EMTAsP} holds.

To continue, we want to be more specific and consider a concrete type of possible additional gravitational fields, which covers a large class of theories considered in the literature. Namely, from now on, we will focus on dynamical metric theories whose additional gravitational fields $\Psi$ are $k$-form field potentials with an associated Abelian $U(1)$ gauge symmetry.\footnote{Recall that a differential $k$-form field $\Psi$ is a totally antisymmetric tensor field, which in a coordinate-induced basis can be written as $\Psi=(1/k!)\,\Psi_{\mu_1...\mu_k}\,dx^{\mu_1}\wedge...\wedge dx^{\mu_k}$, with $k<d$ and where $"\wedge"$ denotes the exterior product. Such $k$-form fields naturally generalize $U(1)$ vector field potentials because their field strength $\mathscr{F}\equiv d\Psi$ is invariant under Abelian gauge transformations $\Psi\rightarrow \Psi + d\Lambda$, where $\Lambda$ is an arbitrary $(k-1)$-form and $d$ is the exterior derivative. See for instance \cite{Henneaux:1986ht,zee_quantum_2010} for a review of the topic.} The collection of these additional dynamical $k$-form fields are assumed to describe $N$ additional propagating gravitational degrees of freedom. Thus, the theory admits $N$ independent and propagating solutions to the wave equations, characterized through $N$ modes in the canonically normalized, second-order action, which we will denote as $\hat{\psi}_\lambda$, where $\lambda=1,..,N$.  Note that in four spacetime dimensions, we only consider $k$-forms for $k<4$. A $0$-form field simply corresponds to a scalar field, while a 1-form field naturally describes an Abelian vector field, and, therefore, it carries two propagating degrees of freedom. A 2-form field, on the other hand, again only describes one dynamical mode equivalent to a scalar degree of freedom (see e.g. \cite{Heisenberg:2019akx}). A 3-form field will not contain any propagating modes in four dimensions simply because the components of the associate 4-form field strength are constant (see e.g. \cite{Bandos:2019wgy}).\footnote{However, a non-trivial coupling to the metric of such fields can for example lead to a dynamical contribution to the cosmological constant \cite{Duncan:1989ug} and may thus still have physical implications.}

In Appendix~\ref{app:ExampleNullMemoryKForm} we offer the explicit derivation of the associated tensor null-memory formula for this class of theories. Provided that the assumptions of Theorem~\ref{Theorem1} hold, the end result is 
\begin{equation}\label{NonLinDispMemoryGen2}
    \begin{split}
     &\delta h_H^{l m}(u,r)=\frac{1}{r} \sqrt{\frac{(l-2)!}{(l+2)!}}\int_{S^2} \dd^2 \Omega'\,\bar{Y}^{l m}(\Omega') \\
     &\quad\times\,\int_{-\infty}^{u}\dd u'\,r^2\bigg\langle | \dot{\hat{h}}_+|^2+| \dot{\hat{h}}_\times|^2+\sum_{\lambda=1}^N|\dot{\hat{\psi}}_\lambda|^2\bigg\rangle\,,
    \end{split}
\end{equation}
where $\hat{h}_{+,\times}=h_{+,\times}$. Recall that for certain theories (see Sec.~\ref{sec:SVTFirst}) it is necessary to redefine the tensor perturbation variable $\hat{h}_{\mu\nu}^H=W(h^H,\Psi^H)$ to obtain a perturbation variable that satisfies a first-order wave equation in the appropriate gauge. However, the TT component of this redefined variable, and thus, also the polarization modes $h_{+,\times}$ always correspond to the TT component of the physical metric present in the detector response $\hat{h}^{\text{TT}H}_{\mu\nu}=h^{\text{TT}H}_{\mu\nu}$. The same is true for the memory component. The need of such a change of variables to decouple the leading-order equations is, typically, a sign of the presence of additional gravitational polarizations.

Let us end this subsection by stressing that the tensor null-memory result in Eq.~\eqref{NonLinDispMemoryGen2} was obtained without any knowledge of the precise form of the Lagrangian, and it simply follows from Theorem~\ref{Theorem1} and the resulting solution of the memory-evolution equation. The coupling constants of a specific theory would then enter through a transformation from the canonically normalized modes $\hat{\psi}_\lambda$ to the physical modes of the theory [recall the discussion around Eq.~\eqref{eq:StressEnergyFourth}]. The expression in Eq.~\eqref{NonLinDispMemoryGen2} should be compared to the result in Eqs.~\eqref{NonLinDispMemoryModesSVT} and \eqref{eq:StressEnergyFourthHat} and it represents a generalization of the explicit SVT theory example we considered in Sec.~\ref{sec:SVTMem}.

\subsection{Scope of the memory evolution theorem and its link to the tensor null memory}\label{sec:ScopeOfClaim}

We will now explore the scope of the above statements and, in particular, investigate which theories admit a tensor null memory of the form Eq.~\eqref{NonLinDispMemoryGen2}. Such theories must, of course, satisfy the assumptions of Theorem~\ref{Theorem1}. Let us then restrict ourselves to dynamical metric theories of gravity that admit an arbitrary number of additional $k$-form fields in the gravitational sector (see Appendix~\ref{app:ExampleNullMemoryKForm}). In particular, this restriction implies that we focus on theories with Abelian gauge groups, but similar conclusions should also hold in the non-Abelian case. A restriction to $k$-form fields also implies a limitation to bosonic fields. Fermionic fields do not usually play the role of a massless force carrier in known theories. Nonetheless, in principle, it could still be interesting to enrich metric theories with fermionic fields in the gravitational sector, an investigation we leave for future work. 

Given these restrictions, there is a large class of theories for which the tensor null memory takes the form Eq.~\eqref{NonLinDispMemoryGen2}. The theorem clearly encompasses any covariantized version of massless $k$-form Galileon theories \cite{Deffayet:2010zh,Deffayet:2016von}, restricting the full equations of motion to second order. Such theories represent a natural generalization of the SVT class of theories with second-order equations of motion discussed in Sec.~\ref{sec:SVTMem}. As shown explicitly in Sec.~\ref{subsec:examples}, these SVT theories contain Horndeski-type theories, including BD theory~\cite{Brans:1961sx,Dicke:1961gz} and other similar scalar-tensor theories, sGB gravity~\cite{Zwiebach:1985uq,Gross:1986iv}, and double-dual Riemann gravity~\cite{Charmousis:2011ea,Charmousis:2011bf}. 

The memory formula in Eq.~\eqref{NonLinDispMemoryGen2}, however, is not restricted to theories that satisfy second-order equations of motion. A first interesting concrete example of a theory that does not fall under the class of covariantized $k$-form Galileon theories is dCS gravity \cite{Jackiw:2003pm} (see also \cite{Alexander:2009tp}). This theory is defined through the action
\begin{equation}\label{ActiondCS}
\begin{split}
    S^{\st{dCS}}=\int \dd^4 x\sqrt{-g}\bigg(&\frac{1}{16\pi G} R+\frac{\epsilon^2}{4}\,\Theta\,R\du{\mu\nu\rho\sigma} {}\tilde R^{\nu\mu\rho\sigma}\\
    &-\frac{1}{2}g^{\mu\nu}\nabla_\mu\Theta\nabla_\nu\Theta\bigg)\,,
\end{split}
\end{equation}
where $\epsilon$ is a dimensionfull coupling constant and $\Theta$ is a (pseudo)scalar field.\footnote{A vanishing of the scalar field potential implies that $\Theta$ is massless, and the theory is shift symmetric in $\Theta$. This modified theory reduces to GR smoothly when $\Theta$ is constant or when $\epsilon$ vanishes.} This theory is an example of the effective field theories with higher-than-second-order field equations, discussed in Sec.~\ref{sec:IsaacsonBeyondGR}.  In recent years, dCS gravity has received some attention, mainly because its rotating black holes have a nontrivial (pseudo)scalar profile or ``hair'' \cite{Yunes:2009hc}. This implies that when such hairy black holes accelerate, a scalar wave is sourced, which carries energy-momentum to null infinity. 

This fact is what makes dCS interesting from the point of view of the memory because the scalar wave should modify the tensor null memory. Just as in sGB gravity (see Sec.~\ref{sec:sGB}), the dCS action expanded to second-order in perturbations simply reduces to the GR one with a canonical scalar field near null infinity, because the Pontryagin density is of higher order in this limit. Therefore, the \textit{linear-order}, high-frequency propagation equations are just given by two decoupled, second-order wave equations for the metric and the scalar field perturbations. In turn, this fact implies that dCS gravity naturally falls within the scope of Theorem~\ref{Theorem1} and admits the tensor null-memory formula of Eq.~\eqref{NonLinDispMemoryGen2}, with $N=1$, $\Psi\to\Theta$ and corresponding leading-order wave mode $\hat\psi_\lambda\to\hat\vartheta=\vartheta$. Note, however, that, just as in the sGB case, the coupling $\epsilon$ does not enter explicitly into the memory equation, but rather it enters implicitly through the dependence of the metric and scalar perturbations on the coupling. For dCS gravity, this result is indeed confirmed by the explicit computation of the associated BMS balance laws in \cite{hou_gravitational_2022}, from which one can extract the null memory following the same steps outlined in Appendix~\ref{MatchToAsymptoticsBD} for BD gravity.

The above result can be generalized to a wide class of theories with higher-derivative field equations. Indeed, any GR correction involving higher powers of curvature invariants, including nonminimally coupled, massless scalar fields, satisfies decoupled second-order wave equations at the linear level \cite{Stein:2010pn}. In particular, assuming Eq.~\eqref{eq:SmallParam5}, this is even true when the GR corrections involve terms that are quadratic in the curvature and do not form topological invariants (as in the dCS cases). Unlike other GR corrections, such theories do in fact lead to higher-order equations of motion at the linear level in perturbations, when taken at face value. However, in the small-coupling approximation, more precisely, if we assume the strong condition Eq.~\eqref{eq:SmallParam5}, these contributions are suppressed by factors of $\mathcal{O}(\epsilon f_H)$, where $\epsilon f_H\ll1$, and, therefore, they do not appear at leading order. Indeed, it was explicitly shown in \cite{Stein:2010pn} that iteratively solving these higher-derivative equations through order reduction in fact leads again to second-order wave equations for the metric perturbations. All of this then immediately implies that, even for these theories, Theorem~\ref{Theorem1} holds and the tensor null memory takes the form of Eq.~\eqref{NonLinDispMemoryGen2}.

Note that the arguments and conclusions above also naturally encompass metric theories with higher-derivative terms in the action that do not introduce new gravitational fields, as for example systematically explored in \cite{Endlich:2017tqa}. Moreover, the statements can be generalized to theories with nonminimal couplings to other gauge-invariant, Abelian, $k$-form fields. This is because for $k>0$, gauge invariance forces any non-minimal coupling to occur through the field strength, whose background value vanishes when evaluated at null infinity, such that the derivation in \cite{Endlich:2017tqa} is essentially unaltered. The massless and covariant (including local-Lorentz invariant) conditions on the other hand ensure that the leading-order propagation equations reduce to massless wave equations at null infinity. As discussed in Sec.~\ref{sec:ThmRemarks}, even if multiple vector or scalar modes are present in the theory, these leading-order propagation equations typically decouple and the assumptions of Theorem~\ref{Theorem1} are satisfied. Furthermore, just as for the higher-derivative curvature terms, a small-coupling approximation can in principle also be applied to any higher-derivative terms of the additional $k$-form fields that are not tuned to lead to second-order equations of motion. Therefore, it is expected that very generally, Theorem~\ref{Theorem1} can be applied to dynamical metric theory with additional $k$-form fields, which further implies that the tensor null memory is given by  Eq.~\eqref{NonLinDispMemoryGen2}.

In summary, Theorem \ref{Theorem1} suggests that, very generically, for dynamical metric theories of gravity defined in Definition~\ref{DefMetricTheory} in the small-coupling approximation (ensuring a viable EFT expansion), the tensor null memory is of the form of Eq.~\eqref{NonLinDispMemoryGen2}. In other words, null memory is modified in comparison to GR in two ways: (I) through contributions of the energy fluxes to null infinity of additional, massless, dynamical degrees of freedom in the theory; and (II) through modifications in the generation and propagation of the leading-order tensor perturbations. This simple result could have interesting implications, as we will explore in the discussion below. Clearly then, the theorem does not cover metric theories (a) with additional \textit{massive} degrees of freedom or a massive physical metric, such as in massive gravity~\cite{deRham:2014zqa}, (b) with nonlocal effects, (c) with explicit Lorentz breaking in the gravitational sector, such as in Einstein-\AE{}ther theories~\cite{Jacobson:2004ts}, and (d) with nonvanishing torsion or nonmetricity, such as in torsion bigravity~\cite{Hayashi:1979wj,Hayashi:1980av}.



\section{Discussion and Conclusion}\label{sec:Discussion}

Planned space-based detectors, such as LISA~\cite{LISA}, or next-generation ground-based detectors, such as the Einstein Telescope~\cite{Punturo:2010zz} or Cosmic Explorer~\cite{Reitze:2019iox} are expected to provide the first detections of the tensor memory effect in the near future. These observations of gravitational wave memory may play an important role in establishing a better understanding of gravity and constraining modifications of GR. This is because tensor memory is a very special, nonlinear correction to the gravitational wave response. Indeed, the memory's dominant null component can be understood as being sourced by the leading-order wave front itself, allowing for potential consistency checks between the standard oscillatory part of the response and the memory itself. The results of the work we have presented, including the computation of the null memory in the most general, massless, and gauge-invariant SVT theory with second-order equations of motion, together with its subsequent generalizations, is but one of the first steps toward the possible future exploitation of gravitational wave memory as a test of GR.

One of the main discoveries we have presented in this paper is a simple but very generic conclusion [see Eq.~\eqref{NonLinDispMemoryGen2}]: The functional form of the tensor null memory of massless dynamical metric theories of gravity is merely modified from the GR expectation through the presence of additional null fluxes due to other gravitational fields present in the beyond-GR extension. This result could potentially be exploited to develop a largely model-insensitive test of GR of perhaps the most straightforward manifestation of new physics: the existence of additional scalar, vectorial or tensorial propagating degrees of freedom. Since memory is sensitive to any kind of energy-momentum emitted from the source, such a test would not only complement ongoing searches for additional gravitational polarizations \cite{Chatziioannou:2012rf}, but also extend the sensitivity of such tests to scenarios in which additional gravitational fields do not excite any other polarization modes of the physical metric.

Much work remains to be done to explicitly construct such a model-insensitive null-memory test of GR, but let us suggest some general directions that could be explored. One direction for future research is to study whether the oscillatory part of the gravitational wave signal can be separated from the nonoscillatory part. This could be accomplished by exploiting the fact that, by definition of the memory-evolution equation in the Isaacson picture, the characteristic frequency of the oscillatory component of the signal is parametrically separated from the frequency of the null-memory part. Therefore,
Morlet wavelets \cite{Cornish:2014kda} in a time-frequency analysis may be able to separate these two components. Based on such a model-independent analysis, one could imagine developing a consistency check between the extracted oscillatory part of the gravitational wave signal and the nonoscillatory memory contribution. Such a test would require knowledge of the inclination angle and the extraction of both polarizations, which could, for example, be obtained through an electromagnetic counterpart and the detection of the signal with multiple interferometers. On a very general basis, a potential mismatch between the two signals could then be interpreted as the presence of additional propagating (scalar, vector or tensor) degrees of freedom.  

Another possible approach would be to target as well the merger sensitivity of the memory signal directly, by developing a parameterization for the beyond-GR memory effect, akin to the parameterized post-Einsteinian framework used for the oscillatory signal~\cite{Yunes:2009ke}. Such an approach would require the construction of a parametric post-Einsteinian model for the GR deformation to the null memory through simple transition functions, such as those studied in \cite{Yunes:2006mx}. Based on a set of examples of the null memory in specific beyond-GR theories computed through Eq.~\eqref{NonLinDispMemoryGen2}, such a parametric model could then be chosen to ensure that the various specific beyond-GR modifications to the memory can be recovered by the model. This would in turn allow for a Bayesian parameter estimation study of the full signal to attempt to extract the parameters of the post-Einsteinian tensor memory from the signal.  

On the theoretical side, a very interesting future task will be to generalize our study to ordinary memory \cite{Zeldovich:1974gvh,Braginsky:1987gvh,Garfinkle:2022dnm} defined colloquially as the part of the memory that is sourced by fields that do not reach null infinity. The typical example of such a source are unbound components of the matter stress-energy of the system itself, which, due to compact support, are typically assumed to not reach null infinity. In theories beyond GR, however, the possibility of additional massive gravitational fields in the theory represents an additional source of ordinary memory. In particular, considering massive fields would allow for the study of a much richer structure of Ostrogradsky-stable vector field theories \cite{Heisenberg:2014rta}, as well as a higher-derivative massive 2-form interaction \cite{Heisenberg:2019akx}, which can be related to the topological mass generation of BF theories \cite{Almeida:2020lsn}. 

Moreover, we would like to mention, that a definition of the memory-evolution equation as the leading-order low-frequency equation of motion in the Isaacson picture, potentially allows for an even broader study of memory beyond GR. Indeed, in principle, the Isaacson approach could also be applied to the field equations of theories which we explicitly disregarded in this work, such as theories with nondynamical field content or explicit Lorentz breaking. In such a general setting, one could also think about solving the resulting memory-evolution equation not in the vicinity of null infinity, but in a different appropriate limit. 

This last point could potentially be explored for the study of the tensor memory in metric theories with broken diffeomorphism invariance, such as massive gravity theories \cite{deRham:2014zqa}. Such an analysis may, however, turn out to be less straightforward, as the current understanding of memory is heavily based on the structure of the theory around null infinity. In the case of massive gravity, other calculation techniques, such as point-particle source approximations \cite{garfinkle_memory_2017}, might be more appropriate for the computation of its memory contribution. For example, \cite{kilicarslan_graviton_2019} applied such approximations to massive Fierz-Pauli theory, but the analysis in this work may need to be generalized to avoid an unphysical discrepancy with GR in the small mass limit.

Another avenue for future work would be to investigate a possible generalization of the BMS balance laws to a wide class of theories, based on the results obtained in this paper. In light of the close connection between BMS balance laws and tensor null memory established in Appendix~\ref{MatchToAsymptoticsBD} for BD theory, our results strongly suggest that the approach of \cite{hou_gravitational_2021,tahura_brans-dicke_2021,hou_conserved_2021,hou_gravitational_2021_2} for BD gravity may be readily generalized to the asymptotic structure of any dynamical metric theories. It would be interesting to explore this conjecture in detail, especially with regard to the ordinary memory, which we have excluded in our analysis. As concerns scalar or vector null memory, the connection with BMS balance laws is less clear. In our work, have not found any nonzero scalar or vector memory, which parallels the observation that no analogous BMS constraint can be formulated for scalar memory within BD theory \cite{hou_gravitational_2021,tahura_brans-dicke_2021,hou_conserved_2021,hou_gravitational_2021_2}. Recently, however, \cite{Seraj:2021qja} suggested that scalar memory in BD theory could be associated with asymptotic symmetries of a dual description of the scalar sector instead, the implications of which still remain unexplored.

\begin{acknowledgments}
We would like to express our gratitude to Luc Blanchet for pointing out the relation in Eq.~\eqref{eq:RelBlanchet} and sharing his personal notes of the proof thereof. Moreover, we thank Lydia Bieri, David Garfinkle and Henri Inchausp\'{e} for useful discussions on the memory effect, Justin Ripley for helpful comments on gravitational radiation in sGB theories and recent progress in numerical relativity, as well as Fabio D'Ambrosio, Shaun Fell, Francesco Gozzini, David Maibach and Stefan Zentarra for valuable conversations on asymptotic symmetries and BMS flux blanace laws.
LH is supported by funding from the European Research Council (ERC) under the European Unions Horizon
2020 research and innovation programme grant agreement No 801781 and by the Swiss National Science Foundation
grant 179740. LH further acknowledges support from the Deutsche Forschungsgemeinschaft (DFG, German Research Foundation) under Germany's Excellence Strategy EXC 2181/1 - 390900948 (the Heidelberg STRUCTURES Excellence Cluster).
NY acknowledges support from the Simmons Foundation through Award No. 896696. JZ is supported by an ETH Z\"{u}rich Doc.Mobility Fellowship.

\end{acknowledgments}

\printnoidxglossary



\appendix

\section{Transverse Traceless Multipole Expansion}\label{SWSH}

For convenience, we gather in this appendix a collection of definitions and formulas from the unifying review by Thorne \cite{Thorne:1980ru} for different multipole expansions used in this work. In particular, we will touch upon spin-weighted spherical harmonics, pure-spin transverse-traceless tensor harmonics, as well as STF tensors, all tied to the $SO(3)$ rotation group and the irreducible representations thereof. The notation for STF tensors will, however, be adapted to a more contemporary custom (see for instance \cite{poisson2014gravity,Blanchet:2013haa,maggiore2008gravitational}). For more details and derivations, we refer the reader to \cite{Thorne:1980ru}, while part of the treatment can also be found in \cite{maggiore2008gravitational} and in the Appendix of \cite{Nichols:2017rqr}.

\subsection{Spin-weighted spherical harmonics}\label{Ap:SWSH}

Spin-weighted spherical harmonics can be constructed from ordinary spherical harmonics through applications of angular derivative operators
\begin{equation}
 \phantom{}_{\s{s}}Y^{lm}=
\begin{cases}
			\sqrt{\frac{(l-s)!}{(l+s)!}}\eth^s Y^{lm}\,, & l\geq s\geq 0\\
            (-1)^s \sqrt{\frac{(l+s)!}{(l-s)!}}\bar{\eth}^{-s} Y^{lm}\,, & 0> s\geq -l\\
            0\,, & |s|>l\\
\end{cases}
\end{equation}
 where the angular derivative operator $\eth$ and its complex conjugate $\bar{\eth}$ are defined through their action on functions $f_s$ of spin-weight $s$. The application of the operator $\eth$ on a function $f_s$ defines a quantity with spin-weight $s+1$, while $\bar{\eth}$ lowers the spin-weight by one unit.
\begin{align}\label{eth}
	\eth f_s &\equiv -\sin^s\theta\left(\partial_\theta+i\,\csc\theta\right)(f_s \sin^{-s}\theta)\,,\\
	\bar\eth f_s &\equiv -\sin^{-s}\theta(\partial_\theta-i\,\csc\theta)\left(f_s \sin^{s}\theta\right)\,.
\end{align}
They satisfy
\begin{align}
    (-1)^{s+m}\, \phantom{}_{\s{-s}}\bar{Y}^{l-m}(\theta,\phi)&=\phantom{}_{\s s}Y^{lm}(\theta,\phi)\,,\label{CCSWSH}\\
    \int_{S^2}\dd\Omega \,_{\s s}Y^{lm}(\theta,\phi) _{\s s}\bar{Y}^{l'm'}(\theta,\phi)&=\delta_{ll'}\delta_{mm'}\,,\label{OrthogonalitySWSH}\\
    \sqrt{(l-s)(l+s+1)}\,_{\s{s+1}}Y^{lm}&=\eth\,_{\s s}Y^{lm}\,,\label{eq:ASWSHid1}\\
     -\sqrt{(l+s)(l-s+1)}\,_{\s{s-1}}Y^{lm}&=\bar{\eth}\,_{\s s}Y^{lm}\,.
\end{align}
\begin{equation}\label{SWSHTrippleInt}
\begin{split}
    &\int_{S^2}\dd^2\Omega\,_{\s{s_1}}Y^{l_1m_1}\,_{\s{s_2} }Y^{l_2m_2}\,_{\s{s_3}}Y^{l_3m_3}\\
    &\,=\,\sqrt{\frac{\prod_{i=1}^3(2l_i+1)}{4\pi}} \footnotesize{
\begin{pmatrix}
l_1 & l_2 & l_3\\
m_1 & m_2 & m_3
\end{pmatrix}
\begin{pmatrix}
l_1 & l_2 & l_3\\
-s_1 & -s_2 & -s_3
\end{pmatrix}}\,,
\end{split}
\end{equation}
if $s_1+s_2+s_3=0$.

\subsection{Pure spin transverse-traceless tensor harmonics}\label{Ap:TT}

The electric- and magnetic-parity pure-spin TT harmonics can also be constructed from ordinary spherical harmonics through
\begin{align}
    T^{\scriptscriptstyle{E2}\,\scriptstyle{l m
    }}_{ij}&\equiv r^2\sqrt{\frac{2(l-2)!}{(l+2)!}}\,\perp_{ijab}\partial_a\partial_b\,Y^{lm}\,,\\
    T^{\scriptscriptstyle{B2}\,\scriptstyle{l m
    }}_{ij}&\equiv r\sqrt{\frac{2(l-2)!}{(l+2)!}}\,\perp_{ijab}\epsilon_{cd(a}\partial_{b)}r\,n_d\partial_c\,Y^{lm}\,.
\end{align}
where $n_i$ is a unit radial vector and the TT projector $\perp_{ijab}$ is defined in Eq.~\eqref{eq:Projectors}.
They satisfy
\begin{align}
    (-1)^m\bar{T}^{\scriptscriptstyle{P}\,\scriptstyle{l -m
    }}_{ij}(\theta,\phi)&=T^{\scriptscriptstyle{P}\,\scriptstyle{l m
    }}_{ij}(\theta,\phi)\,,\\
    \int_{S^2}\dd\Omega \,T^{\scriptscriptstyle{P}\,\scriptstyle{l m
    }}_{ij}(\theta,\phi)\,T^{\scriptscriptstyle{P'}\,\scriptstyle{l' m'
    }}_{ij}(\theta,\phi)&=\delta_{PP'}\delta_{ll'}\delta_{mm'}\,.
\end{align}

\subsection{STF tensors}\label{Ap:STF}

Denoting a multi-index of $l$ spatial indices collectively as $L\equiv i_1i_2...i_l$, the projection of a tensor component $A_L$ onto its STF part will be referred to as $A_{\langle L\rangle}$. Note that $B_{L}A_{\langle L\rangle}=B_{\langle L\rangle}A_{\langle L\rangle}$, where a sum over $l$ is implicit. Moreover, $n_L$ stands for the tensor product of $l$ radial vectors $n_i$. An expansion in terms of STF tensors $n_{\langle L\rangle}=n_{\langle L\rangle}(\theta,\phi)$ is equivalent to an expansion in spherical harmonics
\begin{align}\label{eq:AYlmToNL}
    n_{\langle L\rangle}&=\frac{4\pi\,l!}{(2l+1)!!}\sum_{m=-l}^l\,\mathcal{Y}_{\langle L\rangle}^{lm}\, Y^{lm}\,,\\
    Y^{lm}&=\mathcal{Y}_L^{lm}\,n_L=\mathcal{Y}_L^{lm}\,n_{\langle L\rangle}\,,
\end{align}
where $\mathcal{Y}_L^{lm}$ are angle-independent, STF tensors connecting the two basis. They satisfy
\begin{align}
    (-1)^m\bar{\mathcal{Y}}^{l-m}_L&=\mathcal{Y}^{lm}_L\,\\
    \bar{\mathcal{Y}}_L^{lm}\mathcal{Y}_L^{lm'}&=\frac{(2l+1)!!}{4\pi\,l!}\,\delta_{mm'}\,.
\end{align}


\subsection{Relations between harmonics}
The pure-spin TT harmonic tensors are related to spin weight $s=\pm 2$ spin-weighted spherical harmonics through
\begin{align}
    T^{\s{E2}\,\s{l m
    }}_{ij}&=\frac{1}{\sqrt{2}}\left(\phantom{}_{\s{-2}}Y^{lm}\,m_im_j+\phantom{}_{\s{2}}Y^{lm}\,\bar{m}_i\bar{m}_j\right)\,,\\
    T^{\s{B2}\,\s{l m
    }}_{ij}&=-\frac{i}{\sqrt{2}}\left(\phantom{}_{\s{-2}}Y^{lm}\,m_im_j-\phantom{}_{\s{2}}Y^{lm}\,\bar{m}_i\bar{m}_j\right)\,,
\end{align}
while the relation to the STF basis is given by
\begin{align}
    T^{\s{E2}\,\s{l m}}_{ij}&=\perp_{ijab}\sqrt{\frac{2(l-1)l}{(l+1)(l+2)}}\,\mathcal{Y}_{abL-2}^{l m}\,n_{L-2}\,,\\
    T^{\s{B2}\,\s{l m}}_{ij}&=\perp_{ijab}\sqrt{\frac{2(l-1)l}{(l+1)(l+2)}}\,\epsilon_{cd(a}\,\mathcal{Y}_{b)dL-2}^{l m}\,n_{cL-2}\,.
\end{align}


\subsection{Expansion of a rank-2, TT tensor}
Using the definitions above, we will now relate the expansions of an arbitrary, rank-2, TT tensor $H_{ij}^\text{TT}(u,r,\theta,\phi)$, in terms of spin-weighted spherical harmonics, pure-spin TT tensor harmonics, and STF tensors. 


\subsubsection{Spin-weighted spherical harmonic expansion}\label{Ap:SWSHExpansion}
Using the transverse null vector of spin-weight $s=-1$
\begin{equation}
    \bar m^i\equiv\frac{1}{\sqrt{2}}(u^i-iv^i)\,,
\end{equation}
we can define the spinweight $s=-2$ scalar quantity
\begin{equation}\label{eq:ADefHs2}
    H\equiv H_{ij}^\text{TT}\, \bar m^i\, \bar m^j=H_{ij}\, \bar m^i\, \bar m^j\,,
\end{equation}
which we expand in terms of spin-weighted spherical harmonics as
\begin{equation}\label{eq:AExpansionHlm}
    H=\sum_{l=2}^\infty\sum_{m=-l}^l\,H^{lm}\, \phantom{}_{\scriptscriptstyle-2}Y^{lm}\,,
\end{equation}
where
\begin{align}
    H^{lm}(u,r)=\int_{S^2}\dd^2\Omega\,\phantom{}_{\scriptscriptstyle-2}\bar{Y}^{lm}\, H\,.
\end{align}
\subsubsection{TT tensor expansion}
The general expansion in terms of TT tensor harmonics reads
\begin{align}
   H_{ij}^\text{TT}=\sum_{l=2}^\infty\sum_{m=-l}^l \left[ U^{l m}\,T^{\scriptscriptstyle{E2}\,\scriptstyle{l m
    }}_{ij}+V^{l m}\,T^{\scriptscriptstyle{B2}\,\scriptstyle{l m
    }}_{ij}\right]\,,
\end{align}
where 
\begin{align}
    U^{l m}(u,r)&=\int_{S^2}\dd^2\Omega \,T^{\s{E2}\,\scriptstyle{l m
    }}_{ij}\,H^\text{TT}_{ij}\,,\label{eq:AUlmCalc}\\
    V^{l m}(u,r)&=\int_{S^2}\dd^2\Omega \,T^{\s{B2}\,\s{l m
    }}_{ij}\,H^\text{TT}_{ij}\,.\label{eq:AVlmCalc}\\
\end{align}
The expansion coefficients $U^{l m}$ and $V^{l m}$ are the electric- and magnetic-parity multipole moments, respectively, also commonly known as mass and current multipole moments. In the literature, it is common to explicitly factor out the $r$ dependence of mass and current multipole moments, which we will not do here. Since $H^\text{TT}_{ij}$ is real, they satisfy
\begin{align}
    \bar{U}^{l m}=(-1)^m U^{l -m}\,,\quad \bar{V}^{l m}=(-1)^m V^{l -m}\,.
\end{align}
\subsubsection{STF expansion}
The corresponding STF multipole expansion is given by
\begin{align}\label{eq:AExpansionULVL}
\begin{split}
   H_{ij}^\text{TT} =4\perp_{ijab}\sum_{l=2}^\infty &\frac{1}{l !}\Big[\,U_{klL-2}\,n_{L-2}\\
   &+\frac{2l}{l+1}\epsilon_{cd(a}\, V_{b)cL-2}\,n_{dL-2}\Big]\,,
\end{split}
\end{align}
where
\begin{align}
    U_{ijL-2}&=\frac{1}{4\pi}\frac{l(l-1)(2l+1)!!}{2(l+1)(l+2)}\int\dd^2\Omega\, n_{L-2}\,H_{ij}^\text{TT} \,,\\
    V_{ijL-2}&=\frac{1}{4\pi}\frac{(l-1)(2l+1)!!}{4(l+2)}\int\dd^2\Omega\, n_{aL-2}\,\epsilon_{iab}H_{bj}^\text{TT} \,.
\end{align}

\subsubsection{Relations between expansion coefficients}
The multipole moments in a spin-weighted basis are related to the electric- and magnetic-parity moments as
\begin{equation}\label{eq:AHlmToUlmVlm}
    H^{lm}=\frac{1}{\sqrt{2}}\left[U^{lm}-iV^{lm}\right]\,,
\end{equation}
which can be inverted to give
\begin{align}
    U^{lm}&=\frac{1}{\sqrt{2}}\left[H^{lm}+(-1)^m \,\bar{H}^{l-m}\right]\,,\label{eq:AUVlmToHlm}\\
    V^{lm}&=\frac{i}{\sqrt{2}}\left[H^{lm}-(-1)^m \,\bar{H}^{l-m}\right] \,.
\end{align}
On the other hand, the relation between a STF and a TT tensor basis is given by
\begin{align}
  U^{l m}&=\frac{16\pi}{(2l+1)!!}\sqrt{\frac{(l+1)(l+2)}{2(l-1)l}}\,\bar{\mathcal{Y}}_{L}^{l m}\,U_L\,,\\
  V^{l m}&=\frac{-32\pi l}{(l+1)(2l+1)!!}\sqrt{\frac{(l+1)(l+2)}{2(l-1)l}}\,\bar{\mathcal{Y}}_{L}^{l m}\,V_L\,,\label{eq:AUVlmToULVL}\\
    U_L&=\frac{l!}{4}\sqrt{\frac{2(l-1)l}{(l+1)(l+2)}}\sum_{m=-l}^l\, \mathcal{Y}_L^{l m}\,U^{l m}\,,\\
    V_L&=\frac{-(l+1)!}{8l}\sqrt{\frac{2(l-1)l}{(l+1)(l+2)}}\sum_{m=-l}^l\, \mathcal{Y}_L^{l m}\,V^{l m}\,.
\end{align}

\section{Proof of Eq.~\eqref{eq:RelBlanchet}}\label{DerivationEq}

In this appendix, we provide a derivation of the relation
\begin{equation}\label{eq:RelBlanchetA}
   \left[\frac{n'_in'_j}{1-\vec{n}'\cdot\vec{n}}\right]^\text{TT}=\sum_{l=2}^\infty \frac{2(2l+1)!!}{(l+2)!}\left[n_{L-2}n'_{\langle ijL-2\rangle}\right]^\text{TT}\,,
\end{equation}
where a superscript TT implies a contraction of free indices with the TT projector with respect to $\vec{n}$, which we defined in Eq.~\eqref{eq:Projectors}. The proof below is based on private communications with Blanchet and can also be found in \cite{BlanchetPaper}.

\begin{proof} First, note that due to the TT projection, the left-hand side of Eq.~\eqref{eq:RelBlanchetA} vanishes for $\vec{n}'=\vec{n}$, while this is also true for the right-hand side. Thus, in the following, we consider the case $\vec{n}'\neq\vec{n}$, which implies $|\vec{n}'\cdot\vec{n}|<1$ such that we can expand the left-hand side in terms of a geometric series
    \begin{align}
        \frac{n'_in'_j}{1-\vec{n}'\cdot\vec{n}}&=n'_in'_j\sum_{l=2}^\infty(\vec{n}'\cdot\vec{n})^{l-2}=\sum_{l=2}^\infty n_{L-2}n'_{ijL-2}\,,
    \end{align}
where we use the notation introduced in Sec.~\ref{Ap:STF}. The remaining task is to rewrite $n'_{ijL-2}$ in terms of its STF part $n'_{\langle ijL-2\rangle}$, by using the formula [see e.g. Eq.~(A21a) in \cite{Blanchet:1985sp}]
    \begin{equation}
        n_L=\sum_{k=0}^{[l/2]}a_k^l\delta_{\{2K}n'_{\langle L-2K\rangle\}}\,,
    \end{equation}
where 
    \begin{equation}
        a_k^l\equiv\frac{(2l-4k+1)!!}{(2l-2k+1)!!}\,,
    \end{equation}
and where $[l/2]$ selects the integer part of $l/2$. Moreover, the operator $\{...\}$ on tensor indices $A_{\{L\}}$ denotes the sum $\sum_{\sigma\in S} A_{i_{\sigma(1)}...i_{\sigma(l)}}$, where $S$ is the smallest set of permutations of $(1...l)$, which makes $A_{\{L\}}$ fully symmetric in $L$. For instance,  \begin{equation}\label{eq:ExampleExpansion}
    \begin{split}
        \delta_{\{ab}n'_{ij\}}=&\,\delta_{ab}n'_{ij}+\delta_{ai}n'_{bj}+\delta_{aj}n'_{bi}+\delta_{bi}n'_{aj}\\
        &+\delta_{bj}n'_{ai}+\delta_{ij}n'_{ab}\,.
    \end{split}
    \end{equation}

With this in hand,
    \begin{align}
        \frac{n'_in'_j}{1-\vec{n}'\cdot\vec{n}}=\sum_{l=2}^\infty n_{L-2}\sum_{k=0}^{[\frac{l-2}{2}]}a_k^l\delta_{\{2K}n'_{\langle ijL-2-2K\rangle\}}\,.
    \end{align}
Notice, however, that due to the TT projection with respect to $\vec{n}$ on the free indices $ij$ and the contraction of the remaining indices with $n_{L-2}$, any term which involves one of the indices $i$ or $j$ within the Kronecker delta, hence terms containing $\delta_{ij}$, $\delta_{ia}$ or $\delta_{ja}$ for any index $a$, will vanish. For example, for the term $l=4$ and $k=1$, written out in Eq.~\eqref{eq:ExampleExpansion}, only the first term will survive. Thus
    \begin{align}
        &\left[\frac{n'_in'_j}{1-\vec{n}'\cdot\vec{n}}\right]^\text{TT}=\sum_{l=2}^\infty n_{L-2}\sum_{k=0}^{[\frac{l-2}{2}]}\left[\delta_{2K}n'_{\langle ij L-2-2K\rangle}\right]^\text{TT}\notag\\
        &\quad\qquad\qquad\qquad\qquad\qquad\qquad\;\;\times a_k^l\,b_k^l\,,\notag\\
        &=\sum_{l=2}^\infty\sum_{k=0}^{[\frac{l-2}{2}]}a_k^l\,b_k^l\left[n_{L-2-2K}n'_{\langle ij L-2-2K\rangle}\right]^\text{TT}\,,
    \end{align}
where 
\begin{equation}
    b_k^l\equiv\frac{(l-2)!}{2^kk!(l-2-2k)!}\,,
\end{equation}
is the number of terms within the sum $\delta_{\{2K}n'_{L-2-2K\}}$.

We can now rearrange the sum over positive integers $l$ and $k$ as
\begin{align}
    \sum_{l=2}^\infty\sum_{k=0}^{[\frac{l-2}{2}]}&=\sum_{l,k}
    [2\leq l\leq \infty] [0\leq k\leq \frac{l-2}{2}]\notag\\
    &=\sum_{l,k}
    [0\leq k\leq \frac{l-2}{2}\leq \infty]\notag\\
    &=\sum_{p,k}
    [0\leq k\leq \frac{p+2k-2}{2}\leq\infty]\notag\\
    &=\sum_{p,k}[2\leq p\leq \infty][0\leq k\leq \infty]\,,
\end{align}
where we defined $p\equiv l-2k$ and, in this context, the angular brackets $[B]$ of a Boolean expression denote Inverson brackets,\footnote{Inverson Brackets are defined as $[B]=\begin{cases}1\,&\text{if $B$ is true}\\0\,,&\text{otherwise}\end{cases}$.} such that
\begin{align}
       \left[\frac{n'_in'_j}{1-\vec{n}'\cdot\vec{n}}\right]^\text{TT}&=\sum_{p=2}^{\infty}\sum_{k=0}^\infty a_k^{p+2k}\,b_k^{p+2k}\left[n_{P-2}n'_{\langle ij P-2\rangle}\right]^\text{TT}\notag\\
        &=\sum_{l=2}^{\infty}\underset{\equiv S_l}{\underbrace{\sum_{k=0}^\infty a_k^{l+2k}\,b_k^{l+2k}}}\left[n_{L-2}n'_{\langle ij L-2\rangle}\right]^\text{TT}\,,
\end{align}
where in the last step we simply relabeled $p\rightarrow l$. The sum over $k$ can indeed be resummed explicitly to give
\begin{align}
    S_l&=\sum_{k=0}^\infty \frac{(2l+1)!!}{(2l+2k+1)!!}\, \frac{(l+2k-2)!}{2^kk!(l-2)!}\notag\\
    &=\frac{2^{2+l}\Gamma(l+\frac{3}{2})}{\sqrt{\pi}\Gamma(l+3)}\notag\\
    &=\frac{2(2l+1)!!}{(l+2)!}\,,
\end{align}
where in the last step we used the identities of gamma functions $\Gamma(l+3)=(l+2)!$ and $\Gamma(l+\frac{3}{2})=\frac{\sqrt{\pi}(2l+1)!!}{2^{1+l}}$. Comparing to Eq.~\eqref{eq:RelBlanchetA}, this concludes the proof.
\end{proof}

\section{Displacement Memory from BMS Balance Laws in Brans-Dicke Gravity}\label{MatchToAsymptoticsBD}

Starting from the action in Eq.~\eqref{ActionBD}, the authors in  \cite{hou_gravitational_2021,tahura_brans-dicke_2021,hou_conserved_2021,hou_gravitational_2021_2} arrive at the BMS supermomentum flux-balance law in spherical coordinates $\{u,r,x^A\}$, $x^A=\{\theta,\phi\}$ [see for instance Eqs.~(10)--(12) in \cite{hou_gravitational_2021_2}, from which we also adopt the notation]
\begin{equation}\label{eq:BMSBalanceLawI}
\begin{split}
   0=\frac{\varphi_0}{4\pi G}&\int_{S^2}\dd^2\Omega\,\alpha \Bigg\{\Delta M+\int_{-\infty}^\infty\dd u'\Bigg[\frac{1}{2}N_{AB}N^{AB}\\
   &+\mathcal{D}_A\mathcal{D}_BN^{AB}+(2\omega+3)\left(\frac{N}{\varphi_0}\right)^2\Bigg]\Bigg\}\,,
\end{split}
\end{equation}
where $\mathcal D_A$ is the covariant derivative on $S^2$, $\alpha=\alpha(\theta,\phi)$ is an arbitrary function on $S^2$ parametrizing supertranslations, and $M$ denotes the Bondi mass aspect, while $\Delta M\equiv M(u\rightarrow\infty)-M(u\rightarrow-\infty)$. Moreover, as in \cite{hou_gravitational_2021_2} we write
\begin{equation}\label{eq:BondiNews}
     N_{AB}\equiv -\dot{\hat{c}}_{AB}\,, \quad N\equiv \dot\varphi_1\,,
\end{equation}
where $\hat{c}_{AB}$ is the symmetric, traceless and transverse shear tensor and $\varphi_1$ the component of the scalar that falls off as $\sim 1/r$.
We additionally expand the shear as
\begin{equation}\label{eq:ShearExpand}
    \hat{c}_{AB}= \bar{c} \,\bar m_A\bar m_B+ c\, m_A m_B\,,
\end{equation}
where in spherical coordinates
\begin{equation}
    m=\frac{1}{\sqrt{2}}\left(\partial_\theta+i\sin\theta\partial_\phi\right)\,.
\end{equation}
Using Eqs.~\eqref{eq:BondiNews} and \eqref{eq:ShearExpand}, as well as the definition of the angular derivative operator [Eq.~\eqref{eth}], which implies that we have $\mathcal{D}_A\mathcal{D}_B \hat{c}^{AB}=\frac{1}{2}\left(\eth^2 c+\bar{\eth}^2\bar{c}\right)$, the flux-balance law in Eq.~\eqref{eq:BMSBalanceLawI} can be rewritten as
\begin{align}\label{BMSSupermomentumBalanceLawBD}
\begin{split}
  \int_{S^2}\dd^2\Omega\,\alpha\, \Delta M=&\frac{1}{4}\int_{-\infty}^\infty\dd u'\int_{S^2}\dd^2\Omega\,\alpha\Bigg[\,|\dot{c}|^2\\
  &-\Re\eth^2\dot{c}+(2\omega+3)\left(\frac{\dot\varphi_1}{\varphi_0}\right)^2\Bigg]\,.
\end{split}
\end{align}

To single out the tensor null memory from the above relation, we can first set the subdominant, left-hand side to zero. This gives rise to the so-called linear or ordinary memory. We then rewrite the BMS supermomentum balance law in Eq.~\eqref{BMSSupermomentumBalanceLawBD} by moving the second to last term to the left, while carrying out the $u'$ integral to obtain
\begin{align}\label{BMSSupermomentumBalanceLawBD2}
\begin{split}
  \int_{S^2}\dd^2\Omega\,\alpha\Re\eth^2\Delta{c}=\int_{-\infty}^\infty&\dd u'\int_{S^2}\dd^2\Omega\,\alpha\Bigg[\,|\dot{c}|^2\\
  &+(2\omega+3)\left(\frac{\dot\varphi_1}{\varphi_0}\right)^2\Bigg]\,.
\end{split}
\end{align}
Furthermore, expanding $c$ with spin-weight $s=-2$ on the left-hand side as
\begin{equation}
   c(u,\theta,\phi)=\sum_{l=2}^\infty \sum_{m=-l}^l c_{lm}(u)\,_{\scriptscriptstyle-2}Y^{lm}(\theta,\phi)\,,
\end{equation}
and using the relation [Eq.~\eqref{eq:ASWSHid1}] (or equivalently integrating $\eth^2$ by parts),
\begin{equation}
    \eth^2\phantom{}_{\scriptscriptstyle-2}Y^{lm}=\sqrt{\frac{(l+2)!}{(l-2)!}}Y^{lm}
\end{equation}
as well as choosing $\alpha(\theta,\phi)=\bar{Y}^{lm}(\theta,\phi)$, we obtain
\begin{align}\label{BMSSupermomentumBalanceLawBD3}
\begin{split}
  \frac{1}{2}\big(\Delta c^{lm}+(-1)^m&\Delta\bar{c}^{l-m}\big)=\sqrt{\frac{(l-2)!}{(l+2)!}}\int_{S^2}\dd^2\Omega\,\bar{Y}^{lm}\\
  &\times \int_{-\infty}^\infty\dd u'\left[\,|\dot{c}|^2+(2\omega+3)\left(\frac{\dot\varphi_1}{\varphi_0}\right)^2\right]\,.
\end{split}
\end{align}
Note that by separating $c^{lm}$ into its electric- and magnetic-parity moments of a spin-2, TT tensor harmonic expansion of the rank-2 TT tensor $c_{ij}=e_i^Ae_j^B \hat c_{AB}$ \eqref{eq:AHlmToUlmVlm}
\begin{equation}\label{eq:clmToUlmVlm}
    c^{lm}=\frac{1}{\sqrt{2}}\left[U_c^{lm}-iV_c^{lm}\right]\,,
\end{equation}
the left-hand side in Eq.~\eqref{BMSSupermomentumBalanceLawBD3} precisely corresponds to the electric-parity part [Eq.~\eqref{eq:AUVlmToHlm}]. Therefore, finally we arrive at
\begin{align}\label{BMSSupermomentumBalanceLawBD4}
\begin{split}
  \Delta U_c^{lm}=&\sqrt{\frac{2(l-2)!}{(l+2)!}}\int_{S^2}\dd^2\Omega\,\bar{Y}^{lm}\\
  &\times \int_{-\infty}^\infty\dd u'\left[\,|\dot{c}|^2+(2\omega+3)\left(\frac{\dot\varphi_1}{\varphi_0}\right)^2\right]\,.
\end{split}
\end{align}

From the balance laws, we can therefore single out the total tensor displacement memory, and hence, the lasting nonzero component within the electric-parity multipole of the shear $c$, which ultimately induces a lasting offset in the detector strain. Note, however, that the shear and the scalar field which enter the balance laws are the total shear and scalar field at $\mathcal{O}(r^{-1})$ within the full nonlinear theory, and therefore, in particular, they already contain all possible memory contributions.\footnote{However, the generalized Bondi News, which enters the right-hand side of the equation, vanishes as $u\rightarrow \pm\infty$, where by assumption no gravitational waves reach null infinity.} In order to connect this result with the computation in Eq.~\eqref{sourceTensorMemoryBD} within the setup of the present work, we should therefore also distinguish between a high- and low-frequency part of the shear and the scalar
\begin{equation}\label{BMSSupermomentumBalanceLawBD5pre}
    c=c^L+c^H\,,\quad \varphi_1= \varphi_1^L+\varphi_1^H\,,
\end{equation}
and slightly reinterpret the flux-balance law in order to use it as a tool to compute the low-frequency displacement memory characterized by the measurable monotonically increasing and nonoscillatory, time-dependent raise of the memory, which is what gravitational wave detectors are sensitive to. We therefore only gradually integrate over retarded time, while extracting the low-frequency part of the expression by averaging out the small scales. Note that in order to compute the full memory, such an averaging is irrelevant.

After averaging, any cross terms of the form ``$c^Lc^H$'' or ``$\varphi_1^L\varphi_1^H$'' on the right-hand side in Eq.~\eqref{BMSSupermomentumBalanceLawBD4} will vanish. Moreover, we assume that we can neglect any contribution of low-frequency components ``$c^Lc^L$'' or ``$\varphi_1^L\varphi_1^L$'' which can be interpreted as the 
``memory of the memory''. In other words, we assume that the source modes for the memory themselves have a negligible memory component, which is indeed a reasonable assumption \cite{Talbot:2018sgr}. Furthermore imposing $c(u\rightarrow-\infty)=0$, we therefore have
\begin{align}\label{BMSSupermomentumBalanceLawBD5}
  c_L^{lm}(u)=&\sqrt{\frac{(l-2)!}{(l+2)!}}\int_{S^2}\dd^2\Omega\,\bar{Y}^{lm}\\
  &\times\int_{-\infty}^u\dd u'\Bigg\langle\,|\dot{c}^H|^2+(2\omega+3)\left(\frac{\dot\varphi^H_1}{\varphi_0}\right)^2\Bigg\rangle\,,\notag
\end{align}
where $c_L^{lm}(u)$ is the resulting low-frequency correction to the shear, given high-frequency radiation modes $c^H$ and $\varphi_1^H$. Here we have used Eq.~\eqref{eq:clmToUlmVlm} with $\delta V_c^{lm}=0$ to rewrite Eq.~\eqref{BMSSupermomentumBalanceLawBD5pre} in terms of the shear.

The reason why such a reinterpretation of the balance law is useful in practice, is because typical numerical relativity waveforms do not capture any memory component due to various technical reasons (see e.g. \cite{Favata:2008yd}). Only recently were people able to compute memory in numerical relativity based on Cauchy-characteristic extraction \cite{Mitman:2020pbt}.

As a last step before finally being able to compare results, we need to connect the perturbative shear and scalar field defined here with the perturbations used in the main text and ensure that these are indeed the same quantities. In the case of BD theory, the easiest way to establish this correspondence is to compare the corresponding leading $\mathcal{O}\left(r^{-1}\right)$ terms of the electric part of the Riemann tensor. In \cite{hou_gravitational_2021,tahura_brans-dicke_2021}, these terms were computed and found to be [see e.g. Eq.~(2.44) in \cite{hou_gravitational_2021}]\footnote{Note, however, that the authors in \cite{hou_gravitational_2021,tahura_brans-dicke_2021} report the result in an orthonormal tetrad basis, instead of the spherical coordinates employed here.}
\begin{equation}
    R_{uAuB}=-\frac{1}{2r}\left(\ddot{\hat{c}}_{AB}-\gamma_{AB}\frac{\ddot{\varphi}_1}{\varphi_0}\right)+\mathcal{O}\left(\frac{1}{r^2}\right)\,.
\end{equation}
By using the embedding of the unit $S^2$ basis $e^{i}_A=\frac{\partial n^i}{\partial x^A}$, such that $\gamma_{AB}=\delta^{ij}e^{i}_Ae^{j}_B=2m_{(A}\bar{m}_{B)}$, we can convert the leading-order expression to a $\{t,x,y,z\}$ Minkowski basis, which yields
\begin{align}\label{RiemannAsymptotic}
    R_{0i0j}&=e_{i}^Ae_{j}^BR_{uAuB}=-\frac{e_{i}^Ae_{j}^B}{2r}\left(\ddot{\hat{c}}_{AB}-\gamma_{AB}\frac{\ddot{\varphi}_1}{\varphi_0}\right)\notag\\
    &=-\frac{1}{2r}\bigg(\underset{= \frac{1}{2}e^+_{ij}\,(\ddot{c}+\ddot{\bar{c}})+\frac{i}{2}e^+_{ij}\,(\ddot{c}-\ddot{\bar{c}})}{\underbrace{m_im_j\,\ddot{c}+\bar m_i\bar m_j\,\ddot{\bar{c}}}}-(\delta_{ij}-n_in_j)\frac{\ddot{\varphi}_1}{\varphi_0}\bigg)\notag\\
    &=-\frac{1}{2r}\left(e^+_{ij}\,\ddot c_++e^\times_{ij}\,\ddot c_\times-(\delta_{ij}-n_in_j)\frac{\ddot{\varphi}_1}{\varphi_0}\right)\,,\notag
\end{align}
where we used that
$e_{i}^Ae_{j}^B\, \gamma_{AB}=2m_{(i}\bar{m}_{j)}=\delta_{ij}-n_in_j$, $e_{i}^Am_A=m_i$ and we defined $ c_+\equiv \Re c$ and $c_\times \equiv -\Im c $.
Hence, comparing to Eq.~\eqref{ElectricRiemannLin} with $\sigma=1/\varphi_0$ we obtain the correspondence
\begin{align}
    c^H_+(u,\Omega)&=\lim_{r\rightarrow\infty}rh_+(u,r,\Omega)\,,\\
    c^H_\times(u,\Omega)&=\lim_{r\rightarrow\infty}rh_\times(u,r,\Omega)\,,\\
   \varphi^H_1(u,\Omega)&=\lim_{r\rightarrow\infty}r\varphi(u,r,\Omega)\,,
\end{align}
while therefore as well
\begin{equation}
   c_L^{lm}=\lim_{r\rightarrow\infty}r\delta h^{lm}\,,
\end{equation}
such that Eq.~\eqref{BMSSupermomentumBalanceLawBD5} indeed corresponds to the result in Eq.~\eqref{NonLinDispMemoryModesBD}.


\section{Tensor Null Memory in a Gauge-Invariant $k$-Form Metric Theory}\label{app:ExampleNullMemoryKForm}

In order to exemplify how generic the assumption given by Eq.~\eqref{eq:EMTAsP} is, we present in this appendix the general form of the tensor null memory of a particular, but fairly general class of additional gravitational fields. Namely, we consider an arbitrary number of additional dynamical $k$-form connection fields that we schematically denote as $\Psi$, with an associated, Abelian, gauge symmetry for $k>1$ (for $k=0$, the field simply corresponds to a scalar field). The action in Eq.~\eqref{eq:ActionMetricTheory} is thus constructed out of curvature invariants of the metric, as well as field strengths $\mathscr{F}\equiv d\Psi$ that are invariant under Abelian gauge transformations
\begin{equation}
    \Psi\rightarrow \Psi + d\Lambda\,,
\end{equation}
where $\Lambda$ are arbitrary $(k-1)$-forms and $d$ is the exterior derivative. These dynamical $k$-form fields are assumed to describe $N$ additional propagating gravitational degrees of freedom.
Note that in a local chart-induced basis, a $k$-form field reads
\begin{equation}
\begin{split}
    \Psi&=\frac{1}{k!}\Psi_{\mu_1\mu_2...\mu_k}\,dx^{\mu_1}\wedge dx^{\mu_2}\wedge...\wedge dx^{\mu_k}\,,\\
    &\equiv\frac{1}{k!}\Psi_{K}\,\dd X^{K}\,.
\end{split}
\end{equation}

The assumptions of Theorem~\ref{Theorem1} are now natural. In particular, the leading-order propagation equations [Eq.~\eqref{eq:EOMIISPsiTh} of the theory reduce to a set of decoupled wave equations for the leading-order waves $\Psi^H_K$ upon imposing the analog of the Lorentz gauge 
\begin{equation}\label{eq:ThLorenzGaugeKForm}
\partial^{\mu_i}\hat{\Psi}_K^H=0\,,
\end{equation}
on each of the field perturbations. This is because, by assumption, the theory is covariant, locally Lorentz invariant and massless. Moreover, any term in the action that would lead to higher-order derivatives in the equations of motion of the leading-order waves does not contribute to leading order in the small-coupling approximation [Eq.~\eqref{eq:SmallParam4}]. An additional assumption here is that, in the limit to null infinity, there is no coupling between different fields at the level of the leading-order, high-frequency equation.

Following the proof of Theorem~\ref{Theorem1}, the flat, effective, second-order action of Eq.~\eqref{ProofActionGenn2nd} can thus be written as
\begin{align}\label{ProofActionGenn2ndKForm}
\begin{split}
    \phantom{}_{\s(2)}S^{\s{flat}}_\text{eff}=&\frac{-1}{2\kappa_\text{eff}}\int\dd^4x\,\sqrt{-\eta}\bigg[\hat{h}^{\mu\nu}_H\mathcal{E}^{\alpha\beta}_{\mu\nu}\hat{h}^H_{\alpha\beta} \\
    &+\sum_{\Psi}\frac{1}{2q}\eta^{\mu_1\nu_1}...\eta^{\mu_{q}\nu_{q}}\hat{\mathscr{F}}^H_{\mu_1...\mu_{q}}\hat{\mathscr{F}}^H_{\nu_1...\nu_{q}}\bigg]\,,
\end{split}
\end{align}
where $q\equiv k+1$, $\kappa_\text{eff}=\kappa_0A(\eta_0)$ and $\hat{\mathscr{F}}^H=d\hat{\Psi}^H$ are the field strength of the canonically normalized $k$-form perturbations $\hat{\Psi}^H$. The first term in Eq.~\eqref{ProofActionGenn2ndKForm} is again the usual Fierz-Pauli operator with indices contracted with the independent background metric $\eta_{\mu\nu}$, while the second term corresponds to the sum of kinetic terms of all the additional gravitational, gauge-invariant, $k$-form field perturbations.

With this explicit form of the flat effective action at hand, we can now explicitly compute the resulting effective stress-energy tensor for the additional leading-order waves (see also \cite{Navarro:2012hv} for a study of the properties of the energy-momentum tensor of $k$-forms)
    \begin{align}
    \begin{split}
        \phantom{}_{\s(2)}t_{\mu\nu}^{\hat{\Psi}}=\frac{1}{2\kappa_\text{eff}}\sum_{\Psi}\bigg\langle&\hat{\mathscr{F}}^H_{\mu K}\hat{\mathscr{F}}_{\nu}^{HK}-\frac{1}{2q}\eta_{\mu\nu}\hat{\mathscr{F}}^H_{Q}\hat{\mathscr{F}}^{HQ}\bigg\rangle\,.
     \end{split}
    \end{align}
Imposing the equations of motion [Eq.~\eqref{eq:ThWaveEq}], as well as the Lorenz gauge [Eq.~\eqref{eq:ThLorenzGaugeKForm}], while recalling that the averaging allows for integrations by parts, the only surviving terms are
    \begin{align}\label{eq:IndexStP}
        \phantom{}_{\s(2)}t_{\mu\nu}^{\hat{\Psi}}=\frac{1}{2\kappa_\text{eff}}\sum_{\Psi}\bigg\langle\partial_\mu\hat{\Psi}^H_K\partial_\nu\hat{\Psi}^{KK}\bigg\rangle\,.
    \end{align}
Including the Fierz-Pauli result, the total energy-momentum tensor therefore reads
\begin{align}\label{eq:EMTotKForm}
        \phantom{}_{\s(2)}t_{\mu\nu}=\frac{1}{2\kappa_\text{eff}}\bigg\langle&\frac{1}{2}\partial_\mu\hat{h}^H_{\alpha\beta}\partial_\nu\hat{h}^{H\alpha\beta}+\sum_{\Psi}\partial_\mu\hat{\Psi}^H_K\partial_\nu\hat{\Psi}^{HK}\bigg\rangle\,.
    \end{align}
This (pseudo)energy-momentum tensor is indeed conserved (and traceless), as well as gauge invariant (both under linearized diffeomorphisms and gauge transformations of the $k$-form fields, upon imposing a Lorenz gauge).

The above also means that the (pseudo)energy-momentum tensor can be written in terms of the radiative modes of each field, which are the solutions to the wave equations that we can expand in terms of a polarization basis. For the metric perturbation, we have the usual TT tensor modes of the physical metric
    \begin{equation}
        \hat{h}^{H\text{TT}}_{\mu\nu}=h^{H\text{TT}}_{\mu\nu}=\epsilon^+_{\mu\nu} \,\hat h_++\epsilon^\times_{\mu\nu} \,\hat h_\times\,,
    \end{equation}
where
    \begin{equation}
        \epsilon^+_{\mu\nu} \epsilon^{+\mu\nu}=\epsilon^\times_{\mu\nu} \epsilon^{\times\mu\nu}=2\,,\quad\epsilon^+_{\mu\nu} \epsilon^{\times\mu\nu}=0\,.
    \end{equation}
For the additional gravitational fields, $\Psi$, by assumption they describe $N$ additional radiative modes $\psi_\lambda$, where $\lambda=1,..,N$. These can be written in some polarization basis as\footnote{A practical way of completely fixing the gauge is the use of spacetime light-cone coordinates, as exemplified for instance in \cite{Heisenberg:2019akx} for 2-forms. These polarizations should, however, not be confused with the gravitational polarizations of the physical metric defined in Sec.~\ref{GravPol}.} 
    \begin{equation}
        \hat{\Psi}^p_K=\sum_{\text{P}}\epsilon^\text{P}_K \,\hat\psi_\text{P}\,,\quad\text{where}\quad
        \epsilon^\text{P}_K \epsilon^{\text{P}'K}=\delta^{\text{PP}'}\,.
    \end{equation}
    
These solutions to the wave equation in the limit to null infinity take the form (cf. Eq.~\eqref{eq:OutgoingPlaneWave})
    \begin{align}\label{eq:OutgoingPlaneWaveKForm}
    &\left(\hat{h}_+\,,\;\hat{h}_\times\,,\;\hat{\psi}_\lambda\right) \sim \frac{1}{r} \left[f_+(u,\theta,\phi),f_\times(u,\theta,\phi), f_\lambda(u,\theta,\phi) \right]\,.
    \end{align}
    for some real functions $f_{+,\times,\lambda}$. Together with the form of the energy-momentum tensor Eq.~\eqref{eq:EMTotKForm} this behavior near null infinity implies that the radiative energy-momentum tensor can indeed be written as (see Sec.~\ref{sec:DerTensorMemorySVT})
    \begin{equation}\label{eq:StressEnergyGen}
        \phantom{}_{\s(2)}t_{\mu\nu}=\frac{1}{2\kappa_\text{eff}}\bigg\langle | \dot{\hat{h}}_+|^2+| \dot{\hat{h}}_\times|^2+\sum_{\lambda=1}^N|\dot{\hat{\psi}}_\lambda|^2\bigg\rangle \,\ell_\mu \ell_\nu\,,
    \end{equation}
    where $\ell_\mu=-\nabla_\mu t+\nabla_\mu r$ and the falloffs of Eq.~\eqref{eq:OutgoingPlaneWaveKForm} impose the scaling $t_{\mu\nu}\sim r^{-2}$.

    Following Sec.~\ref{sec:RelToNullMemTh}, the tensor null-memory formula therefore takes the form
\begin{equation}\label{NonLinDispMemoryGen2App}
    \begin{split}
     &\delta h_H^{l m}(u,r)=\frac{1}{r} \sqrt{\frac{(l-2)!}{(l+2)!}}\int_{S^2} \dd^2 \Omega'\,\bar{Y}^{l m}(\Omega') \\
     &\quad\times\,\int_{-\infty}^{u}\dd u'\,r^2\bigg\langle | \dot{\hat{h}}_+|^2+| \dot{\hat{h}}_\times|^2+\sum_{\lambda=1}^N|\dot{\hat{\psi}}_\lambda|^2\bigg\rangle\,.
    \end{split}
\end{equation}
Note that the $\hat{\psi}$ here are the canonically normalized modes. The coupling constants of a specific theory will then enter the memory formula by transforming back to the physical modes.

The example presented here could be readily generalized by the inclusion of non-Abelian, 1-form gauge fields with a simple and compact but otherwise arbitrary gauge group and field strength $d\mathrm{F}=d\Psi+\Psi\wedge \Psi$. Such theories are a natural generalization of the SVT theory considered in this work, but would not change the result significantly, as the background solution requires a vanishing vector field in this case. Moreover, just as for massless Abelian fields, no higher-order self-interaction terms exist in $d=4$, which would still lead to second-order equations of motion \cite{Deffayet:2010zh}. Similarly, in the same way as Abelian 1-forms can be generalized to non-Abelian gauge groups, $k$-forms have non-Abelian generalizations, which typically require the use of gerbes (see e.g. \cite{Strobl:2016aph,Breen:2001ie}). We conjecture, that also in such a general framework an adapted version of the memory equation [Eq.~\eqref{NonLinDispMemoryGen2}] should hold.



\newpage
\bibliographystyle{utcaps}
\bibliography{references}

\clearpage

\end{document}